\DeclareMathAlphabet{\pazocal}{OMS}{zplm}{m}{n}
\newcommand{\greekbf}[1]{\text{\boldmath $#1$}}
\newcommand{\z}{\mathbf{z}}
\newcommand{\E}{\mathbb{E}}
\DeclareMathOperator*{\argmin}{arg\,min}
\DeclareMathOperator*{\argmax}{arg\,max}
\renewcommand{\Pr}[1]{\ensuremath{\mathsf{Pr}\left[#1\right]}\xspace}
\newcommand{\sqrtee}{e^{\epsilon/2}}
\newcommand{\mypara}[1]{\vspace*{0.06in}\noindent\textbf{#1 }}
\newcommand{\Domain}{\mathcal{D}\xspace}
\newcommand{\Codomain}{\Tilde{\mathcal{D}}\xspace}
\newcommand{\rdv}{\tilde{v}}
\newcommand{\rdvv}{\tilde{\mathbf{v}}}
\newcommand{\vv}{\mathbf{v}}
\newcommand{\RV}{\tilde{V}}
\newcommand{\estx}{\hat{\mathbf{x}}}
\newcommand{\truex}{\mathbf{x}}
\newcommand{\noisy}{\Tilde{\mathbf{v}}}
\newcommand{\originCDF}{\mathbf{P}}
\newcommand{\M}{\mathbf{M}}
\newcommand{\x}{\mathbf{x}}
\newcommand{\w}{\mathbf{w}}
\newcommand{\y}{\mathbf{y}}
\newcommand{\iter}[1]{\ensuremath{^{(#1)}}\xspace}
\newcommand{\vmu}{\greekbf{\mu}}
\newcommand{\vnu}{\greekbf{\nu}}
\newcommand{\veta}{\greekbf{\eta}}
\newcommand{\perturb}{\ensuremath{{\Psi}}\xspace}
\newcommand{\olh}{\ensuremath{\mathsf{OLH}}\xspace}
\newcommand{\grr}{\ensuremath{\mathsf{GRR}}\xspace}
\newcommand{\SR}{\ensuremath{\mathsf{SR}}\xspace}
\newcommand{\PM}{\ensuremath{\mathsf{PM}}\xspace}
\newcommand{\SW}{\ensuremath{\mathsf{SW}}\xspace}
\newcommand{\GW}{\ensuremath{\mathsf{GW}}\xspace}
\newcommand{\fo}{\ensuremath{\mathsf{FO}}\xspace}
\newcommand{\CFO}{\ensuremath{\mathsf{CFO}}\xspace}
\newcommand{\tuple}[1]{\ensuremath{\langle #1 \rangle}}
\title{Estimating Numerical Distributions under Local Differential Privacy}
\begin{document}
\author{Zitao Li$^1$, Tianhao Wang$^1$, Milan Lopuhaä-Zwakenberg$^2$, }
\author{Boris Skoric$^2$, Ninghui Li$^1$}
\affiliation{
  \institution{$^1$Purdue University, $^2$Eindhoven University of Technology}
}
\email{
	{ li2490, wang2842, ninghui}@purdue.edu, 
	{m.a.lopuhaa, b.skoric}@tue.nl
}

\begin{CCSXML}
<ccs2012>
<concept>
<concept_id>10002978.10002991.10002995</concept_id>
<concept_desc>Security and privacy~Privacy-preserving protocols</concept_desc>
<concept_significance>500</concept_significance>
</concept>
</ccs2012>
\end{CCSXML}

\ccsdesc[500]{Security and privacy~Privacy-preserving protocols}

\keywords{local differential privacy, density estimation}

\begin{abstract}
When collecting information, local differential privacy (LDP) relieves the concern of privacy leakage from users' perspective, as user's private information is randomized before sent to the aggregator.  
We study the problem of recovering the distribution over a numerical domain while satisfying LDP. 
While one can discretize a numerical domain and then apply the protocols developed for categorical domains, 
we show that taking advantage of the numerical nature of the domain results in better trade-off of privacy and utility. 
We introduce a new reporting mechanism, called the square wave (\SW) mechanism, which exploits the numerical nature in reporting.  
We also develop an Expectation Maximization with Smoothing (EMS) algorithm, which is applied to aggregated histograms from the \SW mechanism to estimate the original distributions.  Extensive experiments demonstrate that our proposed approach, \SW with EMS, consistently outperforms other methods in a variety of utility metrics.
\end{abstract}

\sloppypar

\maketitle

\section{Introduction}

Differential privacy~\cite{tcc:DworkMNS06} has been accepted as the \textit{de facto} standard for data privacy.  
Recently, techniques for satisfying differential privacy (DP) in the local setting, which we call {LDP}, have been studied and deployed.  
In the local setting for DP, there are many \emph{users} and one \emph{aggregator}.  
Each user sends randomized information to the aggregator, who attempts to infer the data distribution based on users' reports.  
LDP techniques enable the gathering of statistics while preserving privacy of every user, without relying on trust in a single trusted third party.  
LDP techniques have been deployed by companies like
Apple~\cite{url:apple}, Google~\cite{ccs:ErlingssonPK14}, and Microsoft~\cite{nips:DingKY17}.

Most existing work on LDP focuses on the situations where the attributes that one wants to collect are categorical.  
Existing research~\cite{ccs:ErlingssonPK14,stoc:BassilyS15,uss:WangBLJ17,arXiv:AcharyaSZ18,tiot:YeB18} has developed frequency oracle (\fo) protocols for categorical domains, where the aggregator can estimate the frequency of any chosen value in the specified domain (fraction of users with that private value). We call these Categorical Frequency Oracle (\CFO) protocols.

Many attributes are ordinal or numerical in nature, e.g., income, age, the amount of time viewing a certain page, the amount of communications, the number of times performing a certain actions, etc.  A numerical domain consists of values that have a meaningful total order.
One natural approach for dealing with ordinal and numerical attributes under LDP is to first apply binning and then use \CFO protocols.  That is, one treats all values in a range as one categorical value when reporting.  This approach faces the challenge of finding the optimal number of bins, which depends on both the privacy parameter and the data distribution.  One improvement over this approach is to apply Hierarchical Histogram-based approaches~\cite{pvldb:HayRMS10,pvldb:QardajiYL13,icde:XiaoWG10}, which uses multiple granularities at the same time, and exploit the natural consistency relationships between estimations at different granularities.  Recently, Kulkarni et al.~\cite{pvldb:KulkarniCD18} studied the accuracy of answering range queries using this approach.

We note that the stronger privacy guarantee offered by LDP (as compared with DP) comes with the cost of significantly higher noises.  As a result, many estimated frequencies will be negative.  Existing approaches (such as~\cite{pvldb:KulkarniCD18}) do not correct this, and are sub-optimal.  We propose to apply Alternating Direction Method of Multipliers (ADMM) optimization~\cite{book:boyd11distributed} to improve Hierarchical Histograms, utilizing the constraints that all estimations are non-negative and sum up to 1.  Experiments show that the improved version of hierarchy histogram, which we call HH-ADMM, has significantly better utility. 

The above methods still use \CFO protocols in a blackbox fashion, and existing \CFO protocols ignore any semantic relationship between different values.  An intriguing research question is whether one can design frequency oracle protocols that directly utilize the ordered nature of the domain and produce better estimations.  In this paper, we answer this affirmatively.  We propose an approach that combines what we call a Square Wave reporting mechanism with post-processing using Expectation Maximization and Smoothing.  

The key intuition under the Square Wave mechanism is that given input $v$, one should report a value close to $v$ with higher probability than a value farther away from $v$.  More specifically, assuming the input domain of numerical values is $\Domain=[0,1]$, the output domain of Square Wave mechanism is $\Codomain=[-b,1+b]$, where $b$ is a parameter depending on the privacy parameter $\epsilon$.  
A user with value $v \in \Domain$ reports a value $\rdv$ randomly drawn from a distribution with probability density function $\M_v$. 
For any $\rdv \in [v-b, v+b]$, probability density is $\M_v(\rdv) = p$, and any $\rdv \in [-b, 1+b] \setminus [v-b, v+b]$, probability density is $\M_v(\rdv) = q$, where $\frac{p}{q}=e^\epsilon$.  
We define and studied different wave shapes of General Wave mechanism other than the above Square Wave, and concluded that Square Wave has the best utility. 
We also studied how to determine the key parameter $b$, the width of the wave.  
We propose to choose $b$ to maximize the upper bound of mutual information between the input and the output variable, and can compute $b$ when given the privacy parameter $\epsilon$.  Experiments demonstrate the effectiveness of this approach. 

Conceptually, the aggregator, after observing the reported values, without any prior knowledge of the input distribution, should perform Maximum Likelihood Estimation (MLE) to infer the input distribution, which can be carried out by the Expectation Maximization (EM) algorithm.  Through experiments, we have observed that the result of applying EM is highly sensitive to the parameter controlling terminating condition.  This is because the observed distribution is a combination of the true distribution and the effect of random noise.  When EM terminates too early, the result does not fit the true distribution well.  When EM terminates too late, the result fits both the true distribution and the effect of noises.  It is unclear how one can set the parameter so that one fits the distribution, but not the noise, across different datasets and privacy parameters. 

To deal with this challenge, we propose to use smoothing together with the EM algorithm.  In each iteration, after the E step and the M step, we add an S (smoothing) step, which averages each estimation with its nearest neighbours, by binomial coefficients.  The Expectation Maximization with Smoothing approach was developed in the context of positron emission tomography and image reconstruction~\cite{nychka1990some, silverman1990smoothed}, and was shown to be equivalent to adding a regularization term penalizing the spiky estimation~\cite{nychka1990some}.  Intuitively, EMS uses the prior knowledge that the observation is affected by noise and prefer a smoother distribution to a jagged one.  In the experiment, we observe that EMS is stable under different settings, and requires no parameter tuning.

To compare different algorithms for reconstructing distributions of numerical attributes, we propose to use a number of metrics.  We use two metrics measuring the distance of reconstructed cumulative distribution from the true one, namely the Wasserstein distance and Kolmogorov–Smirnov distance (KS distance).  In addition, we also consider accuracy for answering range queries, and accuracy of estimations of different statistics from the reconstructed distributions such as mean, variance and quantiles.

The contributions of this paper are as follows.
\begin{enumerate}
    \item We define the problem of reconstructing distributions of numerical attributes under LDP (with non-negativity and sum-up-to-1 constraints) and propose multiple metrics for comparing competing algorithms. 
    \item We introduce HH-ADMM, which improves upon existing hierarchy histogram based methods.
    \item We introduce the method of combining Square Wave (\SW) reporting with Expectation Maximization and Smoothing (EMS), and showed that Square Wave is preferable to other wave shapes, and introduced techniques to choose the bandwidth parameter $b$ using mutual information. 
    \item We conduct extensive experimental evaluations, comparing the proposed methods with state-of-the-art methods (e.g.,~\cite{pvldb:KulkarniCD18}).  Results demonstrate that \SW with EMS and HH-ADAM significantly out-perform existing methods.  In addition, \SW with EMS generally performs the best under a wide range of metrics, and HH-ADMM performs better than \SW-EMS on a very spiking distribution under some of the metrics.
\end{enumerate}

\mypara{Roadmap.}
In Section~\ref{sec:background}, we review the LDP definition and existing LDP protocols. 
In Section~\ref{sec:problemdef}, we discuss metrics for measuring the quality of the reconstructed distribution.
We describe \CFO with binning and HH-ADMM in Section~\ref{sec:adapted_method}. 
\SW reporting and EMS reconstruction are introduced in Section~\ref{sec:method}. 
We show our experimental results in Section~\ref{sec:experiments}, discuss the related work in Section~\ref{sec:related}, and conclude in Section~\ref{sec:conc}.

 \section{Background}
\label{sec:background}

Assume there are $n$ \emph{users} and one \emph{aggregator}.  
Each user possesses a value ${v} \in \Domain$, and the aggregator wants to learn the distribution of values from all users. 
To protect privacy, each user randomizes the input value $v$ using an algorithm $\perturb(\cdot) : \Domain \rightarrow \Codomain$, where $\Codomain$ is the set of all possible outputs, and sends $\tilde{v}=\perturb(v)$ to the aggregator.

\begin{definition}[$\epsilon$-Local Differential Privacy] \label{def:dlp}
	An algorithm $\perturb(\cdot) : \Domain \rightarrow \Codomain$ satisfies $\epsilon$-local differential privacy ($\epsilon$-LDP), where $\epsilon \geq 0$,
	if and only if for any input $v_1,v_2 \in \Domain$, we have
	\begin{equation*}
	\forall{T\subseteq\! \Codomain}:\; \Pr{\perturb(v_1)\in T} \leq e^{\epsilon}\, \Pr{\perturb(v_2)\in T},\end{equation*}
	where $\mathit{Range}(\perturb)$ denotes the set of all possible outputs of $\perturb$.
\end{definition}

Since a user never reveals $v$ to the aggregator and reports only $\rdv = \perturb(v)$, the user's privacy is still protected even if the aggregator is malicious.

\mypara{Notational Conventions.}
Throughout the paper, we use bold letters to denote vectors.  
For example, $\mathbf{v} = \tuple{v_1, \ldots, v_n}$ is all users' values, and $\mathbf{x} = \tuple{x_1, \ldots, x_d}$ is frequencies of all values (i.e., $x_i = |\{j\mid v_j = i\}| / n$).
If the notation is associated with a tilde (e.g., $\noisy$), it is the value after LDP perturbation; and a hat (e.g., $\estx$) denotes the value computed by the aggregator.  Capital bold letters denote matrices and functions that take more than one input. 
Table~\ref{tbl:notations} gives some of the frequently used symbols.
\begin{table}
\begin{center}
\resizebox{0.48\textwidth}{!}{\begin{tabular}{||c|c||}
        \hline
        Symbol & Description \\
        \hline
         $v$ & Private input\\
         $\rdv$ & Randomized output \\
         \hline
         $\Domain$ & Domain of private input \\
         $\Codomain$ & Domain of the randomized output \\
\hline
         $\x$ & True private input frequencies \\
         $\estx$ & Estimate of private input frequencies (normalized) \\
         $\noisy$ & Randomized output frequencies (normalized)\\
         \hline
         $\originCDF$ & Cumulative distribution function (CDF) \\
         $\M_v$ &  Probability density function given input $v$ \\
         \hline
    \end{tabular}
    }
    \caption{Notations.}
    \label{tbl:notations}
\end{center}
\vspace{-1cm}
\end{table}

\subsection{Categorical Frequency Oracles}
\label{subsec:fo}

A \textit{frequency oracle ($\fo$)} protocol enables the estimation of the frequency of any value $v \in \Domain$ under LDP.
Existing protocols are designed for situations where $\Domain$ is a categorical domain.  We call them \emph{categorical frequency oracle} (\CFO) protocols in this paper.
The following are two commonly used \CFO protocols.

\mypara{Generalized Randomized Response (\grr).}
This \CFO protocol generalizes the \emph{randomized response} technique~\cite{jasa:Warner65}, and uses $\Codomain=\Domain$.   
It uses as input perturbation function $\mathsf{GRR}(\cdot)$, where $\mathsf{GRR}(v)$ outputs the true value $v$ with probability $p=\frac{e^\epsilon}{e^\epsilon + d - 1}$, and any value $v'\ne v$ with probability $q=\frac{1-p}{d-1}=\frac{1}{e^\epsilon + d - 1}$, where $d=|\Domain|$ is the domain size.
To estimate the frequency of $v\in \Domain$ (i.e., the ratio of the users who have $v$ as private value to the total number of users), one counts how many times $v$ is reported, and denote the count as $C(v)$, and then computes
\begin{align}
\tilde{x}_v =  \frac{(C(v)/n)-q}{p-q}\nonumber \ ,
\end{align}
where $n$ is the total number of users.  
In~\cite{sp:wang2018locally}, it is shown that this is an unbiased estimate of the true count, and the variance for this estimate is
\begin{equation}\label{var_grr}
\mathrm{Var}[\tilde{x}_v]=\frac{d-2+e^\epsilon}{(e^\epsilon-1)^2\cdot n} \ .
\end{equation}
The variance given in~\eqref{var_grr} is linear to $d$; thus when the domain size $d$ increases, the accuracy of this protocol is low.

\mypara{Optimized Local Hashing (\olh)~\cite{sp:wang2018locally}.}
This protocol deals with a large domain size $d=|\Domain|$ by first using a hash function to map an input value into a smaller domain of size $g$ (typically $g\ll |\Domain|$), and then applying randomized response to the hashed value (which leads to $p=\frac{e^\epsilon}{e^\epsilon + g - 1}$).  In this protocol, both the hashing step and the randomization step result in information loss. The choice of the parameter $g$ is a tradeoff between losing information during the hashing step and losing information during the randomization step.  In~\cite{sp:wang2018locally}, it is found that the optimal choice of $g$ that leads to minimal variance is $(e^\epsilon+1)$.

In \olh, one reports $\tuple{H,\grr(H(v))}$
where $H$ is randomly chosen from a family of hash functions that hash each value in $\Domain$ to $\{1\ldots g\}$, and 
$\grr(\cdot)$ is the perturbation function for Generalized Randomized Response, while operating on the domain $\{1\ldots g\}$.
Let $\tuple{H^j,y^j}$ be the report from the $j$'th user.
For each value $v\in \Domain$, to compute its frequency, one first computes $C(v)=|\{j\mid H^j(v) = y^j\}|$, and then transforms $C(v)$ to its unbiased estimate
\begin{align*}
\tilde{x}_v = \frac{(C(v)/n) - (1/g)}{p-1/g}.
\end{align*}

The approximate variance of this estimate is
\begin{align*}
\mathrm{Var}[\tilde{x}_v]=\frac{4e^\epsilon}{(e^\epsilon-1)^2\cdot n}.
\end{align*}
Compared with \eqref{var_grr}, the factor $d-2+e^\epsilon$ is replaced by $4 e^\epsilon$.  This suggests that for smaller $|\Domain|$ (such that $|\Domain|-2<3e^\epsilon$), \grr is better; but for large $|\Domain|$, \olh is better and has a variance that does not depend on $|\Domain|$.

\subsection{Handling Numerical Attributes}\label{sec:background:mean}

Two methods have been proposed for mean estimation under LDP for numerical attributes.  Note that using these methods one can estimate the mean, and not the distribution.

\mypara{Stochastic Rounding (\SR)~\cite{jasa:DuchiJW18}.}  
The main idea of Stochastic Rounding (\SR) is that, no matter what is the input value $v$, each user reports one of two extreme values, with probabilities depending on $v$.  Here we give an equivalent description of the protocol.  
Following ~\cite{jasa:DuchiJW18}, we assume that the input domain is $[-1, 1]$.
Given a value $v \in [-1, 1]$, let $p=\frac{e^\epsilon}{e^\epsilon+1}$ and $q=1-p=\frac{1}{e^\epsilon+1}$, the \SR method outputs a random variable $v'$, which takes the value $-1$ with probability $q+\frac{(p-q)(1-v)}{2}$ and value $1$ with probability $q+\frac{(p-q)(1+v)}{2}$.  Since   
\begin{align*}
 \E[v'] & =  (-1)\left(q+\frac{(p-q)(1-v)}{2}\right) +  q +\frac{(p-q)(1+v)}{2} \\
        & =  (p-q)v \ .
\end{align*}
Let $\tilde{v} = \frac{v'}{p-q}$, we have $\E[\tilde{v}] = v$; thus the mean of $\tilde{v}$ provides an unbiased estimate of the mean for the distribution.

\mypara{Piecewise Mechanism (\PM)~\cite{icde:WangXYHSSY18}.}
In the Piecewise Mechanism, the input domain is $[-1,1]$, and the output domain is $[-s, s]$, where $s=\frac{e^{\epsilon/2}+1}{e^{\epsilon/2}-1}$.  For each $v \in [-1,1]$, there is an associated range $[\ell(v),r(v)]$ where $-s\le \ell(v) < r(v) \le s$, such that with input $v$, a value in the range $[\ell(v),r(v)]$ will be reported with higher probability than a value outside the range.  
More precisely, we have $\ell(v)=\frac{\sqrtee\cdot v - 1}{\sqrtee - 1}$ and $r(v)=\frac{\sqrtee\cdot v + 1}{\sqrtee - 1}$.  
The width of the range is $r(v) - \ell(v) = \frac{2}{\sqrtee - 1}$, and the center is $\frac{\ell(v)+r(v)}{2} = \frac{\sqrtee}{\sqrtee - 1}\cdot v$.  Specifically, $\PM$ works as follows:
\begin{align*}
    \Pr{\PM(v)=\rdv} & = \frac{\sqrtee}{2} \cdot \frac{\sqrtee - 1}{\sqrtee + 1} \mbox{ if } \rdv \in [\ell(v), r(v)],\\
    \Pr{\PM(v)=\rdv} & = \frac{1}{2\sqrtee} \cdot \frac{\sqrtee - 1}{\sqrtee + 1} \mbox{ otherwise}.
\end{align*}
It is shown that $\tilde{v}$ is unbiased, and has better variance than \SR when $\epsilon$ is large~\cite{icde:WangXYHSSY18}.

\section{Utility Metrics} 
\label{sec:problemdef}

When the private values are in a numerical domain, we need utility metrics that are different from those in categorical domains.  
In particular, the metrics should reflect the ordered nature of the underlying domain.

\subsection{Metrics based on Distribution Distance}

We want a metric to measure the distance between the recovered density distribution and the true distribution.  
However, since the distribution is over a metric space, we do not want to use point-wise distance metrics such as the $L_1$ and $L_2$ distance or the Kullback–Leibler (KL) divergence. 
For a simple example, consider the case where $\Domain = \{1,2,3,4\},$ the true distribution is $\x = [0.7, 0.1, 0.1, 0.1]$.  The two estimations  $\estx_1 = [0.1, 0.7, 0.1, 0.1]$ and $\estx_2 = [0.1, 0.1, 0.1, 0.7]$ have the same $L_1$, $L_2$, and KL distance from $\x$, but the distance between $\estx_1$ and $\x$ should be smaller than the distance between $\estx_2$ and $\x$ when we consider the numerical nature.  To capture this requirement, we propose to use two popular distribution distances as metrics.

\mypara{Wasserstein Distance (aka. Earth Mover Distance).}
Wasserstein distance measures the cost of moving the probability mass (or density) from distribution to another distribution.
In this paper, we use the one dimensional Wasserstein distance.
For discrete domain, define the cumulative function $\originCDF:[0,1]^d\times \Domain \mapsto [0, 1]$ that takes a distribution $\x$ and a value $v$, and output $\originCDF(\x, v) = \sum_{i=1}^v x_v$.  
Let $\x$ and $\estx$ be two distributions. The one dimensional Wasserstein distance is the $L_1$ difference between their cumulative distributions:
    \begin{align*}
        W_1(\x, \estx) = 
\sum_{v \in \Domain}|\originCDF(\x, v) - \originCDF(\estx, v)| \ .
    \end{align*}
For continuous domain, $\x$ is the probability density function with support on $[0,1]$, $\originCDF(\x, v) = \int_{t=0}^v x(t)dt$.  The one dimensional Wasserstein distance is 
    \begin{align*}
        W_1(\x, \estx) = 
        \int_{v \in \Domain}|\originCDF(\x, v) - \originCDF(\estx, v)|\ dv \ .
    \end{align*}

\mypara{Kolmogorov-Smirnov (KS) Distance .}
KS distance is the maximum absolute difference at any point between the cumulative functions of two distributions:
\begin{align*}
    d_{KS}(\x, \estx) = \sup_{v\in \Domain}\left|\originCDF(\x, v) - \originCDF(\estx, v)\right| \ .
\end{align*}
Intuitively, Wasserstein distance measures the area between the two CDF curves, and KS-distance the maximum height difference between them.

\subsection{Semantic and Statistical Quantities}

Range queries have been used as the main utility metrics for research in this area~\cite{pvldb:HayRMS10, pvldb:KulkarniCD18, wang2019answering, arXiv:Wang19LLLS}.
Also, we consider the basic statistics from the estimated data distributions and check whether they are accurate.

\mypara{Range Query.}
Define the range query function $\mathbf{R}(\x, i, \alpha) = \originCDF(\x, i + \alpha) - \originCDF(\x, i)$, where $\alpha$ specifies the range size.
Given the true distribution $\x$ and the estimated distribution $\estx$, range queries reflect the quality of estimate with randomly sampling $i$ and calculating the following:
\begin{align*}
    |\mathbf{R}(\x, i, \alpha) - \mathbf{R}(\estx, i, \alpha)| \ .
\end{align*}

\mypara{Mean.}
We denote $\mu$ to denote the mean of the true distribution, and $\hat{\mu}$ the estimated mean.  
To measure mean accuracy, we use the absolute value of the difference between these two, i.e. $|\mu -\hat{\mu}|$.

\mypara{Variance.}
We use $\sigma^2$ to denote the variance of the true distribution, and $\hat{\sigma}^2$ for the variance from the reconstructed distribution.
To measure variance accuracy, we use the absolute value of the difference between these two, i.e. $|\sigma^2 -\hat{\sigma}^2|$.

\mypara{Quantiles.}
Quantiles are cut points dividing the range of a probability distribution into intervals with equal probabilities.  Formally, $\mathbf{Q}(\x, \beta) = \argmax_v\{\originCDF(\x, v) \leq \beta \}$.
In the experiment, define $B = \{ 10\%, 20\%, \ldots, 90\%\}$, we measure the following: 
\begin{align*}
    \frac{1}{|B|}\sum_{\beta\in B}|\mathbf{Q}(\x, \beta) - \mathbf{Q}(\estx, \beta)| \ .
\end{align*}

 \section{Using \CFO Protocols for Numerical Domains} 
\label{sec:adapted_method}

In this section, we present two approaches that use \CFO protocols to reconstruct distributions over an discrete numerical domain $\Domain=\{1,2\cdots,d\}$.  
Continuous numerical domains can be buckized into discrete ones.

\subsection{\CFO with Binning}

Given a numerical domain, one can make it discrete using binning, and then have each user report which bin the private value is in using a \CFO protocol.  
For a given domain size and privacy parameter $\epsilon$, one chooses either \olh or \grr, based on which one gives lower estimation variance. 
After obtaining density estimations for all the bins, one computes a density distribution for the domain by assuming uniform distribution within each bin. However, some estimated values may be negative, which does not lead to valid cumulative distribution functions on the domain. 
In~\cite{arXiv:Wang19LLLS}, it is shown that a post-processing method called Norm-Sub can be applied to improve estimation.  Norm-sub converts negative estimates to $0$ and subtracts the same amount to all the positive estimates so that they sum up to $1$.  If some positive estimates become negative after the subtraction, the process is repeated.  This results in an estimation such that each estimation is non-negative and all estimations sum up to $1$.  It can thus be interpreted as a probability distribution.

\mypara{Challenge of Choosing Bin Size.}
When using binning, there are two sources of errors: noise and bias due to grouping values together.  More bins lead to greater error due to noises.   
Fewer bins lead to greater error due to biases.  Choosing the bin size is a trading-off of the above two sources of errors, and the effect of each choice depends both on the privacy parameter $\epsilon$, and on property of the distribution.   
For example, when a distribution is smooth, one would prefer using less bins, as the bias error is small, and when a distribution is spiky, using more bins would perform better. 
In our experiments, we observe that even if we could choose the optimal bin size empirically for each dataset and $\epsilon$ value (which is infeasible to do in practice due to privacy), the result would still be worse than the method to be proposed in Section~\ref{sec:method}.  We thus chose not to develop ways to choose bin size based on $\epsilon$, and just report results of this method under several different bin sizes.

\subsection{Hierarchy-based Methods}
Hierarchy-based methods, including Hierarchy Histogram (HH) in~\cite{pvldb:HayRMS10,pvldb:QardajiYL13} and Haar in~\cite{icde:XiaoWG10}, were first proposed in the centralized setting of DP. 
In \cite{pvldb:KulkarniCD18}, Kulkarni et al.~studied the HH method and the Haar in the context of LDP.
In order to adapt Haar method to the local setting, they used Hadamard random response (HRR) as the frequency oracle. 
HRR is simliar to Local Hashing method introduced in the Section~\ref{subsec:fo}, but fixing $g=2$ and 
using a Hadamard matrix as the family of hash functions.  To make it clear in the context, we call the LDP version of Haar as HaarHRR.

\mypara{HH in LDP.}
Given a positive integer $\beta$ and a discrete, ordered domain with size $d=|\Domain|$, one can construct a $\beta$-ary tree with $d$ leaves corresponding to values in $\Domain$.  There are $(h+1)$ layers in the tree, where $h=\log_\beta d$ (for simplicity, we assume that $\log_\beta d$ is an integer).  The $(h+1)$-th layer is the root.  
A user with value $v$ chooses a layer $\ell \in \{1, \ldots, h\}$ uniformly at random, and then reports $\ell$ as well as the perturbed value of $v$'s ancestor node at layer $\ell$. 
For each node in the tree, the aggregator can obtain an estimate of its frequency.  
Assuming that the distribution differences among the $h$ groups are negligible, for each parent-child relation, one expects that the sum of child estimations equals the that of the parent.  Constrained inference techniques~\cite{pvldb:HayRMS10} are applied to ensure this property. 

\mypara{HaarHRR.}
Similar to HH, one can use a binary tree to estimate distribution with Discrete Haar Transform~\cite{pvldb:KulkarniCD18}.
Specifically, each leaf represents the frequency of a value.
Define the height of a leaf node as $0$; and the height of an inner nodes $a$ is denotes as $h(a)$.
Each inner node now represents the Haar coefficient $c_a = \frac{C^{(a)}_l-C^{(a)}_r}{2^{h(a)/2}}$, where $C^{(a)}_l$ (or $C^{(a)}_r$) is the sum of all leaves of left (or right) subtree of node $a$.

In the LDP setting, for a user with value $v$, the Haar coefficients on each layer has exactly one element equal to $-1$ or $1$, while others are all zeros.
Similar to HH, each user chooses a layer $\ell \in \{1, \ldots,h\}$ uniformly at random, then apply Hadamard randomized response (HRR) on layer $\ell$ which depends on Hadamard matrix $\phi \in \{-1,1\}^{2^{h-\ell}\times 2^{h-\ell}}$.
With HRR reports from users, the aggregator can calculate unbiased estimates for the Haar coefficients on layer $\ell$.
Due the limit of space, more details can be found in~\cite{pvldb:KulkarniCD18}.

\mypara{Difference from the Centralized Setting.}
When using hierarchy-based method, there are two ways to ensure the privacy constraint.  One is to divide the privacy budget, where one builds a single tree for all values.  
Since each value affects the counts at every level, one splits the privacy budget among the levels.  
The other is to divide the population among the layers, where each value contributes to the estimation of a single layer, and one can use the whole privacy budget for each count.  
When dividing the population, the absolute level of noise is less than the case of dividing privacy budget; however, the total count also decreases, magnifying the impact of noise.  
In addition, dividing the population introduces sampling errors, as users are divided into different groups, which may have different distribution from the global one.

In the centralized setting, because the amount of added noise is low, it is better to divide the privacy budget, as one avoids sampling errors.  In~\cite{pvldb:QardajiYL13}, it was found that in the centralized setting, the optimal branching factor for HH is around 16.  And this results in better performance than using the Haar method, which can be applied only to a binary hierarchy.  In the LDP setting, because the amount of noise is much larger, sampling errors can be mostly ignored, and it is better to divide the population instead of privacy budget.  As a result, the optimal branching factor for HH is around 5, making it similar to the Haar method.  This was theoretically proved and empirically demonstrated in~\cite{pvldb:KulkarniCD18,wang2019answering}.

\subsection{HH-ADMM}

We notice that there are other ways to improve hierarchy-based mechanism in the LDP setting.  First, the larger noise in the LDP setting results in negative estimates.  We can exploit the prior knowledge that the true counts are non-negative to improve the negative estimates.  Second, the total true count is known, as LDP protects privacy of reported values and not the fact that one is reporting.   These are not exploited in \cite{pvldb:KulkarniCD18}. 
We propose to use the Alternating Direction Method of Multipliers (ADMM) algorithm~\cite{book:boyd11distributed} to post-process the hierarchy estimation. The usage of ADMM was proposed in~\cite{kdd:lee2015maximum} for the centralized setting.  Our method applies this to LDP, and has two additional differences from~\cite{kdd:lee2015maximum}.  First, we use $L_2$ norm in the objective function because the noise by \CFO is well approximated by Gaussian noise, and minimizing $L_2$ norm achieves MLE.  In the centralized setting, Laplace noise is used, and $L_1$ norm is minimized in~\cite{kdd:lee2015maximum}.  Second, we pose an additional constraint that the estimates sum up to $n$, which is known in LDP setting.  In the setting considered in~\cite{kdd:lee2015maximum}, $n$ is unknown.

\mypara{The HH-ADMM Algorithm.}
Given a constant vector $\tilde{\x}$, ADMM is an efficient algorithm that aims to find $\estx$ that satisfies the following optimization problem:
\begin{align}
    \text{minimize} & \quad \frac{1}{2}\|\estx - \tilde{\x}\|_2^2 \label{eq:hh-object}\\ 
    \text{subject to}  &\quad \mathbf{A} \estx = 0, \quad \estx \succcurlyeq 0, \quad \estx_0 = 1 \nonumber
\end{align}

In the hierarchy histogram case of LDP, $\tilde{\x}$ represents the concatenation of estimates from all the layers, where $\tilde{\x}_0$ is the root. 
$\estx$ is the post-processed estimates.
The hierarchical constraints state that the estimate of each internal node should be equal to the sum of estimates of its children nodes.  
This can be represented by an equation $\mathbf{A} \estx = 0$, where $\mathbf{A}$ has one row for each internal node and one column for each node, and $a_{ij}$ is defined as:
\begin{align*}
    a_{ij} = \begin{cases}
    1,  & \text{if } i = j\\
    -1, &  \text{node } j \text{ is a child of } \text{node }i\\
    0, &\text{otherwise}
\end{cases}
\end{align*}

The optimization problem~\eqref{eq:hh-object} improves the estimation by enforcing the non-negativity ($\estx \succcurlyeq 0$) and sum-up-to-1 ($\estx_0=1$) compared with~\cite{pvldb:KulkarniCD18}.
Because of the limit of space, we refer the readers who want to know the detail of derivation to ~\cite{kdd:lee2015maximum} for more information.

 \section{Square Wave and Expectation Maximization with Smoothing}
\label{sec:method}

The methods we presented in Section~\ref{sec:adapted_method} use \CFO protocols as black-boxes and do not fully exploit the ordered nature of the domains.
We propose a new approach that uses a Square Wave reporting mechanism with post-processing conducted using Expectation Maximization with Smoothing (EMS).

\subsection{General Wave Reporting}
\label{subsec:gw}

We first study a family of randomized reporting mechanisms that we call General Wave mechanisms.  
The intuition behind this approach is to try to increase the probability that a noisy reported value carries meaningful information about the input.  This is also the implicit goal driving the development of \CFO protocols beyond \grr.  
In \grr, one reports a value in $\Domain$.  
Intuitively, if the reported value is the true value, then the report is a ``useful signal'', as it conveys the extract correct information about the true input.  If the reported value is not the true value, the report is in some sense noise that needs to be removed.  The probability that a useful signal is generated is $p=\frac{e^\epsilon}{e^\epsilon+d-1}$, where $d=|\Domain|$ is the size of the domain.  When $d$ is large, $p$ is small, and \grr performs poorly.  The essence of \olh and other \CFO protocols is that one reports a randomly selected set of values, where one's true value has a higher probability of being selected than other values.  In some sense, each ``useful signal'' is less sharp, since it is a set of values, but there is a much higher probability that a useful signal is transmitted. 

Exploiting the ordinal nature of the domain, we note that a report that is different from but close to the true value $v$ also carries useful information about the distribution.  
Therefore, given input $v$, we can report values closer to $v$ 
with a higher probability than values that are farther away from $v$.  

Without loss of generality, we assume that $\Domain = [0,1]$ consists of floating point numbers between 0 and 1.
The random reporting mechanism can be defined by a family of probability density functions (PDF) over the output domain, with one PDF for each input value.  
We denote the output probability density function for $v$ as $\M_v(\rdv) = \Pr{\perturb(v) = \rdv}$.

Following the above intuition, we want $\M_v(\rdv)$ to satisfy the property that $\M_v(\rdv)=q$ when $|\rdv-v|>b$, and $q \leq \M_v(\rdv) \le e^\epsilon q$ when $|\rdv-v|\le b$, where $b$ is a parameter to be chosen. 
To ensure that for values close to the two ends, the range of near-by values is the same, we enlarge the output domain $\Codomain = [-b, 1+b]$.  
We formalize the idea as the following general wave mechanism.

\begin{definition}[General Wave Mechanism (\GW) ]
With input domain $\Domain=[0,1]$ and output domain $\Codomain=[-b, 1 + b]$, a randomization mechanism $\perturb:\Domain \rightarrow \Codomain$ is an instance of general wave mechanism if for all $v\in \Domain$, there is a wave function $W\colon \mathbb{R} \rightarrow [q,e^\epsilon q]$ with constants $q>0$ and $\epsilon > 0$, such that the output probability density function $\M_v(\rdv) = W(\rdv - v)$ :
\begin{enumerate} 
    \item $W(z) = q$ for $|z| > b$ ;
    \item $\int_{-b}^b W(z)\ dz = 1-q$ \ . 
\end{enumerate}
\end{definition}

\begin{theorem}
\GW satisfies $\epsilon$-LDP.
\end{theorem}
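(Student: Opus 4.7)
The plan is to verify two things: (i) for every $v \in \Domain$, the function $\M_v$ is a legitimate probability density on $\Codomain = [-b, 1+b]$, and (ii) the density ratio $\M_{v_1}(\rdv)/\M_{v_2}(\rdv)$ is bounded by $e^\epsilon$ pointwise, from which the $\epsilon$-LDP inequality of Definition~\ref{def:dlp} follows by integrating over any measurable $T \subseteq \Codomain$.

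First I would check normalization. With the substitution $z = \rdv - v$, we have $\int_{\Codomain} \M_v(\rdv)\,d\rdv = \int_{-b-v}^{1+b-v} W(z)\,dz$. Since $v \in [0,1]$, the interval $[-b-v,\,1+b-v]$ contains $[-b,b]$, so splitting the integral gives
\[
\int_{-b-v}^{-b} W(z)\,dz \;+\; \int_{-b}^{b} W(z)\,dz \;+\; \int_{b}^{1+b-v} W(z)\,dz \;=\; q v + (1-q) + q(1-v) \;=\; 1,
\]
using $W \equiv q$ outside $[-b,b]$ together with $\int_{-b}^{b} W = 1-q$ from the \GW definition. This confirms $\M_v$ is a valid PDF on $\Codomain$.

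For the LDP bound, the codomain restriction $W\colon \mathbb{R} \to [q, e^\epsilon q]$ gives, for any $v_1, v_2 \in \Domain$ and any $\rdv \in \Codomain$,
\[
\M_{v_1}(\rdv) \;=\; W(\rdv - v_1) \;\le\; e^\epsilon q \;\le\; e^\epsilon\, W(\rdv - v_2) \;=\; e^\epsilon\, \M_{v_2}(\rdv).
\]
Integrating this pointwise inequality over any measurable $T \subseteq \Codomain$ yields $\Pr{\perturb(v_1)\in T} \le e^\epsilon \Pr{\perturb(v_2)\in T}$, which is exactly the $\epsilon$-LDP condition.

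I do not expect any real obstacle; the entire argument reduces to a one-line pointwise ratio bound together with a routine normalization check. The only point worth being careful about is the boundary inputs $v = 0$ and $v = 1$, where one of the flat tails in the shifted wave has length zero; but the symmetric formula $q v + (1-q) + q(1-v) = 1$ handles these endpoints automatically, so no case analysis is actually needed.
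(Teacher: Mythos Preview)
Your proof is correct and follows essentially the same approach as the paper: both establish the pointwise density bound $q \le \M_v(\rdv) \le e^\epsilon q$ and integrate over an arbitrary measurable $T \subseteq \Codomain$ to obtain the LDP inequality. Your additional normalization check (verifying $\int_{\Codomain}\M_v = 1$ via the change of variables $z = \rdv - v$) is a nice sanity step that the paper leaves implicit in the definition of \GW, but it does not change the core argument.
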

\begin{proof}
 For any two possible input value $v_1, v_2 \in \Domain$ and any set of possible output $T \subseteq \Codomain$ of \GW, we have
\begin{align*}
     \frac{\Pr{\GW(v_1)\in T}}{\Pr{\GW(v_2)\in T}} = \frac{\int_{\rdv\in T}\Pr{\GW(v_1) = \rdv} d\rdv }{\int_{\rdv\in T}\Pr{\GW(v_2) = \rdv}d\rdv} \ .
 \end{align*}
By definition of \GW, for all $v_1,v_2 \in \Domain$ and $T \subset \Codomain$ we have
\begin{align*}
    \frac{\Pr{\GW(v_1)\in T}}{\Pr{\GW(v_2)\in T}} \leq \frac{\int_{\rdv\in T}e^\epsilon q \ d\rdv}{\int_{\rdv\in T}q \ d\rdv} = e^\epsilon \ .
\end{align*}
\end{proof}

\subsection{The Square Wave mechanism}
\label{subsec:sw}

\GW can have different wave shapes.  An intriguing question is what shape should be used. 
Following the same intuition in~\cite{arXiv:AcharyaSZ18}, given different values $v \neq v'$, if $\M_v$ and $\M_{v'}$ are identical, then there is no way to distinguish those different input values. 
Therefore, the hope is that the farther apart $\M_{v}$ and $\M_{v'}$ are, the easier it is to tell them apart.
We use the difference between two output distributions, Wasserstein (a.k.a., earth-mover) distance as the utility metric.
Based on this, we find the Square Wave mechanism, where supports for $[v-b, v+b]$ are the same, is optimal.
We also empirically compare \GW of other shapes with Square Wave mechanism in Section~\ref{subsec:exp_different_shape}.  
The experimental results support our intuition. 

\mypara{Specification of Square Wave Reporting.}
The Square Wave mechanism $\SW$ is defined as: 
\begin{align}
\forall v\in \Domain, \rdv \in \Codomain,  \;\Pr{\SW(v)=\rdv}  \!=\! \left\{
\begin{array}{lr}
\!p, & \mbox{if} \; |v-\rdv|\le b  \ , \\
\!q, & \mbox{ otherwise} \ . \\
\end{array}\label{eq:squarewave}
\right.
\end{align}

By maximizing the difference between $p$ and $q$ while satisfying the total probability adds up to $1$, the values $p, q$ can be derived as:
\begin{align*}
p =  \frac{e^\epsilon}{2be^\epsilon + 1} \ ,\quad
q =  \frac{1}{2b e^\epsilon + 1} \ . 
\end{align*}

For each input $v$, the probability mass distribution for the perturbed output looks like a square wave, with the high plateau region centered around  $v$. We thus call it the Square Wave (\SW) reporting mechanism.

\begin{theorem}\label{thm:optimal_wave}
    For any fixed $b$ and $\epsilon$, the \SW is the \GW that maximizes the Wasserstein distance between any two output distributions of two different inputs.
\end{theorem}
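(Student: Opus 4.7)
\textbf{Proof plan for Theorem~\ref{thm:optimal_wave}.}
The plan is to compute the Wasserstein distance $W_1(\M_{v_1},\M_{v_2})$ explicitly in terms of the wave function $W$, show that this quantity depends only on the total ``excess mass'' of $W$ above the floor level $q$, and then argue that this excess is maximized precisely by the square wave.

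First I would introduce the reduced wave $h(z) = W(z) - q$. By the definition of \GW, $h$ is supported in $[-b,b]$, satisfies $0 \le h(z) \le (e^\epsilon - 1)q$, and $\int_{-b}^{b} h(z)\, dz = (1-q) - 2bq$. Let $F_v(t) = \int_{-b}^{t} \M_v(s)\, ds$ be the CDF of $\M_v$ on $\Codomain$, and assume WLOG that $v_1 < v_2$ with $\delta = v_2 - v_1$. A change of variable gives
\begin{equation*}
F_{v_1}(t) - F_{v_2}(t) \;=\; \int_{t-v_2}^{t-v_1} W(u)\, du \;-\; \int_{-b-v_2}^{-b-v_1} W(u)\, du.
\end{equation*}
Because $v_1, v_2 \in [0,1]$, the interval $[-b-v_2, -b-v_1]$ lies entirely in the tail region $\{|u|>b\}$ where $W \equiv q$, so the second integral equals $q\delta$. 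Writing $W = q + h$ in the first integral cancels the $q\delta$, leaving
\begin{equation*}
F_{v_1}(t) - F_{v_2}(t) \;=\; \int_{t-v_2}^{t-v_1} h(u)\, du \;\ge\; 0.
\end{equation*}

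Next I would compute $W_1(\M_{v_1},\M_{v_2}) = \int_{-b}^{1+b} (F_{v_1}(t)-F_{v_2}(t))\, dt$ by swapping the order of integration. For each fixed $u$ with $h(u) \neq 0$, the set of $t$ with $u \in [t-v_2, t-v_1]$ is the interval $[u+v_1, u+v_2]$ of length $\delta$; and since $u \in [-b,b]$ and $v_1, v_2 \in [0,1]$ this interval sits inside $[-b,1+b]$. Hence
\begin{equation*}
W_1(\M_{v_1}, \M_{v_2}) \;=\; \delta \int_{-b}^{b} h(u)\, du \;=\; \delta\bigl(1 - q(1+2b)\bigr).
\end{equation*}
So the Wasserstein distance depends on $W$ only through the scalar $q$ (the floor of the wave), not through its shape.

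Finally, maximizing $W_1$ over \GW reduces to \emph{minimizing} $q$ subject to the wave constraints. From $h \le (e^\epsilon - 1)q$ and $\int_{-b}^{b} h = 1 - q - 2bq$, we get $1 - q - 2bq \le 2b(e^\epsilon - 1)q$, i.e.\ $q \ge \tfrac{1}{2be^\epsilon + 1}$, with equality iff $h(z) \equiv (e^\epsilon-1)q$ on $[-b,b]$, which forces $W(z) \equiv e^\epsilon q$ on $[-b,b]$, i.e.\ the Square Wave defined in~\eqref{eq:squarewave}. Plugging in the minimal $q$ gives the optimal value $W_1 = \delta \cdot \tfrac{2b(e^\epsilon-1)}{2be^\epsilon + 1}$. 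The only genuinely delicate step is the swap-of-integration that collapses the shape dependence to a single scalar; once that is in place the optimization is immediate.
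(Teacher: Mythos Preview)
Your proof is correct and follows essentially the same two-step structure as the paper: first show that $W_1(\M_{v_1},\M_{v_2}) = \delta\bigl(1-(2b+1)q\bigr)$ depends on the wave only through the floor level $q$ (the paper's Lemma~\ref{lemma:wass}), then minimize $q$ under the constraint $W \le e^\epsilon q$ to force the square shape (the paper's Lemma~\ref{lemma:min_q}). The only cosmetic difference is that you obtain the $W_1$ formula via the substitution $h=W-q$ and a Fubini swap, whereas the paper writes the CDF explicitly through an auxiliary function $\textsc{diff}$ and integrates it directly; both routes collapse the shape dependence to the single scalar $q$ in the same way.
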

Theorem~\ref{thm:optimal_wave} can be proved by using the following Lemma~\ref{lemma:wass} and Lemma~\ref{lemma:min_q}.

\begin{lemma} \label{lemma:wass}
Given $v_1,v_2 \in \Domain$ as inputs to general wave mechanism, where $v_2 > v_1$ and let $\Delta = v_2 - v_1 > 0$, the Wasserstein distance between the output distributions of general wave mechanism is $\Delta(1 - (2b+1)q)$.
\end{lemma}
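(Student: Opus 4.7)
The plan is to combine a clean decomposition of each output density with the one-dimensional Wasserstein formula $W_1(\mu,\nu) = \int |F_\mu - F_\nu|\, dt$.

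First, I would write $\M_v = q\cdot \mathbf{1}_{[-b,1+b]} + \nu_v$, where the bump $\nu_v(\rdv) := W(\rdv - v) - q$ is nonnegative (since $W \ge q$) and supported inside $[v-b, v+b]$. The constant background is identical for $v_1$ and $v_2$, and from the definition $\int_{-b}^b W(z)\, dz = 1 - q$ the total mass of $\nu_v$ equals $(1 - q) - 2bq = 1 - (2b+1)q$. This is already the factor $1 - (2b+1)q$ appearing in the claim, so it remains to show the Wasserstein distance equals (bump mass) $\times\, \Delta$.

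The second key step is to observe that $\nu_{v_2}(\rdv) = W(\rdv - v_2) - q = \nu_{v_1}(\rdv - \Delta)$: as functions on $\mathbb{R}$, $\nu_{v_2}$ is simply the rigid translation of $\nu_{v_1}$ by $\Delta$. The one thing to verify is that both bump supports $[v_i - b, v_i + b]$ lie inside the output domain $[-b, 1+b]$, so that no mass is clipped by the boundary --- this is automatic from $v_1, v_2 \in [0,1]$.

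To conclude, the background cancels in $F_1(t) - F_2(t)$, leaving the difference of the cumulatives of $\nu_{v_1}$ and its $\Delta$-translate, which is pointwise nonnegative because the bump cumulative is nondecreasing. A Fubini swap then collapses $\int (G_1(t) - G_1(t-\Delta))\, dt$ into $\Delta$ times the total bump mass, yielding $\Delta(1 - (2b+1)q)$. I do not anticipate a real obstacle: the only subtlety is the boundary check for the translation identity, and the constraint $v_1, v_2 \in [0, 1]$ handles it for free.
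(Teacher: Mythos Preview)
Your proposal is correct and follows essentially the same approach as the paper. The paper's function $\textsc{diff}(z)=\int_{-b}^{z}(W(z')-q)\,dz'$ is exactly the cumulative of your bump $\nu_v$, and the paper likewise writes $\originCDF(\M_v,\rdv)=(b+\rdv)q+\textsc{diff}(\rdv-v)$, drops the absolute value using the pointwise ordering of the CDFs, and integrates; your Fubini computation is just a slightly different (and arguably cleaner) way of evaluating the same integral $\int\big(\textsc{diff}(\rdv-v_1)-\textsc{diff}(\rdv-v_2)\big)\,d\rdv$.
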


\begin{proof}
Given two different input values $v_1$ and $v_2$ which satisfy $v_2 - v_1 = \Delta > 0$, let $\M_{v_1}$ and $\M_{v_2}$ are the corresponding output distributions.
Define a function $\textsc{diff}(z)$ as the following:
\begin{align*}
    \textsc{diff}(z) = 
    & \begin{cases} 
    0\ , & \text{if } z \leq -b \\
    1 - (2b + 1 )q\ , &\text{if } z \geq b \\
    \int_{-b}^{z} (W(z')-q)\ dz' \ , &\text{otherwise.}
    \end{cases}
\end{align*}

The cumulative function of \SW can be written as
\begin{align*}
    \originCDF(\M_{v}, \rdv) = (b + \rdv) q + \textsc{DIFF}(\rdv - v)
\end{align*}
Therefore, 
\begin{align*}
    \int_{-b}^{1+b}\originCDF(\M_{v}, \rdv) d\rdv =& \frac{q}{2}(1+2b)^2  + \int_{-b}^b \textsc{DIFF}(z)dz \\
    & + (1-(2b+1)q)(1-v) \ .
\end{align*}
Following the definition of Wasserstein distance of one dimensional data with $\ell_1$ norm in Section~\ref{sec:problemdef}, and as $\originCDF(\M_{v_1}, \rdv) \geq \originCDF(\M_{v_2}, \rdv)$ for all $\rdv$, it follows that
\begin{align*}
    W_1(\M_{v_1}, \M_{v_2}) &= 
    \int_{\Codomain}|\originCDF(\M_{v_1}, \rdv) - \originCDF(\M_{v_2}, \rdv)| d\rdv \\
  & = \int_{ - b}^{1 + b}\left(\originCDF(\M_{v_1}, \rdv) - \originCDF(\M_{v_2}, \rdv)\right) d\rdv \\
    &= (1-(2b+1)q)\Delta \ .
\end{align*}
\end{proof}

Lemma~\ref{lemma:wass} shows that we need to minimize $q$ if we want to maximize the Wasserstein distance between any two output distributions.
Thus, we have the following lemma.
\begin{lemma}
    \label{lemma:min_q}
    For any fixed $b$ and $\epsilon$, the minimum $q$ for general wave mechanism is $q = \frac{1}{2be^\epsilon  + 1}$, which can be achieved if and only if the mechanism is \SW.
\end{lemma}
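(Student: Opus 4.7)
The plan is to derive the bound $q \ge \frac{1}{2be^\epsilon+1}$ directly from the two defining constraints of a general wave mechanism, and then identify the equality case. Recall that by the definition of \GW, the wave function $W$ takes values in $[q, e^\epsilon q]$; this range constraint is what makes SW extremal.

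First I would apply the codomain bound $W(z) \le e^\epsilon q$ inside the integral constraint $\int_{-b}^{b} W(z)\, dz = 1 - q$ to obtain
\begin{align*}
1 - q \;=\; \int_{-b}^{b} W(z)\, dz \;\le\; \int_{-b}^{b} e^\epsilon q\, dz \;=\; 2b\, e^\epsilon q.
\end{align*}
Rearranging gives $1 \le q\,(2b e^\epsilon + 1)$, hence $q \ge \frac{1}{2be^\epsilon+1}$, which is the desired lower bound.

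Next I would analyze the equality case. Equality in the integral bound above forces $W(z) = e^\epsilon q$ for almost every $z \in [-b, b]$. Combined with the GW requirement that $W(z) = q$ for $|z| > b$, this pins down $W$ (up to a measure-zero set) as exactly the two-level step function defining \SW in~\eqref{eq:squarewave}; conversely, one can verify directly that \SW with $q = \frac{1}{2be^\epsilon + 1}$ and $p = \frac{e^\epsilon}{2be^\epsilon + 1}$ satisfies both GW conditions and attains equality.

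There is no real obstacle here: the argument is a one-line application of the pointwise upper bound on $W$ to the integral normalization, and the uniqueness statement is immediate from the equality condition for that bound. The only subtlety worth mentioning is that uniqueness holds only up to equivalence on a set of measure zero, which is the standard caveat when characterizing densities; this does not affect the resulting distribution.
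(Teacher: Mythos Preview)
Your proof is correct and follows essentially the same route as the paper: bound the integral $\int_{-b}^{b} W(z)\,dz = 1-q$ above by $2b e^\epsilon q$ using the codomain constraint $W \le e^\epsilon q$, rearrange to get $q \ge \frac{1}{2be^\epsilon+1}$, and identify equality with $W \equiv e^\epsilon q$ on $[-b,b]$, i.e., \SW. Your presentation is in fact slightly cleaner, and the measure-zero caveat you add is a fair technical refinement.
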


\begin{proof}
By criteria of the definition of \GW, we have 
    \begin{align*}
        & 1 = q+\int_{-b}^b W(z)dz \leq 1+(2b)e^\epsilon q \\
        & \Rightarrow q \geq \frac{1}{2be^{\epsilon}+1}
    \end{align*}
    We have equality iff $\M_v(\rdv) = e^{\epsilon}q$ for all $\rdv \in[v-b, v+b]$, which turns out to be \SW.
\end{proof}

\mypara{Comparison with \PM Mechanism.}
Square Wave (\SW) reporting is similar to the Piecewise Mechanism (\PM) for mean estimation~\cite{icde:WangXYHSSY18} (see Section~\ref{sec:background:mean}).  \PM directly sums up the randomized reports to estimate the mean of distribution, while the outputs of \SW are used to reconstruct the whole distribution  (the reconstruction method will be described in Subsection~\ref{subsec:em}).
Driven by the different focus, the reporting mechanisms are also different. 
\PM has to be unbiased for mean estimation, so the input values are not always at the center of high probability region.  
For example, given input $v=-1$, the high probability range in \PM is $[-\frac{e^{\epsilon/2} + 1}{e^{\epsilon/2}-1}, -1]$.

\subsection{Choosing \textit{b}}
\label{subsec:choose_b}
An important parameter to choose for the Square Wave reporting mechanism is $b$. 
In Square Wave reporting, a value that is within $b$ of true input is reported with a probability that is $e^\epsilon$ times the probability that a ``far'' value is reported.  
The optimal choice of $b$ depends on the privacy parameter $\epsilon$.  For a larger $\epsilon$, a smaller $b$ is preferred.  When $\epsilon$ goes to infinity, a value of $b\rightarrow 0$ leads to total recovery of input distribution, and any $b>0$ leads to information loss.  
Intuitively, the optimal choice of $b$ also depends on the input distribution.  For a distribution with probability density concentrated at one point, one would prefer smaller $b$.  For a distribution with more or less evenly distributed probability density, one would prefer a larger $b$.  However, since we do not know the distribution of the private values, we want to choose a $b$ value independent of the distribution, but can perform reasonably well over different distributions. 

In this paper, we choose $b$ to maximize the upper bound of mutual information between the input and output of the Square Wave reporting.  We also empirically study the effect of varying $b$ (see Section~\ref{subsec:exp_different_shape}).  The experimental results show that choosing $b$ by this method results in optimal or close to optimal choices of $b$.

Let $V$ and $\RV$ be the input and output random variables representing the input and output of \SW, respectively.
The mutual information between $V$ and $\RV$ can be represented by the difference between differential entropy and conditional differential entropy of $V$ and $\RV$:
\begin{align*}
    I(V, \RV) = h(V) - h(V|\RV) = h(\RV) - h(\RV|V) \ .
\end{align*}
The quantity $I(V, \RV)$ depends on the input distribution, which we want to avoid.  Therefore, we consider an upper bound of $I(V, \RV)$, which is achieved when $\RV$ is uniformly distributed on $\Codomain$.  Let $U$ be the random variable that is uniformly distributed in $\Codomain$. 
Because $h(\RV)\leq h(U)$, we have:
\begin{align}
    I(V, \RV) \leq h(U) - h(\RV|V) . \label{eq:mi_bound}
\end{align}
In~\eqref{eq:mi_bound}, the first term of RHS is 
\begin{align*}
    h(U) = \log(2b+1) .
\end{align*}
The second term of RHS only depends on \SW:
\begin{align*}
    \begin{split}
\quad h(\RV|V)    &= -\int_{v}\Pr{V = v} (2bp\log p + q \log q) \\
    & = -(2bp\log p + q \log q)\\
    & = - \frac{2b\epsilon e^\epsilon}{2b e^\epsilon + 1} + \log (2be^\epsilon + 1) \ .
    \end{split}
\end{align*}
So the mutual information is determined by a function of $b$,
\begin{align*}
    \log\left(\frac{2b+1}{2be^\epsilon + 1} \right) + \frac{2b\epsilon e^\epsilon }{2b e^\epsilon + 1}  \ .
\end{align*}
By making its derivative to 0, we get
\begin{align*}
    b = \frac{\epsilon e^\epsilon - e^\epsilon +1}{2e^\epsilon(e^\epsilon - 1 - \epsilon)} \ .
\end{align*}
Note that $b$ is a non-increasing function with $\epsilon$.  When $\epsilon$ goes to $\infty$, $b$ goes to $0$. When $\epsilon$ goes to $0$,  $b$ goes to $1/2$, which leads to an output domain that doubles the size of the input domain, and for each input value, half of the output domain are considered ``close'' to the input value.

\subsection{Bucketizing}
\label{subsec:bucketize}

The aggregator receives perturbed reports from users and needs to reconstruct the distribution on $\Domain$.  Our approach performs this reconstruction on a discretized domain, i.e., histograms over the domain.  
The bucketization step can be performed either before or after applying the randomization step.  We discuss the two approaches below.  In experiments, we use the ``randomize before bucketize'' approach.

\mypara{``Randomize before bucketize'' (R-B).}
Here each user possesses a floating point number in $\Codomain = [0, 1]$, applies the Square Wave mechanism  in Section~\ref{subsec:sw}, and sends the result to the aggregator.  The aggregator receives values in $\Codomain = [-b, 1+b]$, discretizes the reported values into $\tilde{d}$ buckets in $\Codomain$, and constructs a histogram with $\tilde{d}$ bins.  Using the method in Section~\ref{subsec:em}, the aggregator can reconstruct an estimated input histogram of $d$ bins.   In experiments, we set $\tilde{d}=d$ for simplicity. 

We compare the results of choosing different $\tilde{d}$ in Section~\ref{subsec:exp_bucketize}, and found that the results are similar so long as $\tilde{d}$ does not deviate far from $\sqrt{N}$.

\mypara{``Bucketize before randomize'' (B-R) or discrete input domain.}
Alternatively, a user can perform the discretization step first, and then perform randomization.  The \SW mechanism can be naturally applied in a discrete domain as well. 
Assume input domain size is $d = |\Domain|$, discrete \SW mechanism has output domain size $\tilde{d} = |\Codomain| = d + 2b$, and randomizes input values as the following:
\begin{align*}
\forall v\in \Domain, \rdv \in \Codomain,  \;\Pr{\SW(v)=\rdv}  \!=\! \left\{
\begin{array}{lr}
\!p, & \mbox{if} \; |v-\rdv|\le b  \\
\!q, & \mbox{ otherwise, } \\
\end{array}\label{eq:squarewave_discrete}
\right.
\end{align*}
where $p = \frac{e^{\epsilon}}{(2b+1)e^{\epsilon} + d -1}$ and $q = \frac{1}{(2b+1)e^{\epsilon} + d -1}$.
In this case, one can set $b = \left\lfloor \frac{\epsilon e^\epsilon - e^\epsilon +1}{2e^\epsilon(e^\epsilon - 1 - \epsilon)} d \right\rfloor$.

The above discrete \SW mechanism can also be applied when the input domain is already discrete (e.g., age). 
We conducted experiments comparing doing R-B versus B-R, and found that they are very similar.  Detailed results are omitted due to space limitation.

\subsection{Estimating Distribution from Reports}\label{subsec:em}
The aggregator receives perturbed values and faces an estimation problem.  Without relying on any prior knowledge of the actual distribution, the natural approach is to conduct Maximum Likelihood Estimation (MLE).  
We use a $\tilde{d} \times d$ matrix $\M$ to characterize the randomization process. 
More specifically, the matrix $\M \in [0,1]^{\tilde{d} \times d}$ denotes the transformation probabilities, where $\M_{i, j}$ represents the probability of output value falling in bucket $\tilde{B}_j$, $j \in [\tilde{d}]$, given input in bucket $B_i$, $i \in [d]$, (assuming the input data fall uniformly at random within bucket $B_i$). Each column of $\M$ sums up to 1. 

\mypara{Expectation-Maximization (EM) Algorithm.}
Given the probability matrix $\M$ as defined above, we can use an Expectation-Maximization (EM) algorithm to reconstruct the distribution.
The aggregator receives $n$ randomized values from users, which are denoted as $\rdvv = \{\rdv_1, \ldots, \rdv_n\}$,  
and finds $\estx$ that maximizes the log-likelihood $L(\estx)= \ln \Pr{\rdvv | \estx}$.

Let $n_{j}$ be the number of values in $\tilde{B}_j$ is reported.
The EM algorithm for post-processing the square wave reporting is shown in Algorithm~\ref{algo:EM}.  
Note that there are existing works that use EM to post-process results of \CFO  (e.g.,~\cite{popets:FantiPE16,tifs:RenYYYYMY18}), but our proposed EM algorithm takes aggregated results and is thus more efficient.
Because of limitation of space, we omit the derivation of EM algorithm.

\begin{algorithm}[t]
\begin{algorithmic}
\STATE \textbf{Input:} $\M, \noisy$
\STATE \textbf{Output:} $\estx$
\WHILE{not converge}
\STATE E-step: $\forall i \in \{1, ..., d\}$,
        \begin{align*}
            P_i &= \estx_i \sum_{j \in [\tilde{d}]} n_{j}   \frac{\Pr{ \rdv \in \tilde{B}_j | v \in B_i , \estx}}{\Pr{\rdv \in \tilde{B}_j |\estx}}  \\
            &= \estx_i \sum_{j \in [\tilde{d}]} n_{j}   \frac{\M_{ j, i}}{\sum_{k=1}^{d} \M_{ j, k}\estx_k}
        \end{align*}
    \STATE M-step: $\forall i \in \{1, ..., d\}, $
        \begin{align*}
            \estx_i =  \frac{P_i}{\sum_{k'=1}^{d}P_{k'}}
        \end{align*}
\ENDWHILE
\STATE Return $\estx$
\end{algorithmic}
\caption{Post-processing EM algorithm }
\label{algo:EM}
\end{algorithm}

\begin{theorem}
The EM algorithm converges to the maximum-likelihood (ML) estimator of the true frequencies $\x$.
\label{thm:em_ml}
\end{theorem}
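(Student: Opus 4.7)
The plan is to reduce the claim to the classical EM convergence theory of Dempster--Laird--Rubin and Wu, after verifying two facts specific to this setting: (i) Algorithm~\ref{algo:EM} is the EM update for the natural latent-variable model behind the \SW reports, and (ii) the observed-data log-likelihood is concave on the probability simplex, so that stationary points of EM coincide with maximum-likelihood estimators.

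First I would write out the observed-data log-likelihood. Since each user's reported bucket index is drawn independently with $\Pr{\rdv \in \tilde B_j \mid \estx} = \sum_{k=1}^d \M_{j,k}\estx_k$, and $n_j$ is the count of reports falling in $\tilde B_j$, the log-likelihood is
\begin{align*}
L(\estx) \;=\; \sum_{j=1}^{\tilde d} n_j \log\Bigl(\sum_{k=1}^d \M_{j,k}\,\estx_k\Bigr).
\end{align*}
Each summand is a log of a nonnegative linear function of $\estx$, hence concave; thus $L$ is concave on the simplex $\Delta = \{\estx : \estx_i \ge 0, \sum_i \estx_i = 1\}$. This is the key structural fact I will exploit at the end.

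Next I would derive Algorithm~\ref{algo:EM} as an EM procedure. Introduce for each user $u$ the latent variable $z_u \in \{1,\dots,d\}$, namely the true input bucket. Under current estimate $\estx^{(t)}$ the posterior is
\begin{align*}
\Pr{z_u = i \mid j_u, \estx^{(t)}} \;=\; \frac{\M_{j_u,i}\,\estx^{(t)}_i}{\sum_{k} \M_{j_u,k}\,\estx^{(t)}_k}.
\end{align*}
Summing these posteriors across users gives exactly the expected count $P_i$ of Algorithm~\ref{algo:EM}. Maximizing the resulting $Q(\estx \mid \estx^{(t)}) = \sum_{u,i} \Pr{z_u = i \mid j_u, \estx^{(t)}} \log(\estx_i \M_{j_u,i})$ over $\Delta$ yields the normalization step $\estx_i \leftarrow P_i / \sum_{k'} P_{k'}$. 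So the algorithm is EM. Then the standard identity $\log \Pr{\rdvv \mid \estx} = Q(\estx \mid \estx^{(t)}) - H(\estx \mid \estx^{(t)})$, together with Gibbs' inequality applied to the entropy term $H$, gives the monotonicity lemma $L(\estx^{(t+1)}) \geq L(\estx^{(t)})$ with equality iff $\estx^{(t)}$ is a fixed point of the EM map.

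Finally I would assemble convergence. Because $L$ is bounded above by $0$ and monotone along the iterates, $L(\estx^{(t)})$ converges. The iterates lie in the compact set $\Delta$, so the sequence has accumulation points; the EM map and $Q(\cdot \mid \cdot)$ are continuous in both arguments, so by Wu's conditions every accumulation point $\estx^\star$ is a stationary point of $L$ on $\Delta$ in the KKT sense. By concavity of $L$ and convexity of $\Delta$, every such KKT point is a global maximizer, i.e., an MLE.

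The main obstacle I expect is the last step. Monotone ascent of $L$ alone does not force iterates to converge: when the MLE set is not a singleton (e.g., when $\M$ fails to have full column rank, which is plausible given the overlapping Square Wave supports) the iterates could in principle drift along a plateau of maximizers. The cleanest honest resolution is to prove, as above, that every accumulation point is an MLE, which is what the theorem asserts once interpreted correctly. A secondary technicality is boundary behavior on $\Delta$: to keep the EM multipliers well-defined and to prevent spurious fixed points from coordinates collapsing to zero prematurely, I would assume the iterates are initialized at a strictly positive $\estx^{(0)}$, and check that the update preserves positivity when no column of $\M$ is identically zero, which holds for \SW since $q > 0$.
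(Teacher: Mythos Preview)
Your proposal is correct and uses essentially the same approach as the paper: the key step in both is the observation that the observed-data log-likelihood $L(\estx)=\sum_j n_j\log\bigl(\sum_k \M_{j,k}\estx_k\bigr)$ is a sum of logs of linear functions of $\estx$ and hence concave, so EM's stationary points are global maximizers. The paper's proof is a one-line sketch that asserts concavity and defers the rest to a cited reference, whereas you spell out the EM derivation, the monotonicity argument, and Wu's conditions, and you correctly flag the non-uniqueness and boundary issues that the paper does not address.
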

\begin{proof}
\label{proof:em_ml}
To prove EM algorithm converges to the maximum likelihood estimator, it is enough to show the log-likelihood function is concave~\cite{bilmes1998gentle}.
In the context of our problem 
\begin{align*}
    L(\x) &= \ln \Pr{\rdvv | \x} = \ln \prod_{k=1}^{n}\Pr{\rdv_k |\x } \\
    &=\sum_{k=1}^{n}\ln \left(\sum_{i=1}^{d} \x_i\Pr{\rdv_k |v\in B_i } \right),
\end{align*}
where $\Pr{\rdv_k |v\in B_i}$ are constants determined by \SW method.
Thus, $L(\x)$ is a concave function.
\end{proof}

\mypara{Stopping Criteria.}
Through experiments, we have observed that the result of applying EM is highly sensitive to the parameter controlling terminating condition.  If EM terminates too early, the reconstructed distribution is still far from the true one.  If EM terminates too late, while the reconstructed distribution does fit the observation better (higher likelihood), it is also getting farther away from the true distribution to fit the noise.  
One of the most common stopping criteria for EM algorithm is checking whether the relative improvement of log-likelihood is small~\cite{popets:FantiPE16}. 
Namely, when $\left | L(\estx\iter{t+1}) - L(\estx\iter{t}) \right|< \tau$ for some small positive number $\tau$, EM algorithm stops.
The choice of $\tau$ depends on many factors, including the smoothness of distribution and the amount of noise added by the square wave distribution.
Empirically, we find that if we set $\tau$ proportional to $e^\epsilon$, EM algorithm generally performs better than the one using a fixed $\tau$.
However, on some datasets that have a smoother distribution, the recovered result still over-fits the noise.  Several of our attempts at finding a stopping condition that make EM perform well consistently did not succeed.  This motivates us to apply smoothing in EM.

\mypara{EMS Algorithm.}
By the nature of numerical domain, adjacent numerical values' frequencies should not vary dramatically. 
With this observation, we can add a smoothing step after the M-step in the EM algorithm.
We call the EM algorithm with smoothing steps as EMS algorithm.
The idea of adding smoothing step into EM algorithm dates back to 1990s~\cite{nychka1990some, silverman1990smoothed} in the context of  positron emission tomography and image reconstruction.
The authors showed that a simple local smoothing method, the weight average with binomial coefficients of a bin value and the values of its nearest neighbours, could improve the estimation dramatically.  We adopt this smoothing method. 
That is, after the M-step, the smoothing step will average each estimate with its adjacent ones with binomial coefficients $(1, 2, 1)$:
\begin{align*}
    \estx_i = \frac{1}{2}\estx_i + \frac{1}{4}\left(\estx_{i-1} + \estx_{i+1}\right).
\end{align*}
It was proved that adding the smoothing step is equivalent to adding a regularization term penalizing the spiky estimation~\cite{nychka1990some}, which can be viewed as applying Bayesian inference with a prior that prefers smoother distribution to jagged ones~\cite{ormoneit1998averaging}.
In more recent work, the idea of EMS is also applied to spatial data~\cite{fan2011local} and biophysics data~\cite{huys2009smoothing}.

\begin{figure*}[h!]
    \centering
    \begin{subfigure}[b]{0.23\textwidth}
	    \label{fig:sample_beta}
		\includegraphics[width=\textwidth]{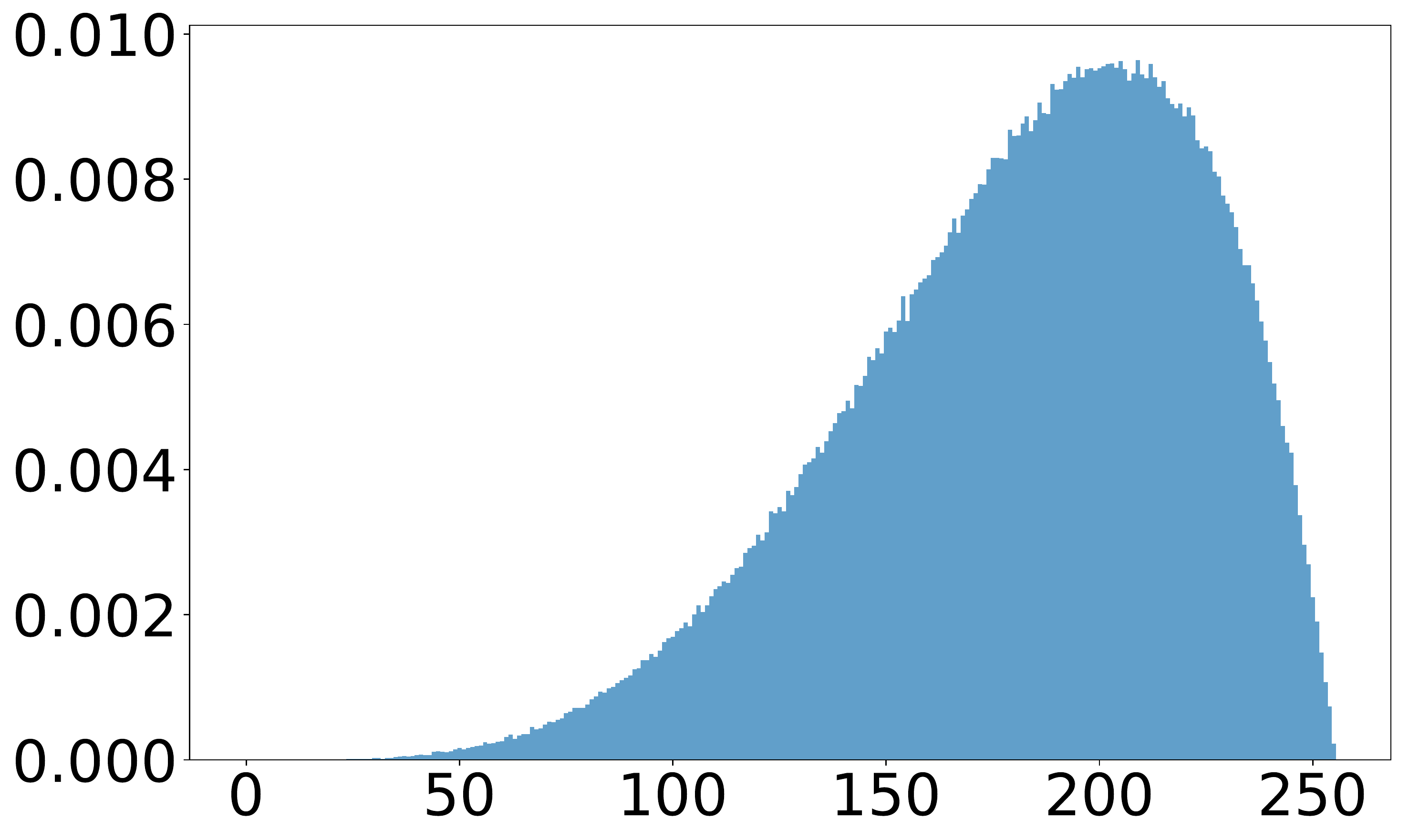}
		\vspace{-0.6cm}
		\caption{Beta(5, 2)}
	\end{subfigure}
	 \begin{subfigure}[b]{0.23\textwidth}
	    \label{fig:sample_PT}
		\includegraphics[width=\textwidth]{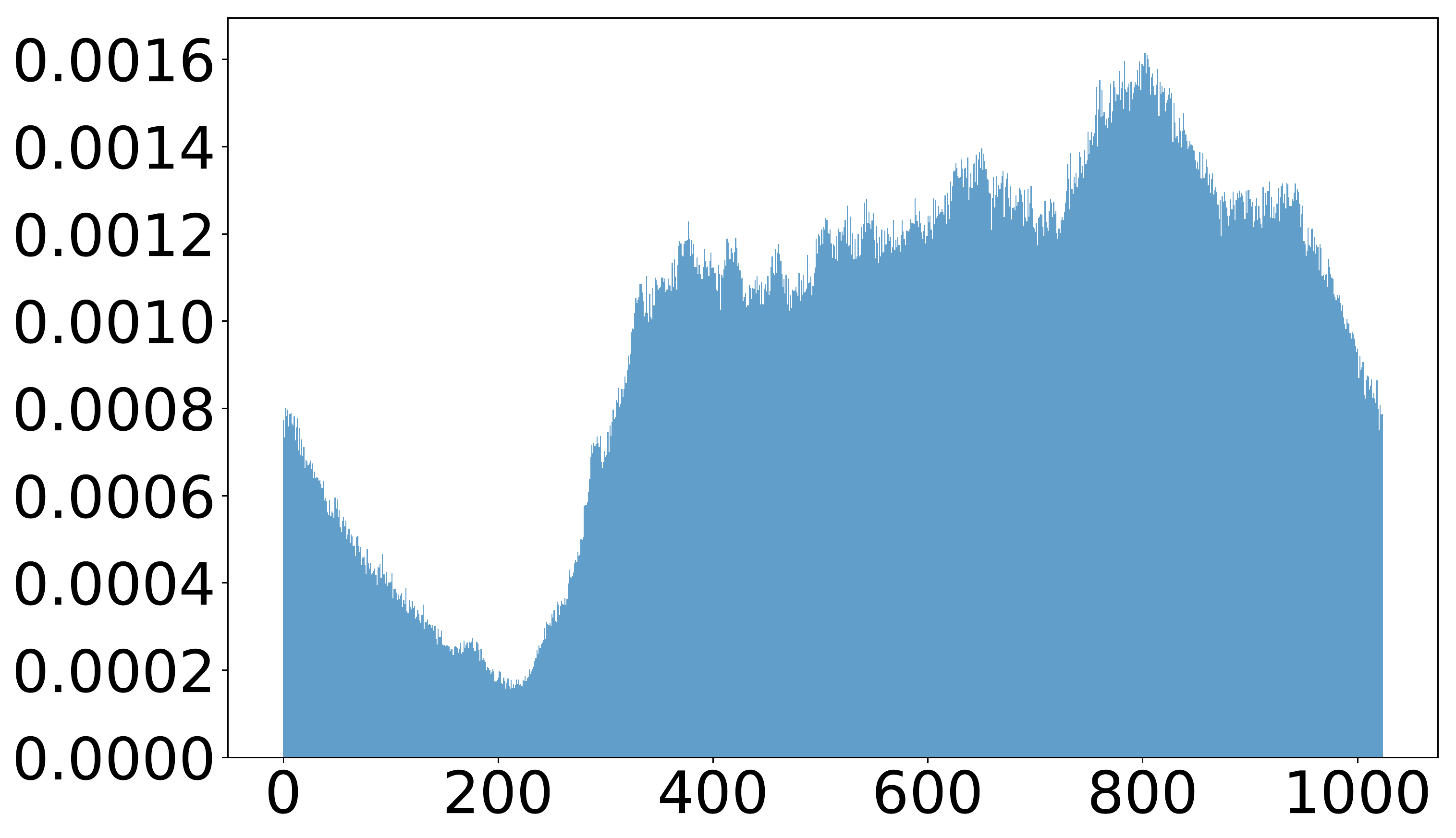}
		\vspace{-0.6cm}
		\caption{Taxi pickup time}
	\end{subfigure}
	 \begin{subfigure}[b]{0.23\textwidth}
	    \label{fig:sample_INC}
		\includegraphics[width=\textwidth]{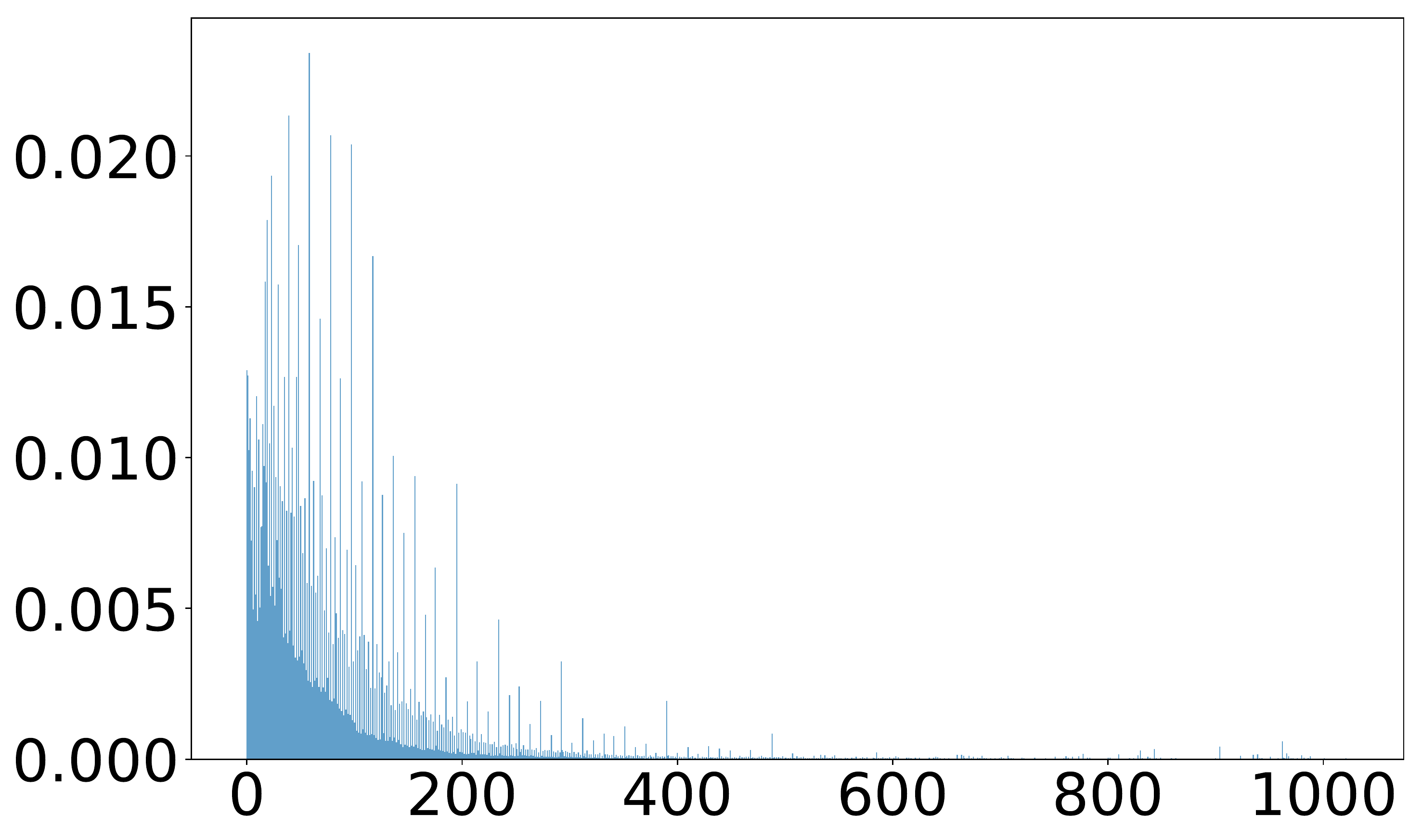}
		\vspace{-0.6cm}
		\caption{Income}
	\end{subfigure}
	 \begin{subfigure}[b]{0.23\textwidth}
	    \label{fig:sample_RT}
		\includegraphics[width=\textwidth]{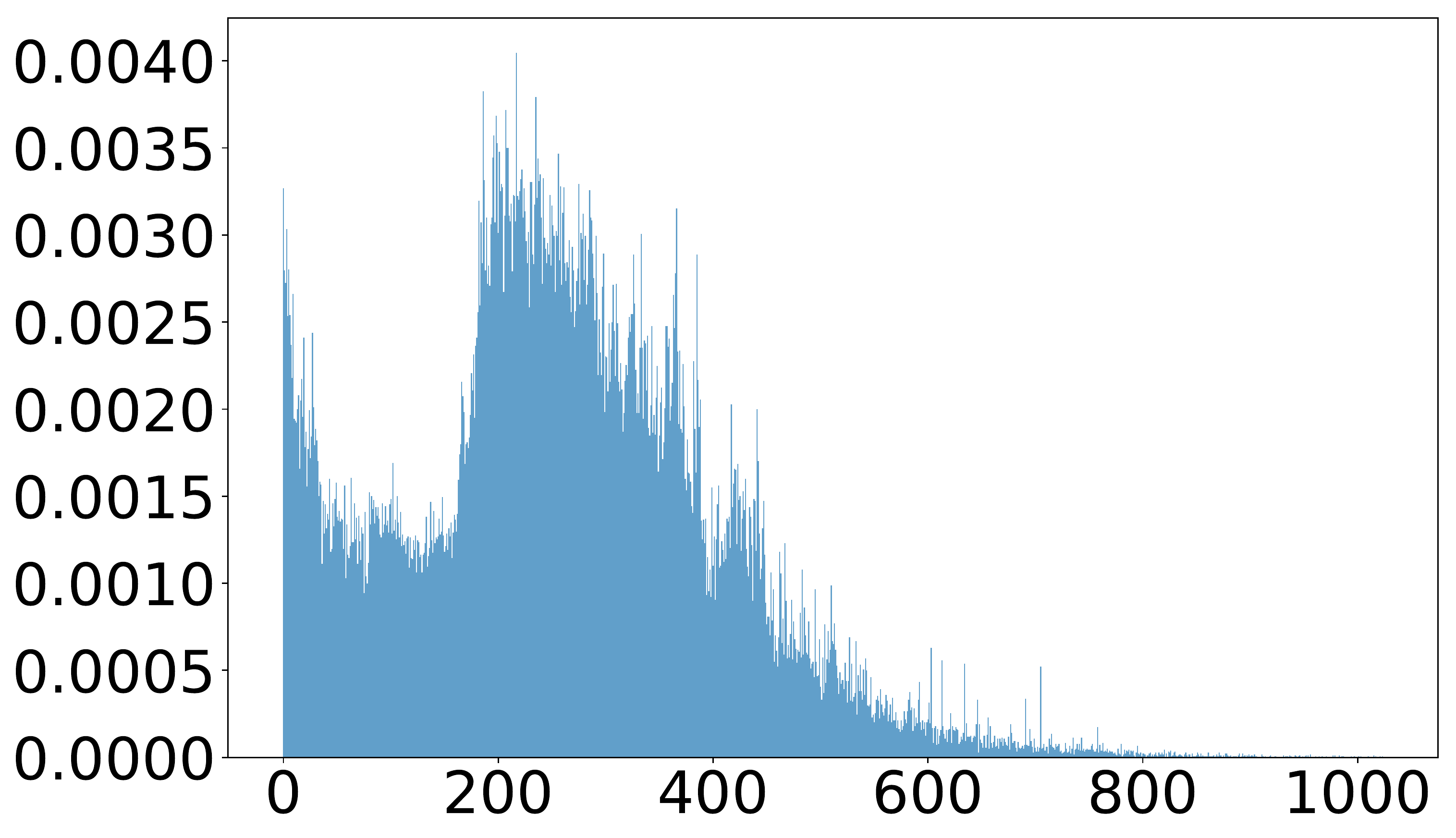}
		\vspace{-0.6cm}
		\caption{Retirement}
	\end{subfigure}\\
	\vspace{-0.4cm}
	\caption{
	Normalized frequencies of datasets for experiments.
	}
	\label{fig:samples}
\end{figure*}

\section{Experiments}
\label{sec:experiments}

\subsection{Experimental Setup}
\mypara{Datasets.}
We use the following datasets to conduct our experiments. 
One of them is synthetic, and the other three are real world datasets.
All of them consist of numerical values.
For \CFO based methods, we discretize the values to the same granularity as the output of \SW with EMS/EM method.
Also, in order to compare with HH and HH-ADMM, which have optimal branching factor close to 4~\cite{pvldb:KulkarniCD18}, we choose the granularity (number of buckets in histogram) to be power of 4.

\emph{Synthetic Beta}(5, 2) dataset.  Originally, the distribution is in the continuous domain $[0,1]$.  
    One hundred thousand samples are generated.
In experiments, we reconstruct the histogram with 256 buckets for all methods.
    
\emph{Taxi} dataset's attribute pick-up time. 
    Taxi pickup time dataset comes from 2018 January New York Taxi data~\cite{data:pt}.
    Originally, the dataset contains the pickup time in a day (in seconds).
    We map the values into $[0,1]$.
    There are $2,189,968$ samples in the dataset.
In experiments, all estimated histograms have 1024 buckets.
    
\emph{Income} dataset.
    We use the income information of the 2017 American Community Survey~\cite{data:ipums}.
    The data range is $[0, 1563000)$.
    We extract the values that are smaller than $524288$ (i.e., $2^{19}$) and map them into $[0,1]$.
There are $2,308,374$ samples after pre-processing.
    We choose to set the estimated histograms with 1024 buckets.

\emph{Retirement} dataset.
    The San Francisco employee retirement plans data~\cite{data:rt} contains integer values from $-28,700$ to $101,000$.
    We extract values that are non-negative and smaller than $60,000$, and map them into $[0,1]$.
There are $178,012$ samples after post-processing.
    In experiments, we reconstruct the histogram with 1024 buckets for all methods.

The income dataset is spiky because many people tend to report with precision up to hundreds or thousands (e.g., people are more likely to report \$3000 instead of more precise value like \$3050 or \$2980.)

\begin{figure*}[h!]
    \centering
    \begin{subfigure}[b]{\textwidth}
		\includegraphics[width=1\textwidth]{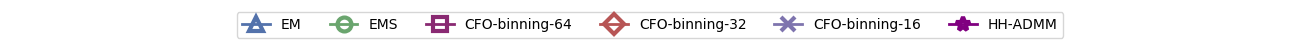}
		\vspace{0.00mm}
	\end{subfigure}\\
	\vspace{-0.5cm}
    \begin{subfigure}[b]{0.23\textwidth}
         \centering
         \includegraphics[width=\textwidth]{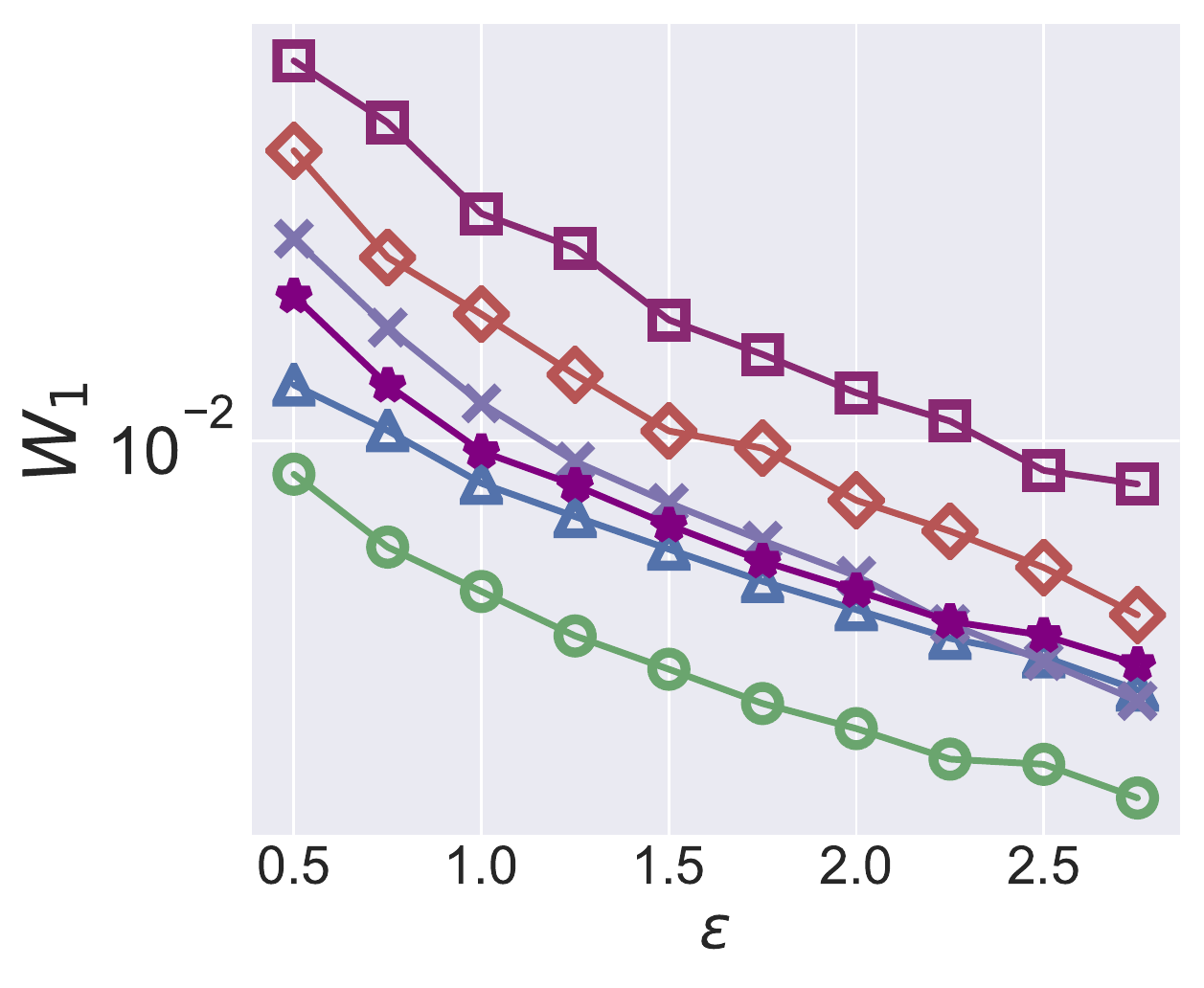}
         \vspace{-0.8cm}
         \caption{Beta(5, 2)}
         \label{wass_beta}
    \end{subfigure}
\begin{subfigure}[b]{0.23\textwidth}
		\includegraphics[width=\textwidth]{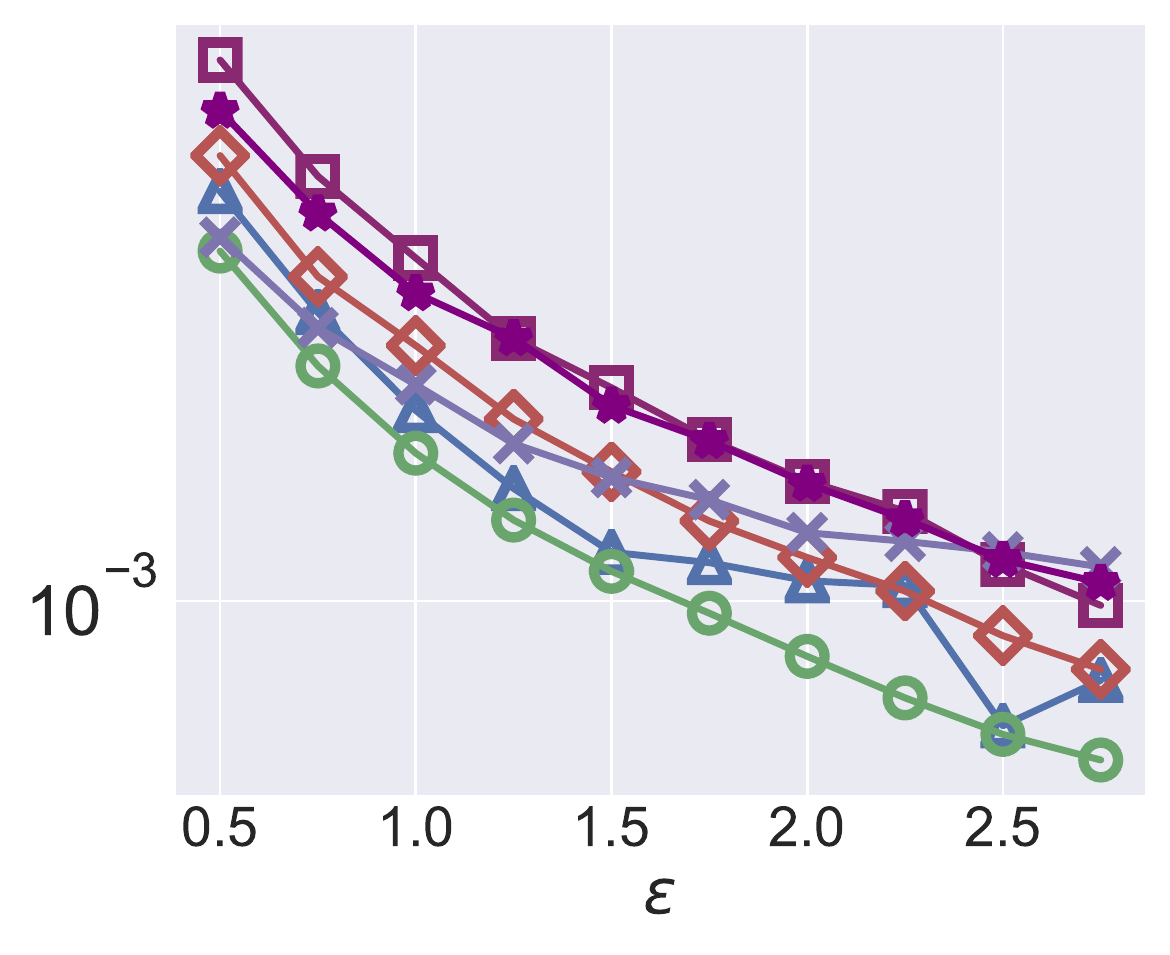}
		\vspace{-0.8cm}
		\caption{Taxi pickup time}
		\label{wass_PT}
	\end{subfigure}
\begin{subfigure}[b]{0.23\textwidth}
		\includegraphics[width=\textwidth]{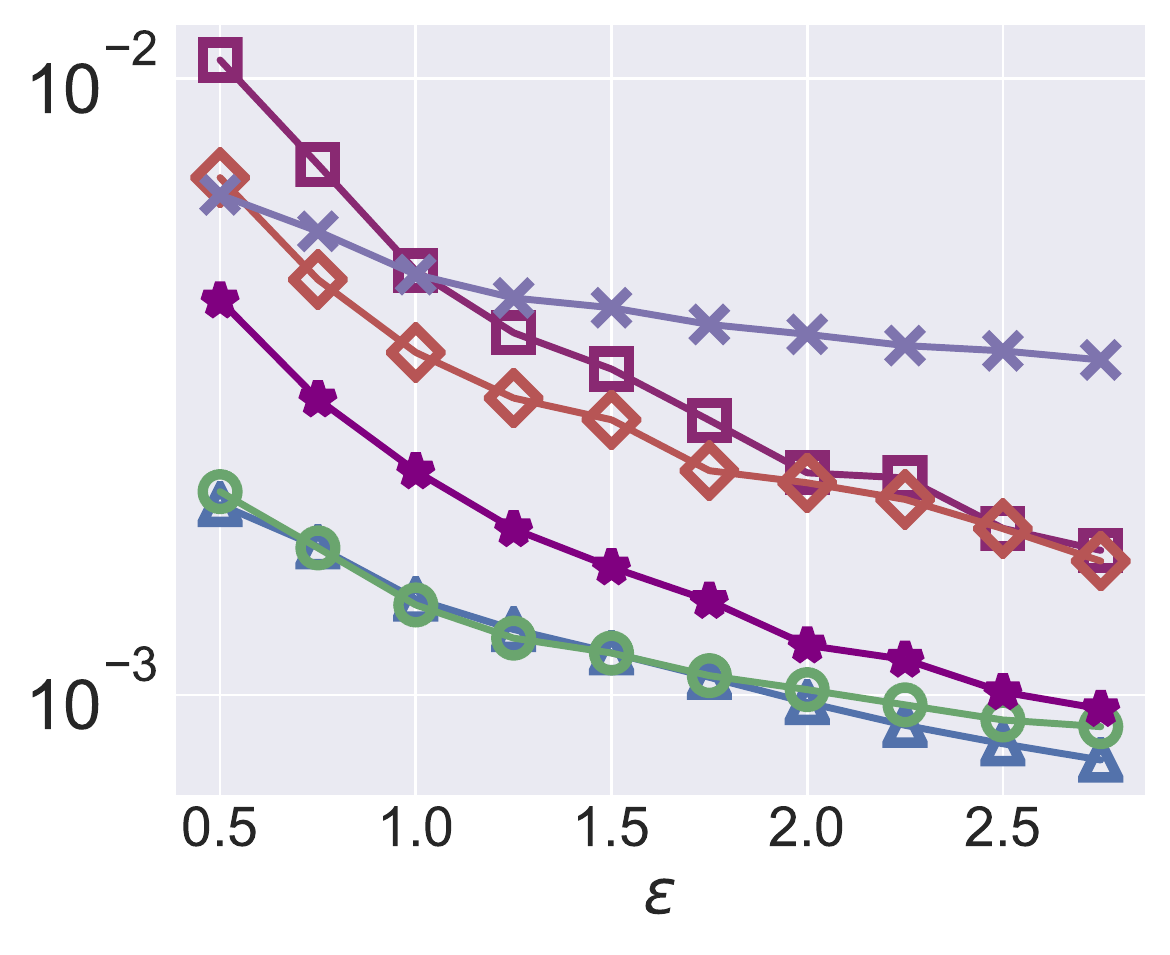}
		\vspace{-0.8cm}
		\caption{Income}
		\label{wass_INC}
	\end{subfigure}
\begin{subfigure}[b]{0.23\textwidth}
		\includegraphics[width=\textwidth]{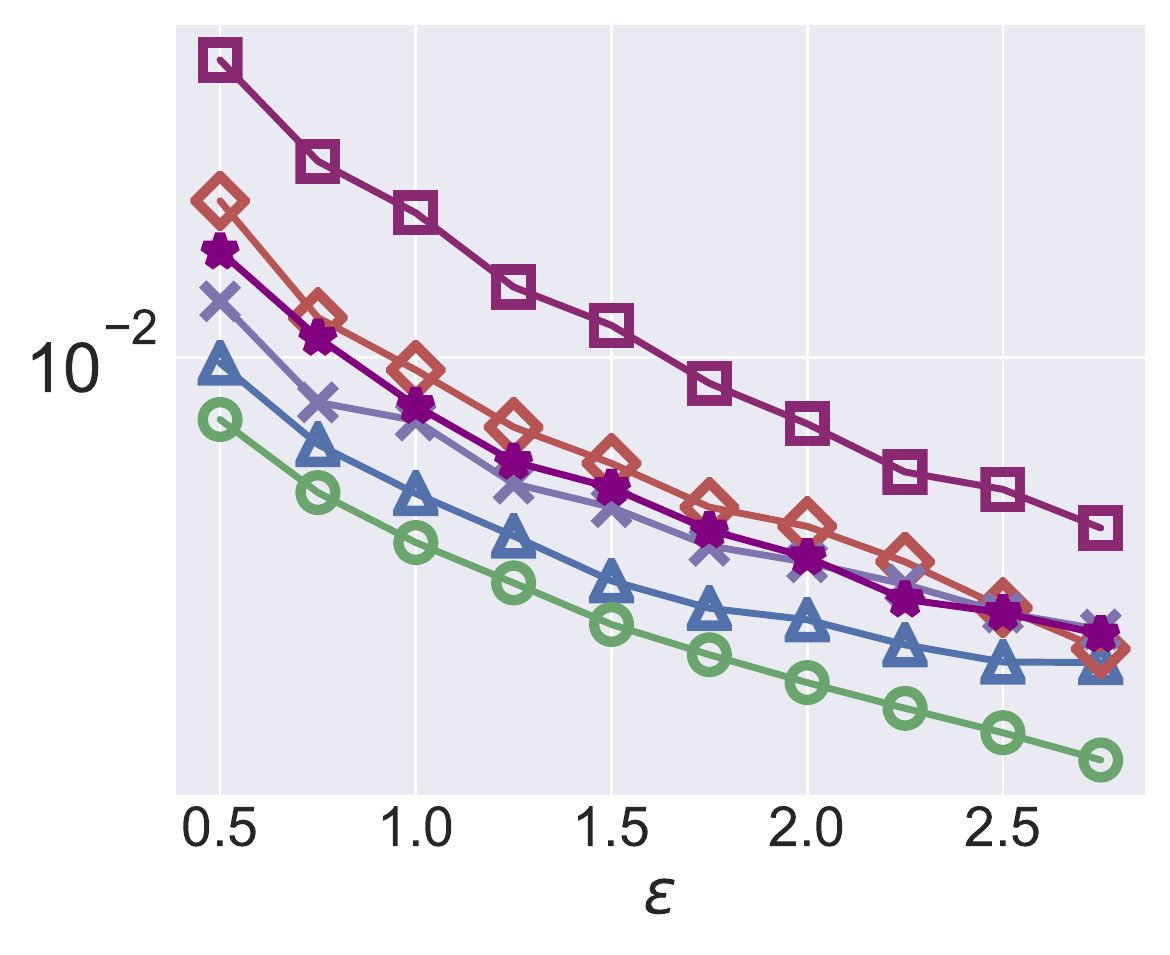}
		\vspace{-0.8cm}
		\caption{Retirement}
		\label{wass_RT}
	\end{subfigure}\\
\begin{subfigure}[b]{0.23\textwidth}
		\includegraphics[width=\textwidth]{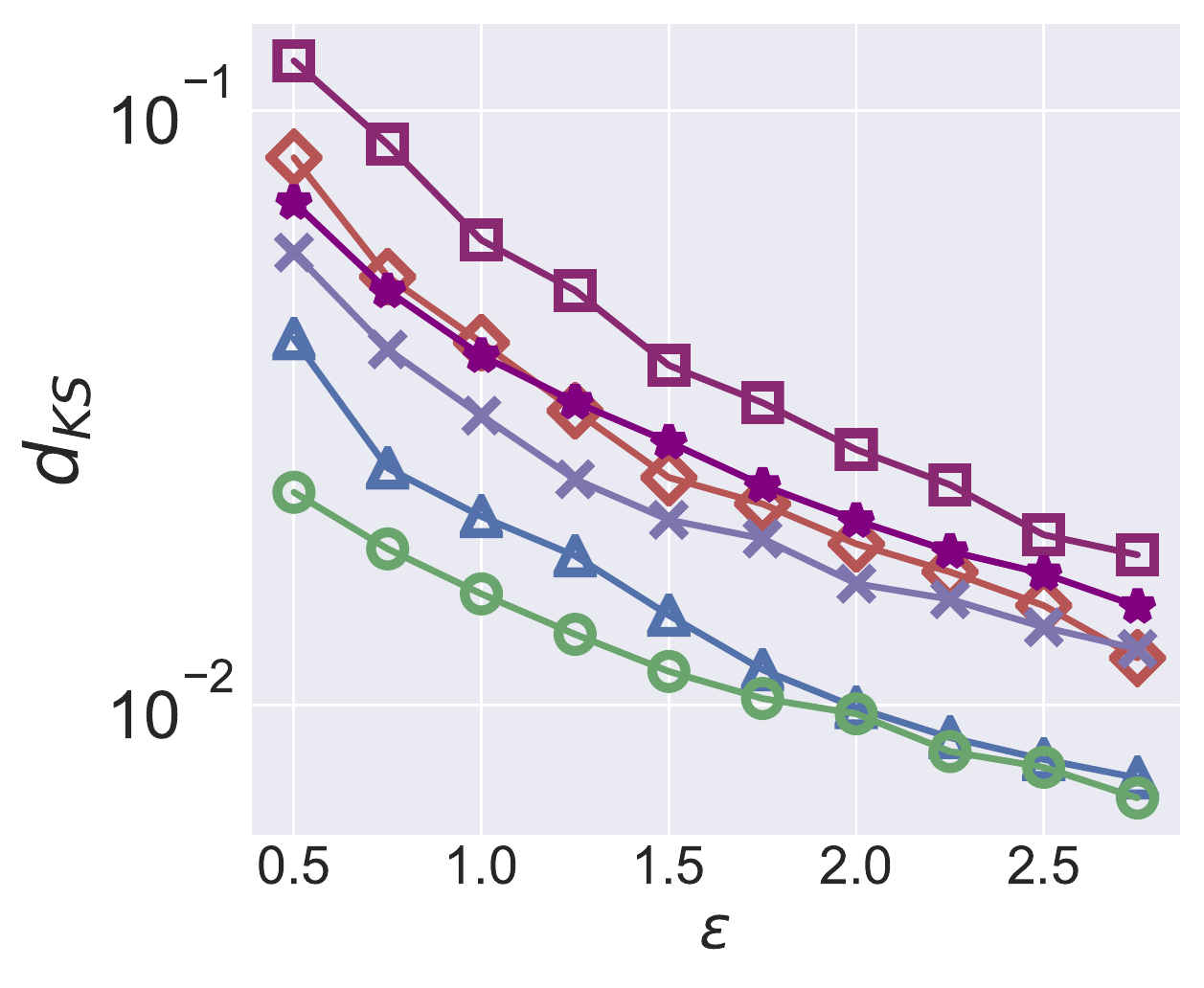}
		\vspace{-0.8cm}
		\caption{Beta(5,2)}
		\label{ks_beta}
	\end{subfigure}
\begin{subfigure}[b]{0.23\textwidth}
		\includegraphics[width=\textwidth]{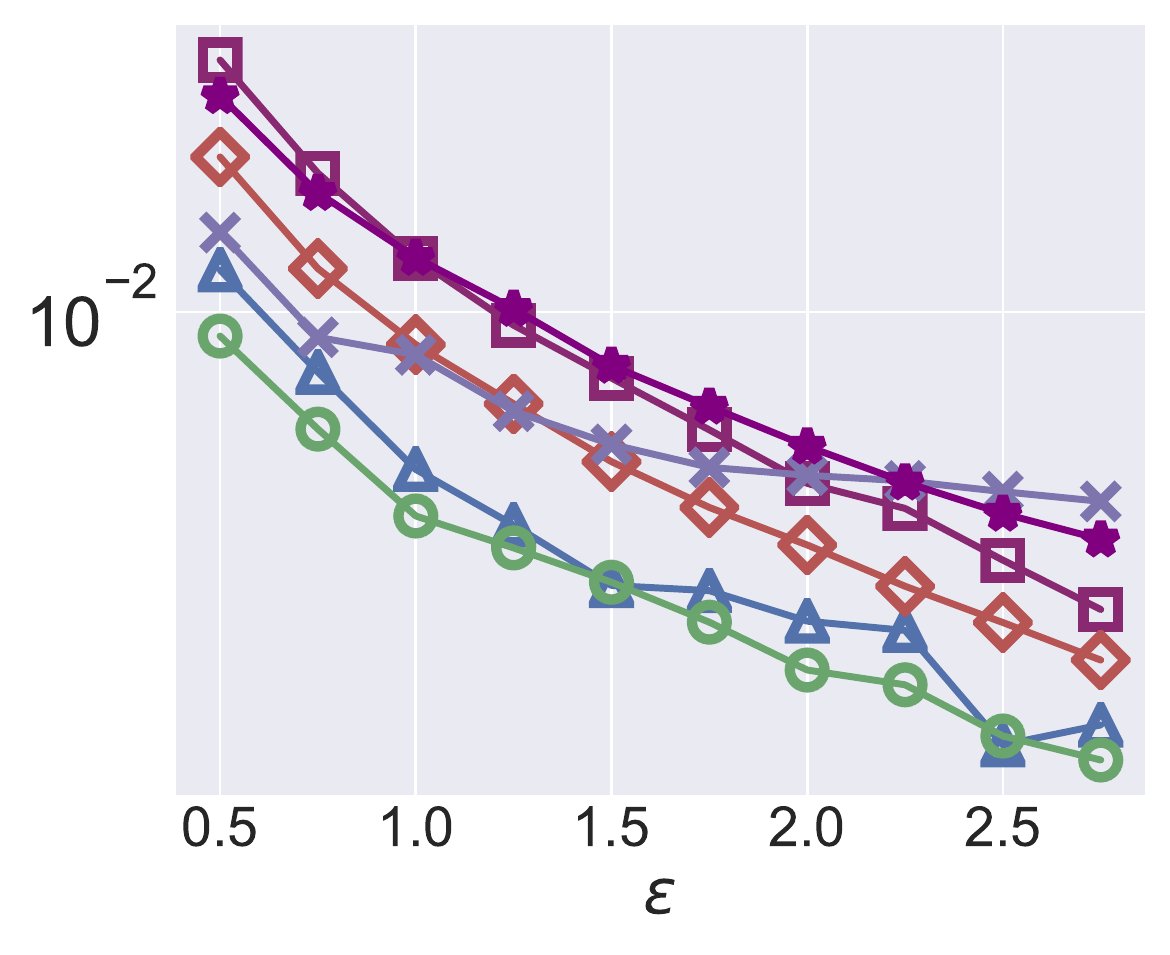}
		\vspace{-0.8cm}
		\caption{Taxi pickup time}
		\label{ks_PT}
	\end{subfigure}
\begin{subfigure}[b]{0.23\textwidth}
		\includegraphics[width=\textwidth]{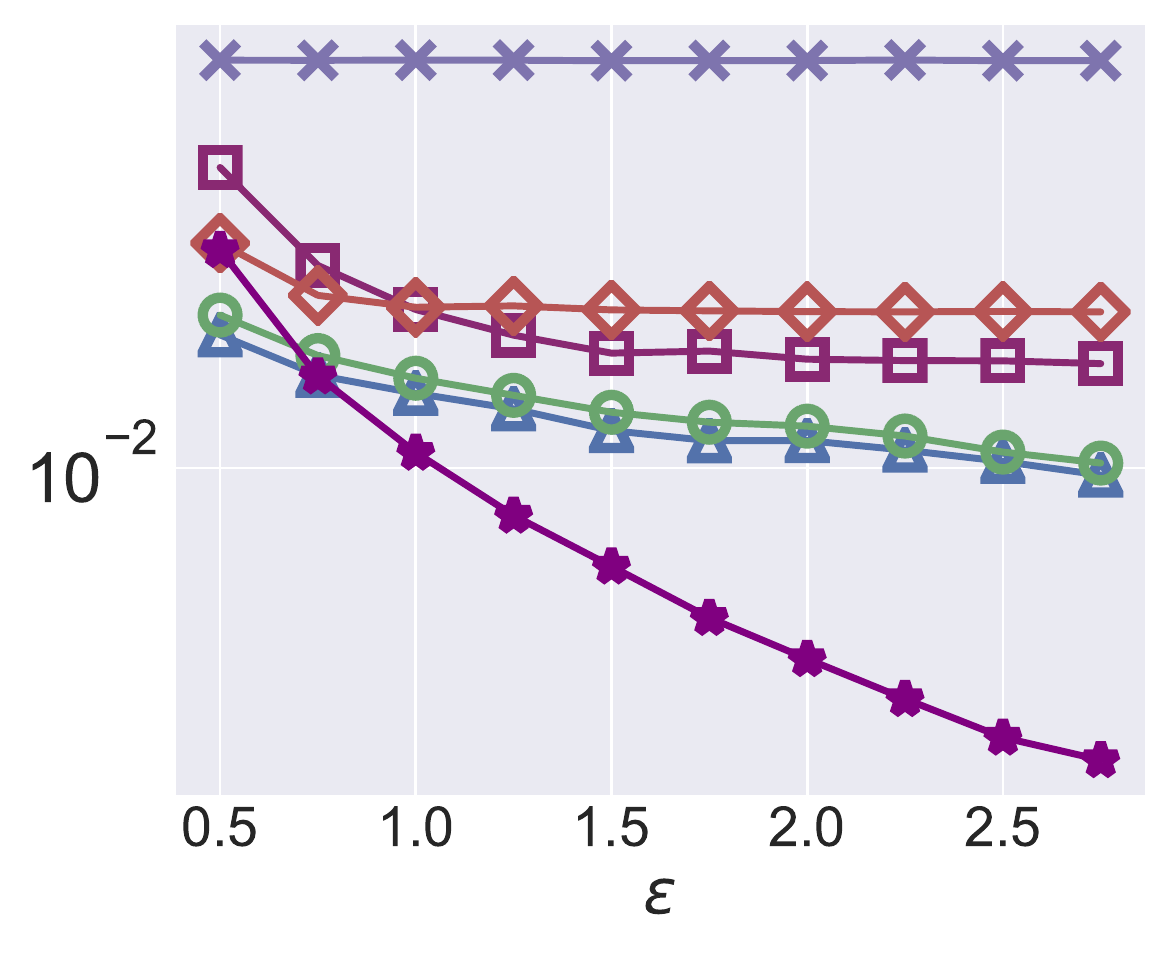}
		\vspace{-0.8cm}
		\caption{Income}
		\label{ks_INC}
	\end{subfigure}
\begin{subfigure}[b]{0.23\textwidth}
		\includegraphics[width=\textwidth]{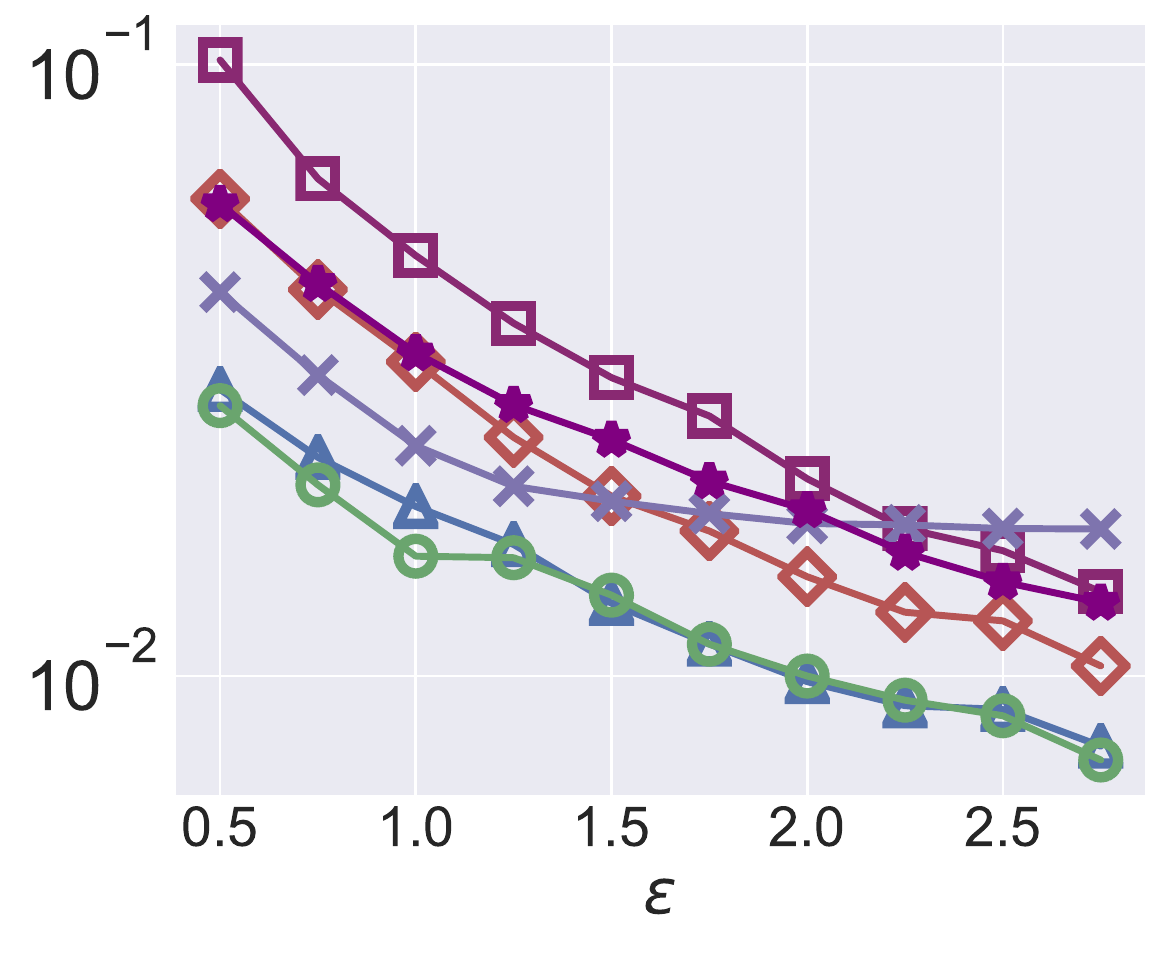}
		\vspace{-0.8cm}
		\caption{Retirement}
		 \label{ks_RT}
	\end{subfigure}\\
	\vspace{-0.3cm}
	\caption{
    Results of distribution distances (first row: Wasserstein distance, second row: KS distance), varying $\epsilon$.
    }
	\label{fig:distribution_distance}
	\vspace{-0.3cm}
\end{figure*}

\mypara{Competitors.}
In the experiments, we consider several existing methods, including methods that obtain mean (\PM, \SR) and Hierarchy-based Methods (HH, HaarHRR).  We also consider \CFO with binning methods, our proposed method HH-ADMM, and \SW with EMS/EM.
To the more specific, we summarize the methods and metrics evaluated in Table~\ref{tbl:method_metric}.

\begin{itemize}
    \item Piecewise Mechanism (\PM) and Stochastic Rounding (\SR) (See Section~\ref{sec:background:mean}) are only evaluated for mean and variance.  They were designed for mean, and we adapted them to also estimate variance. 
    \item For \CFO with binning, we partition $\Domain$ into $c$ consecutive, non-overlapping chunks.  We consider $c= 16, 32, 64$, which are the best performing $c$ values.
    \item For HH, HaarHRR and HH-ADMM, similar to~\cite{pvldb:KulkarniCD18}, we use a branching factor of $4$.  HH and HaarHRR are only evaluated for range queries as they produce estimation results with negative values, which are not valid probability distributions.  Other metrics are defined for probability distributions. 
    \item For \SW with EM and EMS as post-processing, we set $\tau = 10^{-3}e^\epsilon$ for EM and $\tau = 10^{-3}$ for EMS.
\end{itemize}

\begin{table}
\centering
\normalsize
\begin{center}
\resizebox{0.48\textwidth}{!}{\begin{tabular}{|p{3cm}|p{1.5cm}|p{1.1cm}|p{1.3cm}|p{1.1cm}|}
        \hline
        \multirow{3}{*}{\backslashbox{\quad \\ Methods }{Metrics\\}} & Wasserstein and KS distance & Range Query & Mean \& Variance & Quantile \\
        \hline
        \centering \begin{tabular}{c} \SW with EMS/EM \\(this paper)\end{tabular} &\checkmark & \checkmark & \checkmark & \checkmark \\
        \hline
        \centering \begin{tabular}{c} HH-ADMM \\ (this paper) \end{tabular} & \checkmark & \checkmark & \checkmark & \checkmark \\
        \hline
        \centering \CFO binning & \checkmark & \checkmark & \checkmark & \checkmark \\
        \hline
        \centering \begin{tabular}{c} HH~\cite{pvldb:KulkarniCD18} and \\ HaarHRR~\cite{pvldb:KulkarniCD18} \end{tabular} & & \checkmark & & \\
        \hline
        \centering \PM~\cite{icde:WangXYHSSY18} and \SR ~\cite{jasa:DuchiJW18} & & &\checkmark &  \\
        \hline
    \end{tabular}
    }
    \caption{Methods and evaluated metrics.}
    \label{tbl:method_metric}
\end{center}
\vspace{-1cm}
\end{table}

As a brief overview of the experiment results, \SW with EMS performs best with different privacy budgets and different metrics. 
HH-ADMM performs best on the income dataset under some of the metrics.
We also experimentally demonstrate the better utility of \SW over other wave shapes in \GW and the near-optimal choice of $b$ for \SW.

\mypara{Evaluation methodology.}
The algorithms are implemented using Python 3.6 and Numpy 1.15; the experiments are conducted on a server with Intel Xeon 4108 and 128GB memory.  
For each dataset and each method, we repeat the experiment $100$ times, with result mean and standard deviation reported.  The standard deviation is ignored because it is typically very small, and barely noticeable in the figures.

\subsection{Distribution Distance}
We first evaluate metrics that capture the quality of the recovered distributions.  Note that HH and haarHRR are not included (but HH-ADMM is) because HH or haarHRR does not result in valid distributions.

\mypara{Wasserstein Distance.}
Figure~\ref{wass_beta}-\ref{wass_RT} shows the Wasserstein distance $W_1$ of reconstructed distribution and the true distribution.  
In most cases, \SW with EMS performs best, followed by EM and HH-ADMM.
For the \CFO-binning methods, when $\epsilon$ is small, larger binning sizes (i.e., fewer number of bins) tend to give better performance.  The lines for larger binning sizes flatten as $\epsilon$ increases, showing that the errors are dominated by biases due to binning. 
When $\epsilon$ becomes larger, \CFO-binning with smaller bin sizes (i.e., more bins) becomes better.  
We observe that even if we could choose the optimal bin size empirically for each dataset and $\epsilon$ value, the result would still be worse than \SW with EMS.

\mypara{KS Distance.}
Figure~\ref{ks_beta}-\ref{ks_RT} show the K-S distance.  
For Beta, taxi pickup time and retirement datasets, \SW with EMS generally performs the best, followed by EM.
For the income dataset, HH-ADMM performs better than EM and EMS under this metric, especially under larger $\epsilon$ values.
This is because the income dataset is more spiky, due to the fact that people tend to report income using round numbers.  
HH-ADMM is better at preserving some of the spikes in the distribution, whereas \SW with EM or EMS will smooth the spikes.  
Since KS distance measures maximum difference at one point in CDF, HH-ADMM results in lower errors under KS distance, 
even though it produces higher error under Wasserstein Distance.
For similar reason, \CFO with larger bin size also perform poorly on the income dataset under KS distance.

\subsection{Semantic and Statistical Quantities}
We compare the results of different methods using the range query and statistic quantities including mean, variance, quantiles.
For mean and variance, we also consider the \SR and \PM, which were designed for mean estimations. 
All results are measured by Mean Absolute Error (MAE).

\begin{figure*}[h!]
    \centering
    \begin{subfigure}[b]{\textwidth}
		\includegraphics[width=1\textwidth]{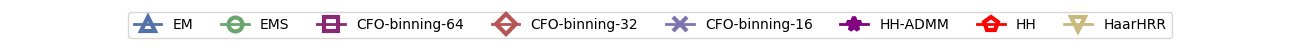}
	    \vspace{0.00mm}
	\end{subfigure}
	\\
\vspace{-0.5cm}
	\begin{subfigure}[b]{0.23\textwidth}
		\includegraphics[width=\textwidth]{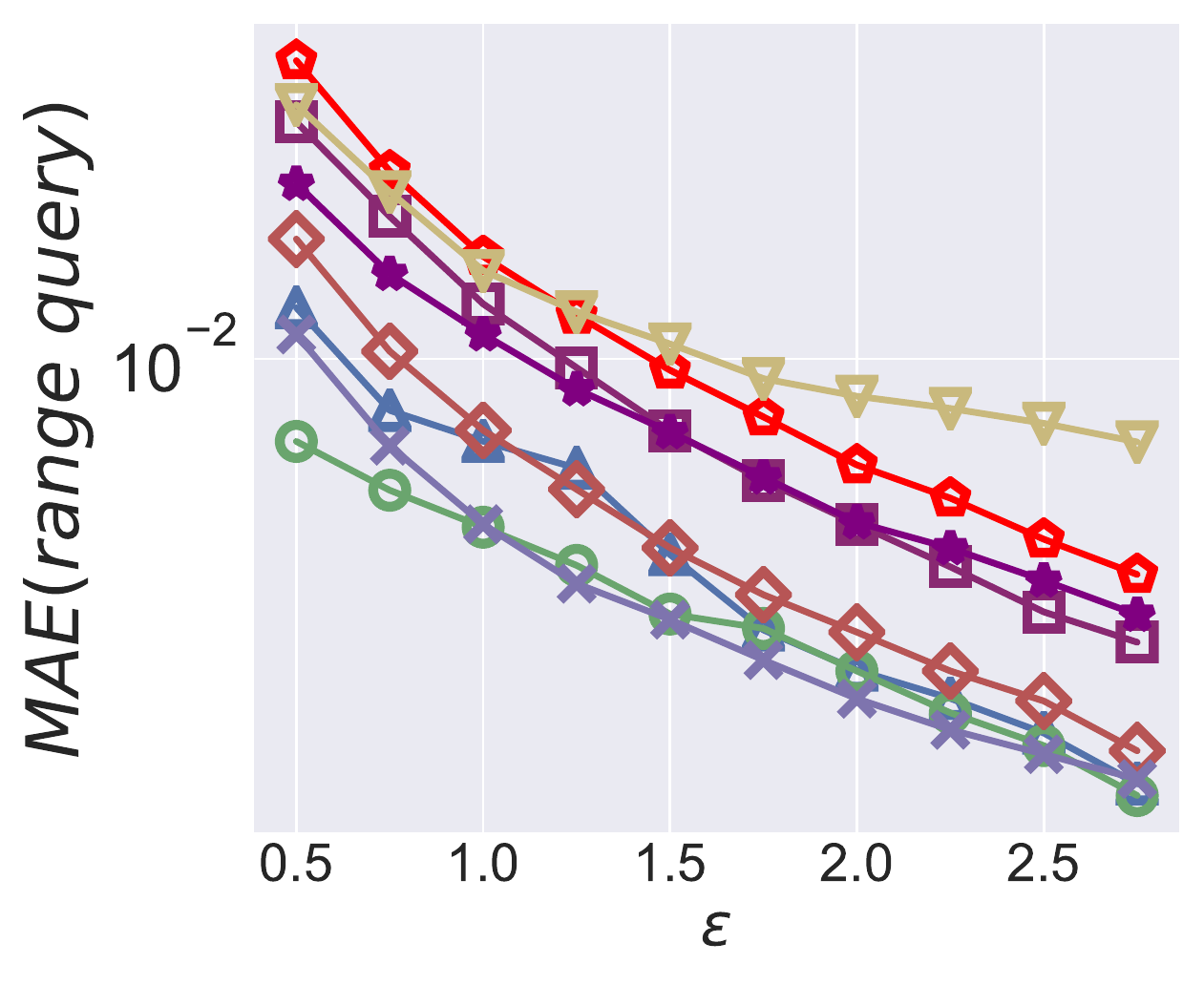}
		\vspace{-0.7cm}
		\caption{Beta(5,2) $\alpha= 0.1$}
		\label{rq_beta_0.1}
	\end{subfigure}
	\begin{subfigure}[b]{0.23\textwidth}
		\includegraphics[width=\textwidth]{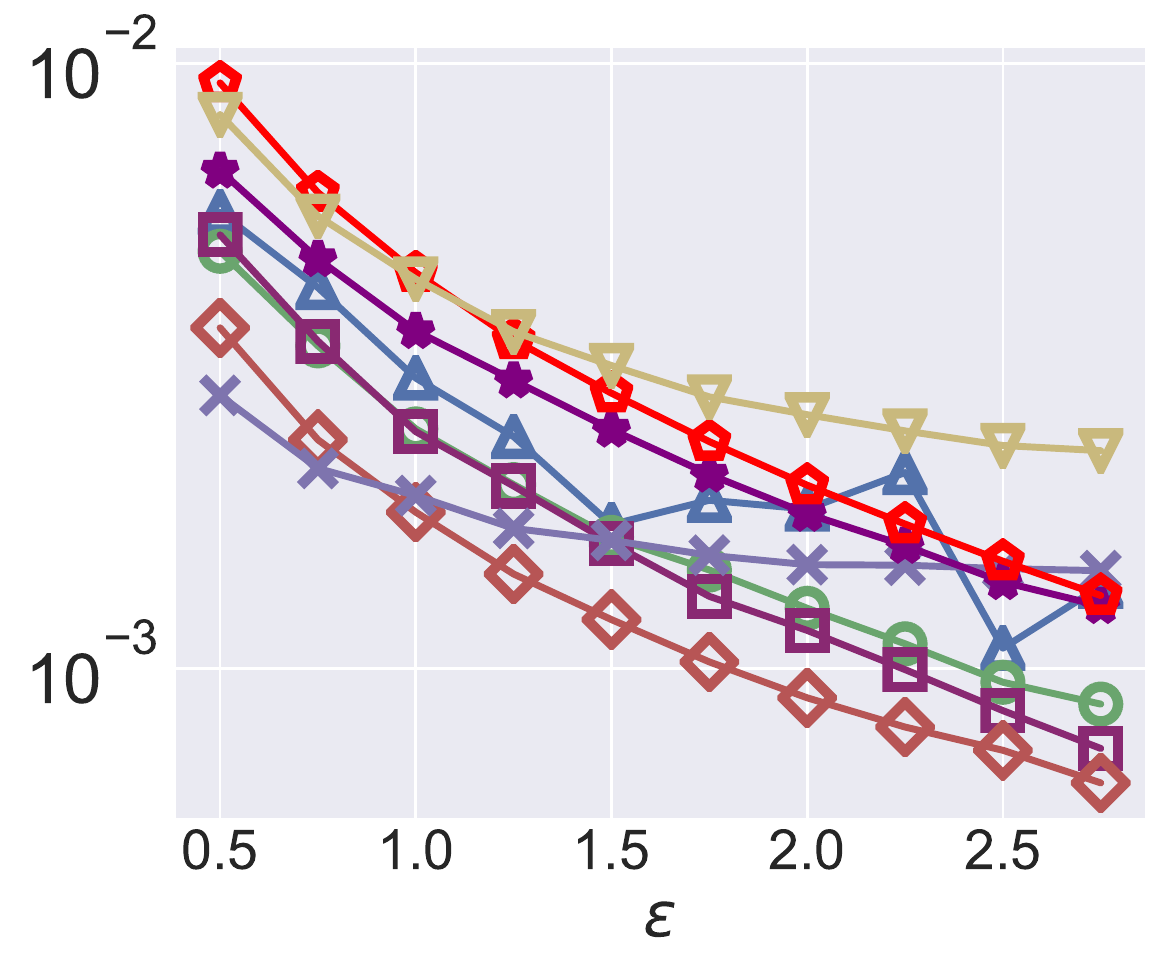}
		\vspace{-0.7cm}
		\caption{Taxi pickup time $\alpha= 0.1$}
		\label{rq_PT_0.1}
	\end{subfigure}
    \begin{subfigure}[b]{0.23\textwidth}
		\includegraphics[width=\textwidth]{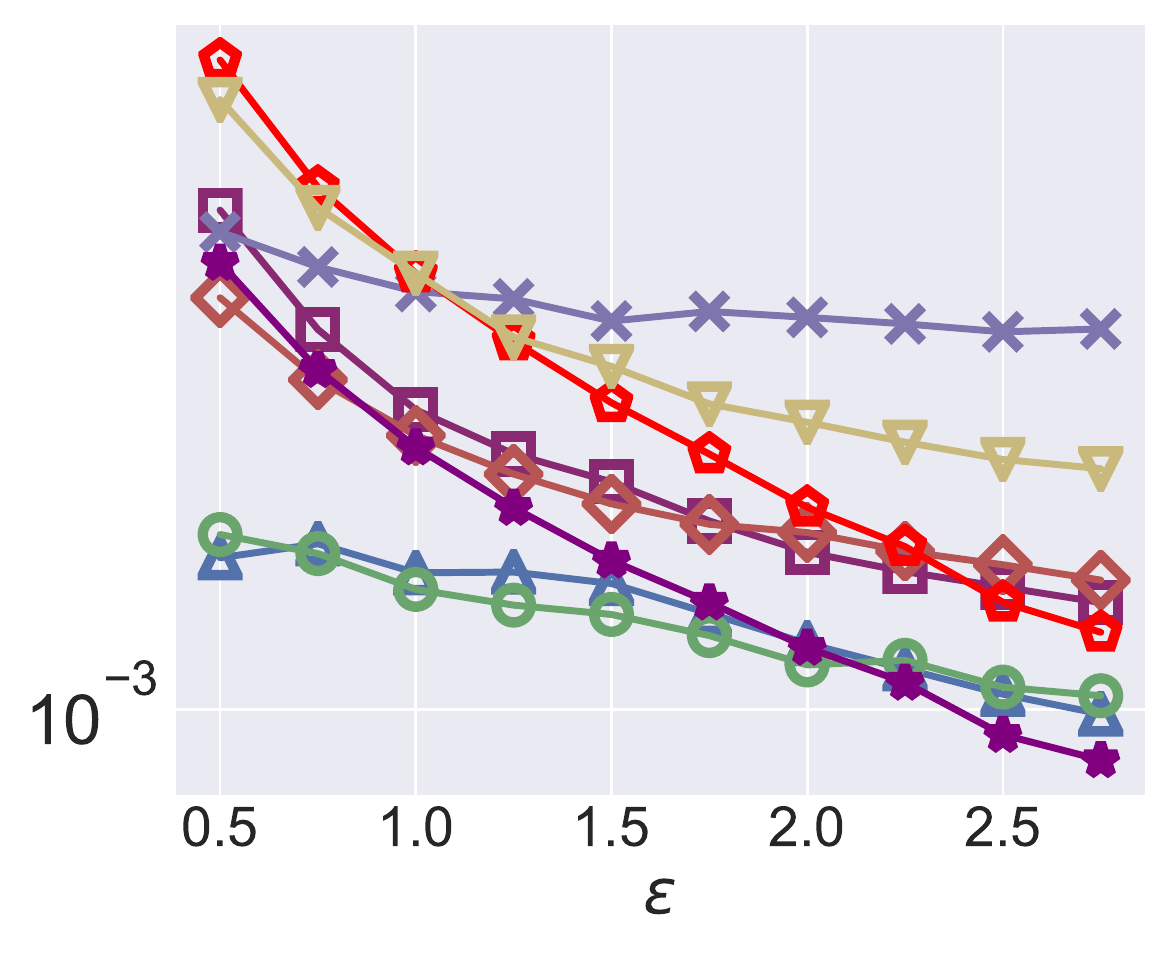}
		\vspace{-0.7cm}
		\caption{Income $\alpha= 0.1$}
		\label{rq_INC_0.1}
	\end{subfigure}
    \begin{subfigure}[b]{0.23\textwidth}
		\includegraphics[width=\textwidth]{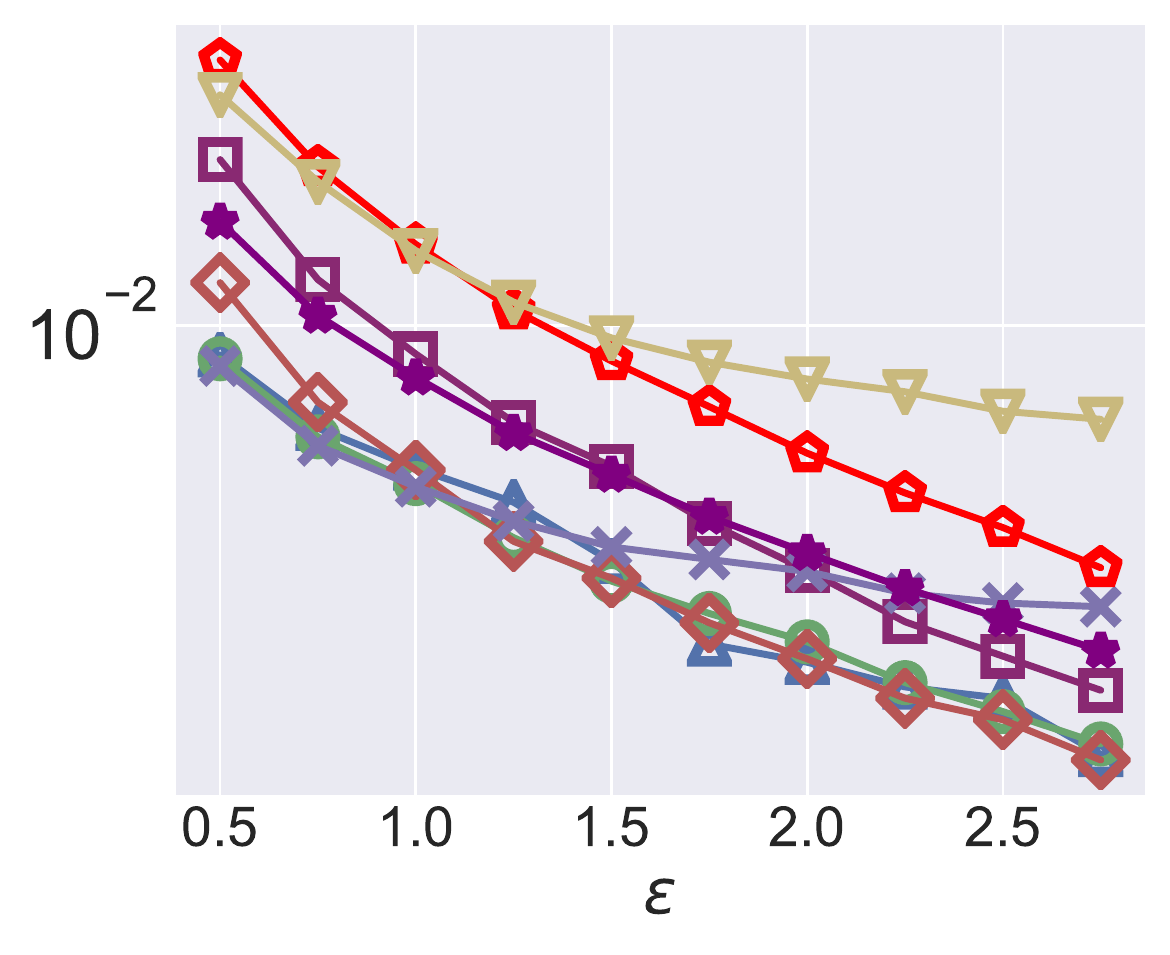}
		\vspace{-0.7cm}
		\caption{Retirement $\alpha= 0.1$}
		\label{rq_RT_0.1}
	\end{subfigure}\\
\begin{subfigure}[b]{0.23\textwidth}
		\includegraphics[width=\textwidth]{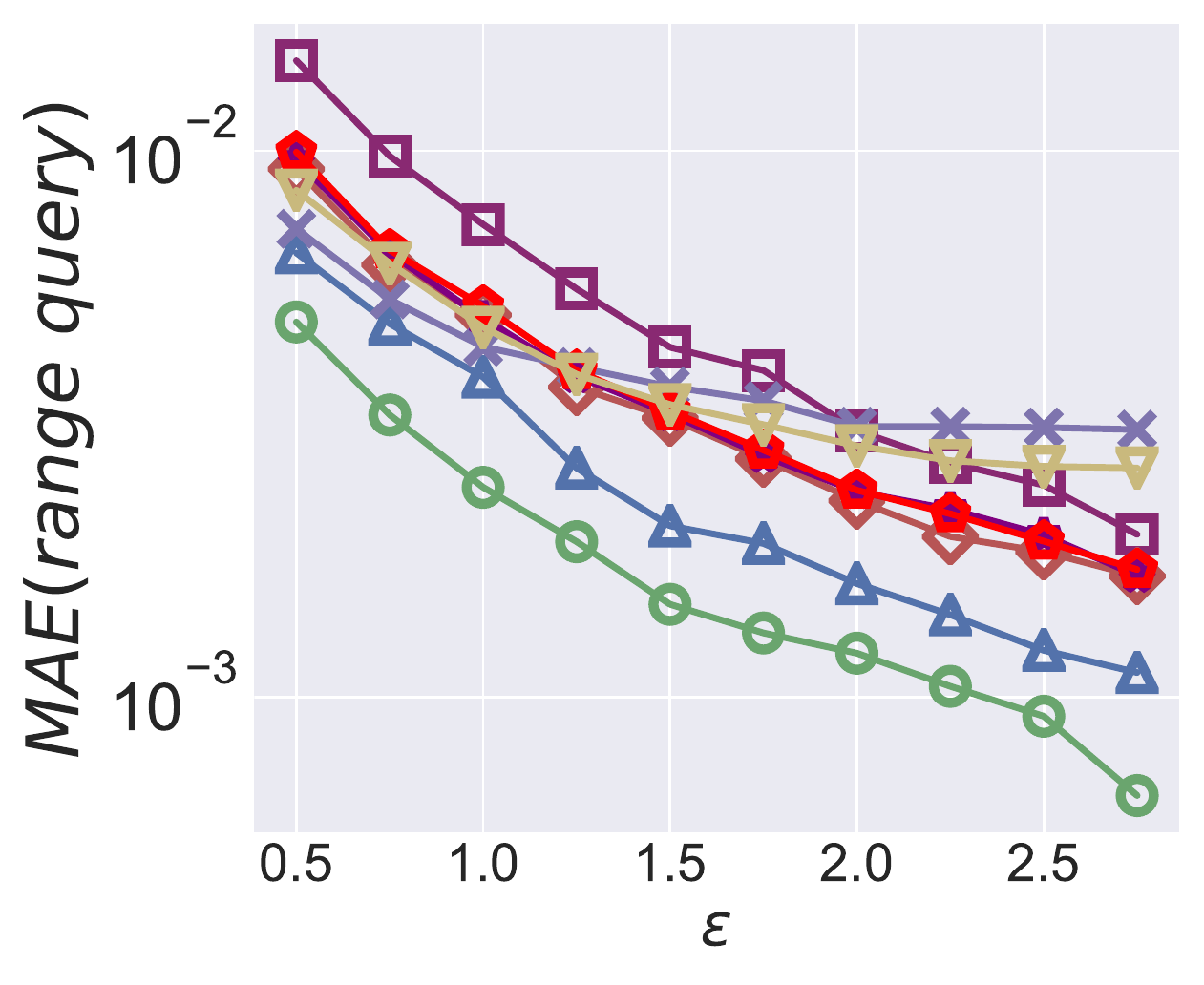}
		\vspace{-0.7cm}
		\caption{Beta(5,2) $\alpha= 0.4$}
		\label{rq_beta_0.4}
	\end{subfigure}
	\begin{subfigure}[b]{0.23\textwidth}
		\includegraphics[width=\textwidth]{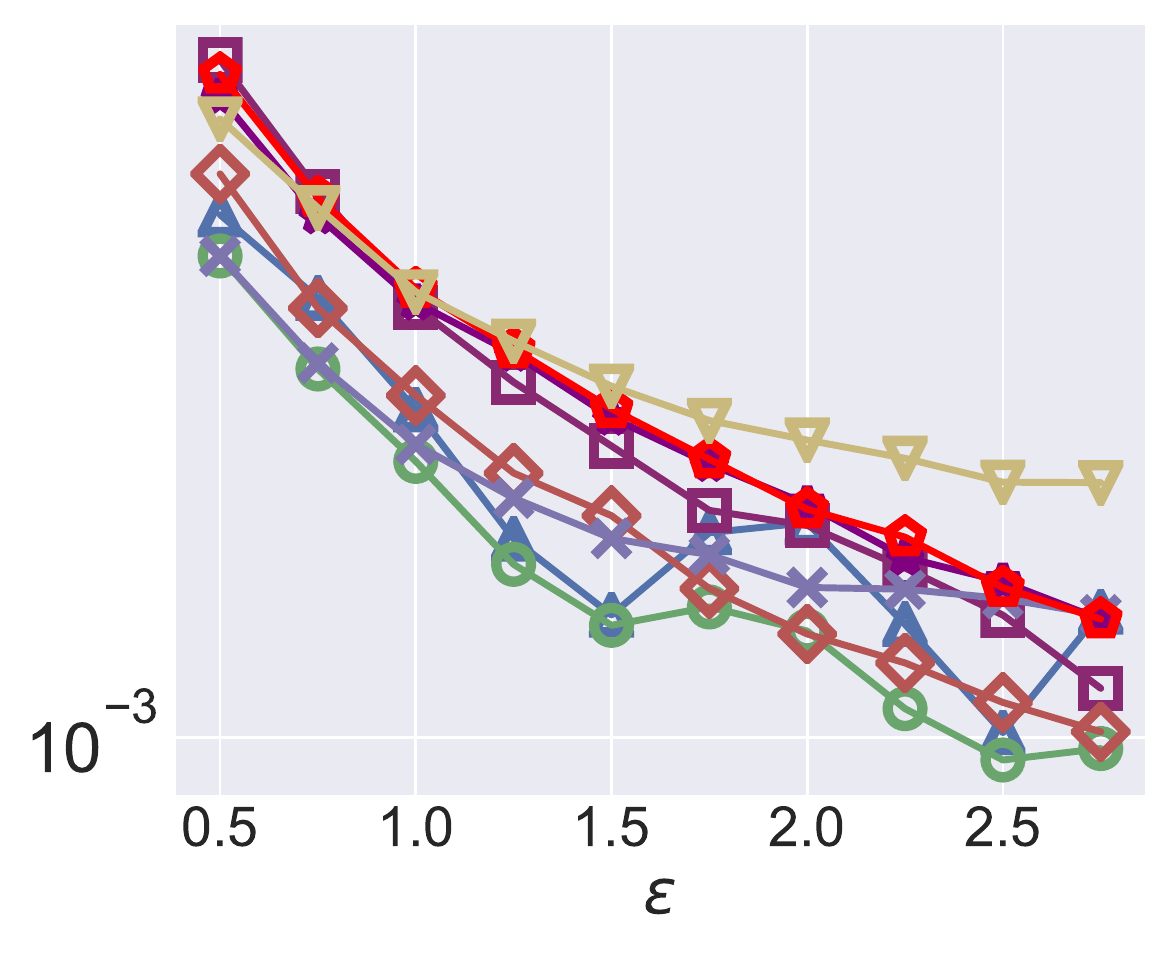}
		\vspace{-0.7cm}
		\caption{Taxi pickup time $\alpha= 0.4$}
		\label{rq_PT_0.4}
	\end{subfigure}
    \begin{subfigure}[b]{0.23\textwidth}
		\includegraphics[width=\textwidth]{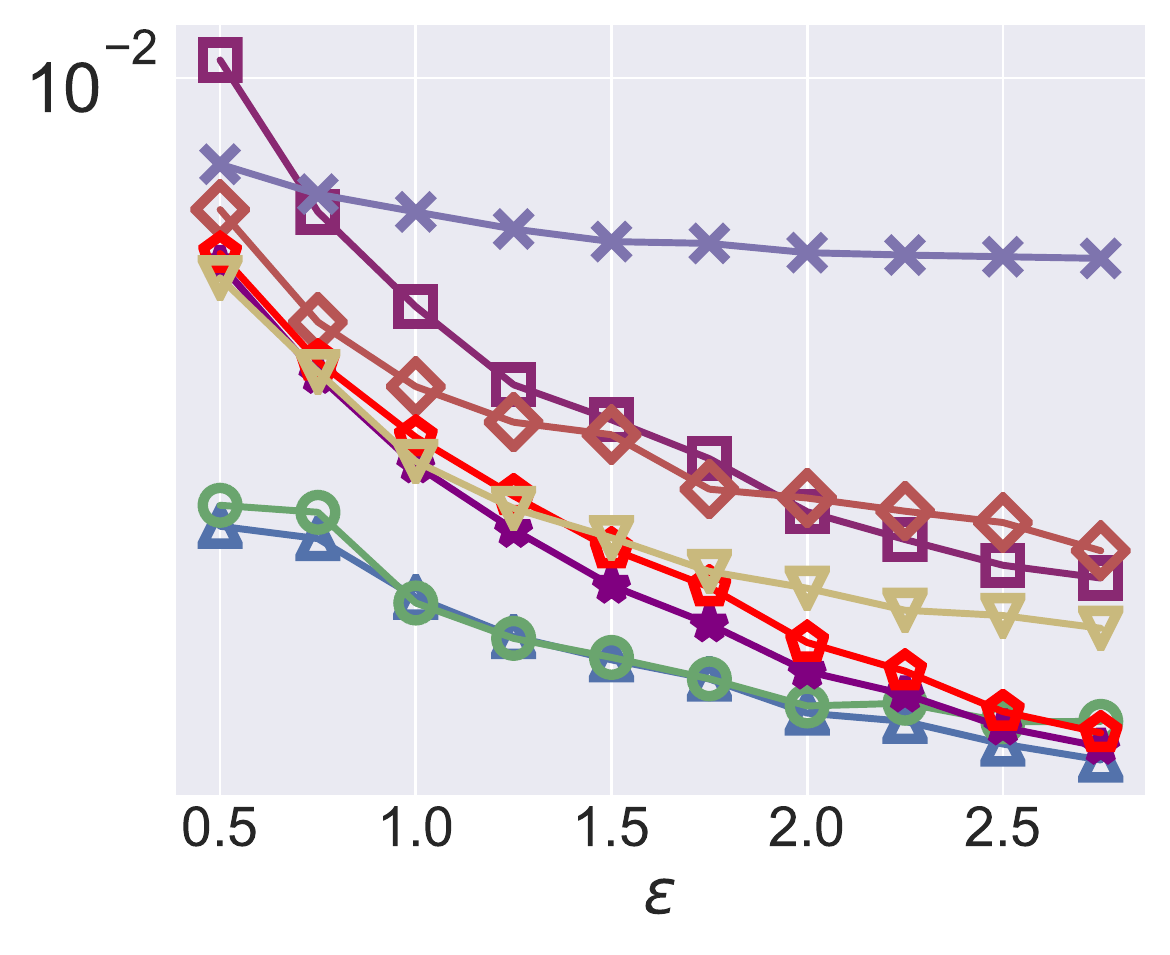}
		\vspace{-0.7cm}
		\caption{Income $\alpha= 0.4$}
		\label{rq_INC_0.4}
	\end{subfigure}
    \begin{subfigure}[b]{0.23\textwidth}
		\includegraphics[width=\textwidth]{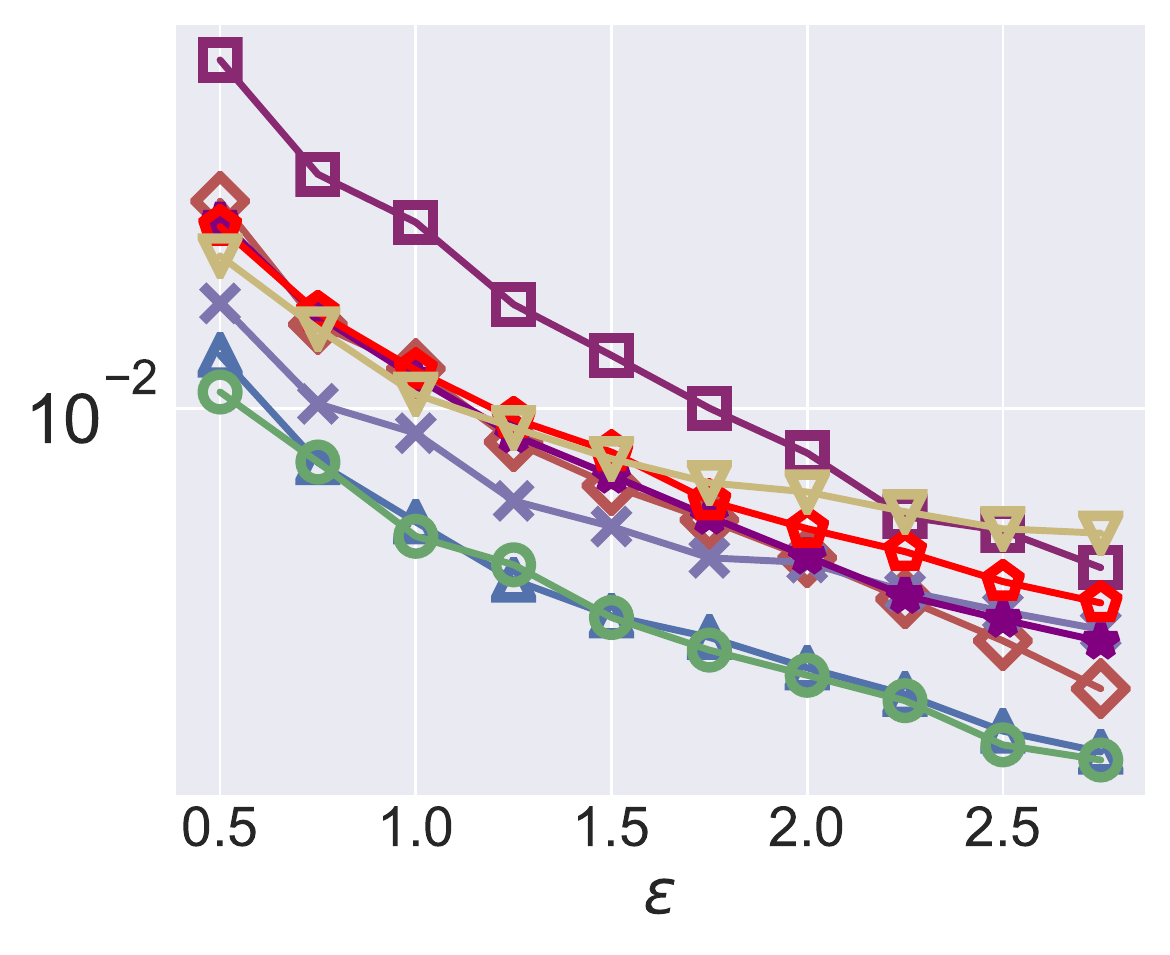}
		\vspace{-0.7cm}
		\caption{Retirement $\alpha= 0.4$}
		\label{rq_RT_0.25}
	\end{subfigure}\\
	\vspace{-0.3cm}
	\caption{
	MAE of random range query with range $\alpha=0.1$ (first row) and $\alpha=0.4$ (second row).
	}
	\label{rq}
	\vspace{-0.2cm}
\end{figure*}

\mypara{Range Query.}
The queries are randomly generated, but with fixed range sizes.  
Denote the left and right of the range as $i$ and $i+\alpha$, we randomly generate $i\in [0, 1-\alpha]$ with $\alpha = 0.1$ and $0.4$.  
The results in Figure~\ref{rq} shows that \SW with EMS outperforms HH and HaarHRR~\cite{pvldb:KulkarniCD18}.
In fact, it is the best in most cases, except when $\alpha=0.1$ in the taxi pickup time dataset and in low privacy region of income dataset. 
However, \SW with EMS has performance similar to \CFO-binning-64 when $\alpha=0.1$ and still outperforms all the hierarchy-based approaches in taxi pickup time dataset.
For the income dataset, EM and EMS performs well in high privacy range (i.e., $\epsilon \leq 2$), while HH-ADMM performs best in low privacy range, followed by EM and EMS.

\mypara{Mean Estimation.}
Results for mean estimation are showed in Figure~\ref{mean_beta}-\ref{mean_RT}.
\SR performs better than \PM when $\epsilon$ is small, but worse when $\epsilon$ is larger.  This is consistent with the analysis in~\cite{icde:WangXYHSSY18}.
Note that \SR and \PM devote all privacy budget to estimate mean.  While \SW with EMS can estimate the full distribution, it performs comparable to the best of \SR and \PM for estimating the mean.
We also see that HH-ADMM has better performances than all other \CFO-binning methods, but is still inferior to \SW with EMS.

\begin{figure*}[h]
    \centering
    \begin{subfigure}[b]{\textwidth}
		\includegraphics[width=1\textwidth]{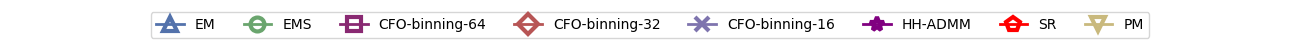}
		\vspace{0.00mm}
	\end{subfigure}\\
	\vspace{-0.4cm}
\begin{subfigure}[b]{0.23\textwidth}
		\includegraphics[width=\textwidth]{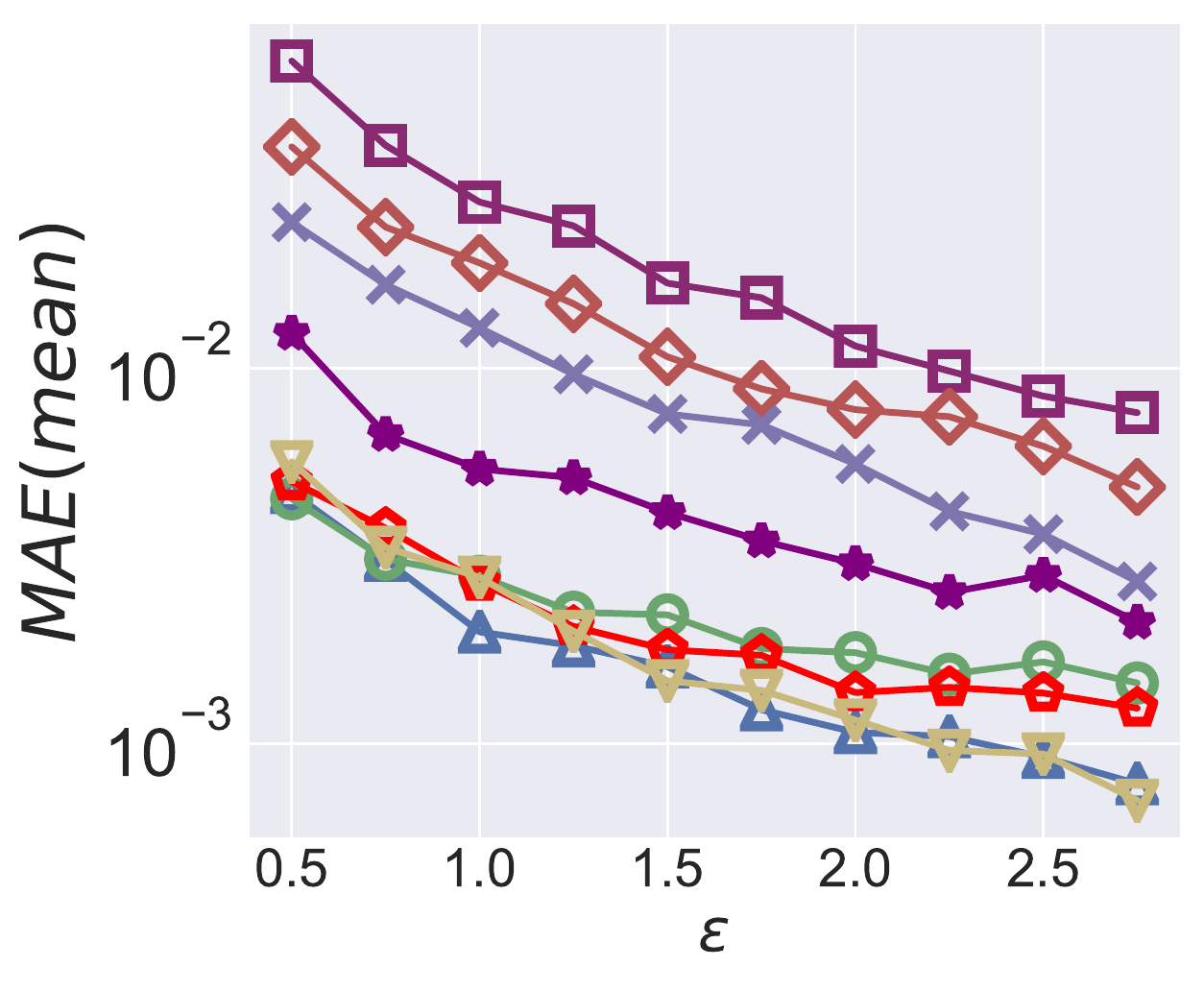}
		\vspace{-0.8cm}
		\caption{Beta(5,2)}
		\label{mean_beta}
	\end{subfigure}
\begin{subfigure}[b]{0.23\textwidth}
		\includegraphics[width=\textwidth]{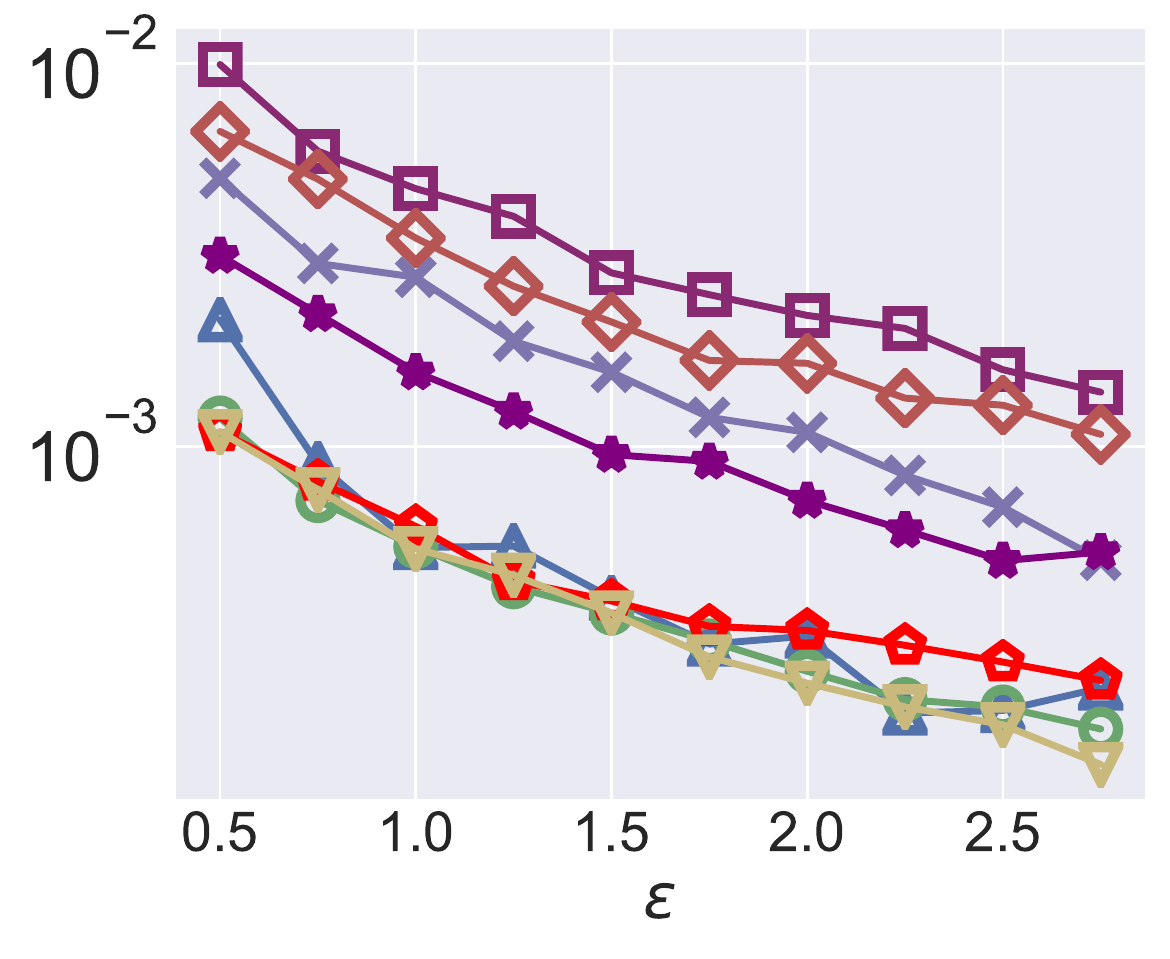}
		\vspace{-0.8cm}
		\caption{Taxi pickup time}
		\label{mean_PT}
	\end{subfigure}
\begin{subfigure}[b]{0.23\textwidth}
		\includegraphics[width=\textwidth]{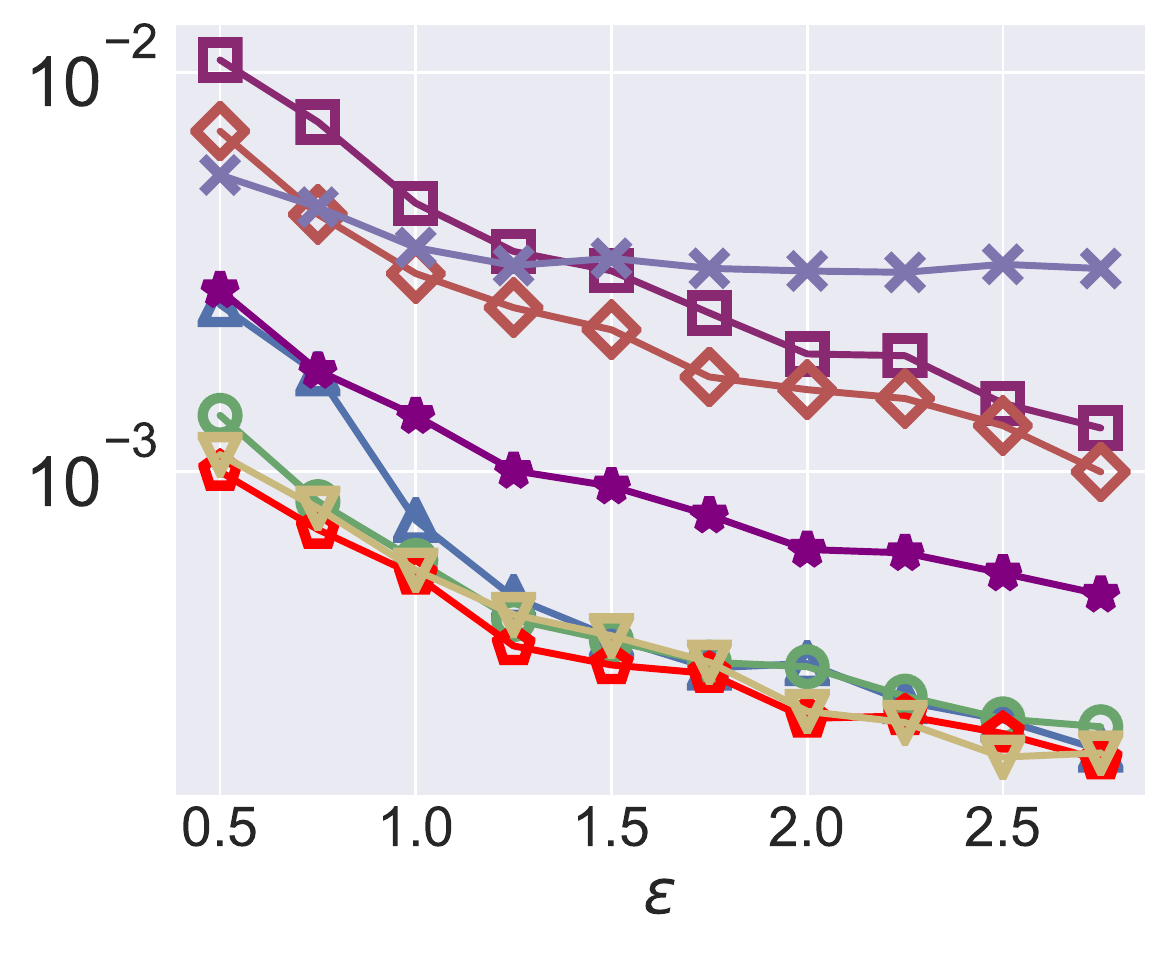}
		\vspace{-0.8cm}
		\caption{Income}
		\label{mean_INC}
	\end{subfigure}
\begin{subfigure}[b]{0.23\textwidth}
		\includegraphics[width=\textwidth]{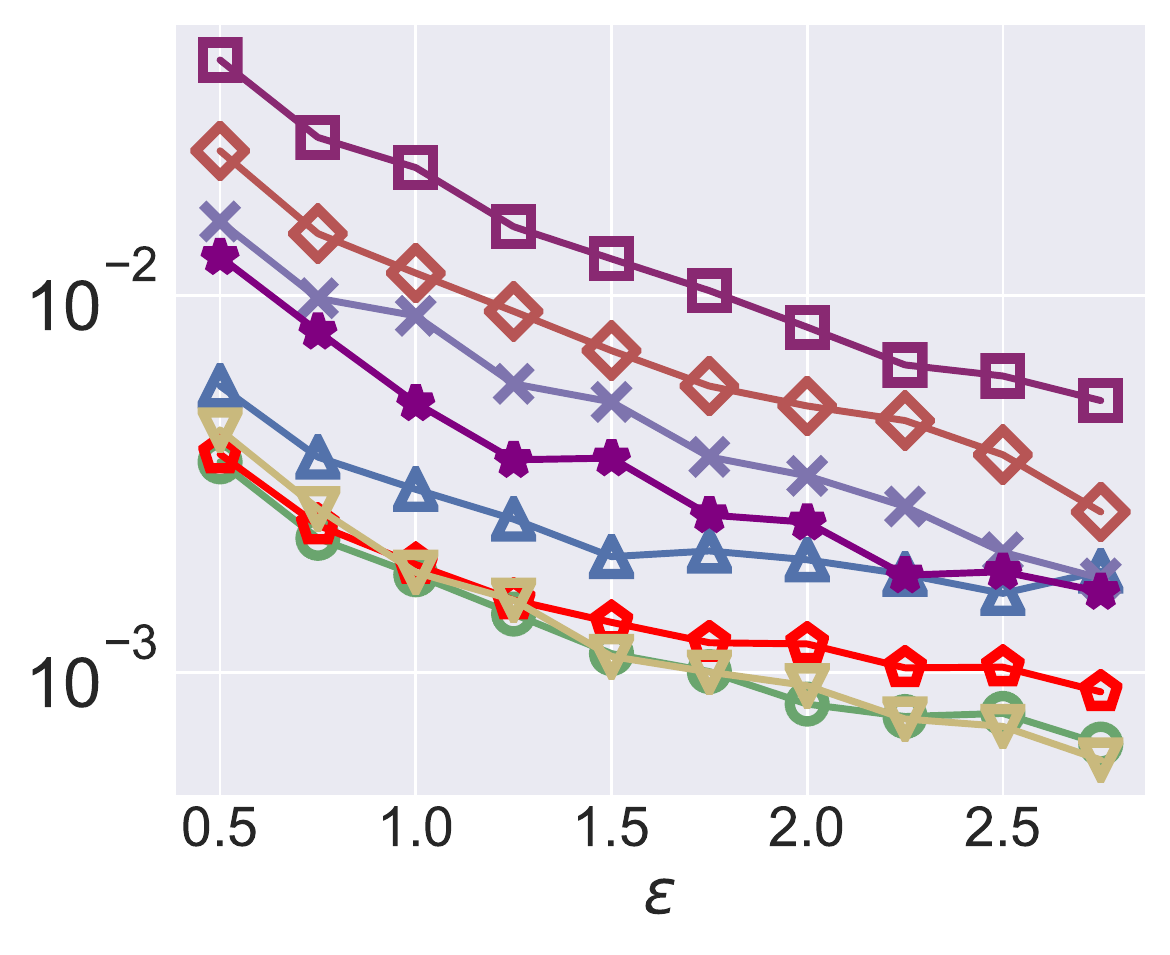}
		\vspace{-0.8cm}
		\caption{Retirement}
		\label{mean_RT}
	\end{subfigure} \\
	\vspace{-0.1cm}
\begin{subfigure}[b]{0.23\textwidth}
		\includegraphics[width=\textwidth]{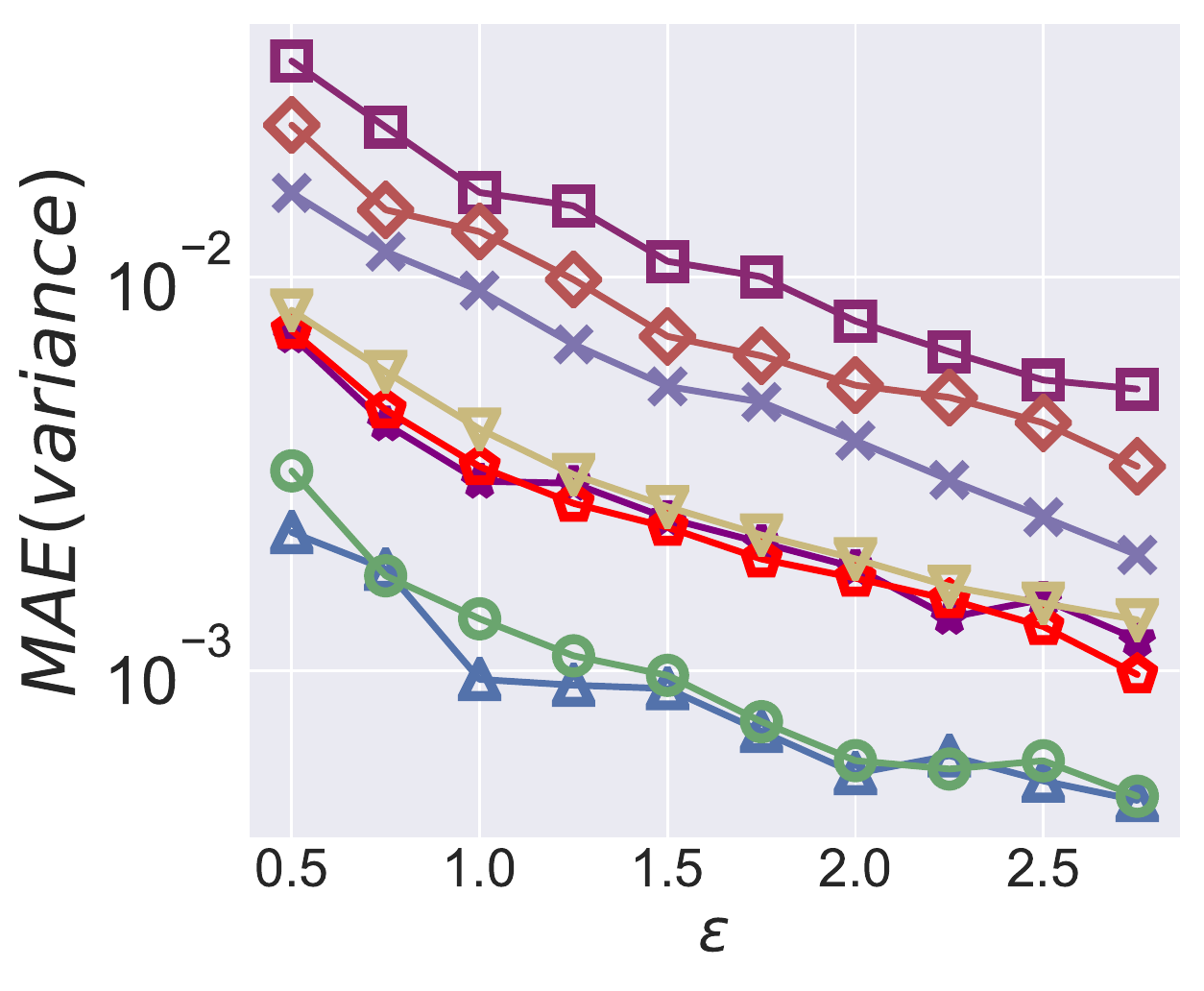}
		\vspace{-0.8cm}
		\caption{Beta(5,2)}
		\label{var_beta}
	\end{subfigure}
	 \begin{subfigure}[b]{0.23\textwidth}
		\includegraphics[width=\textwidth]{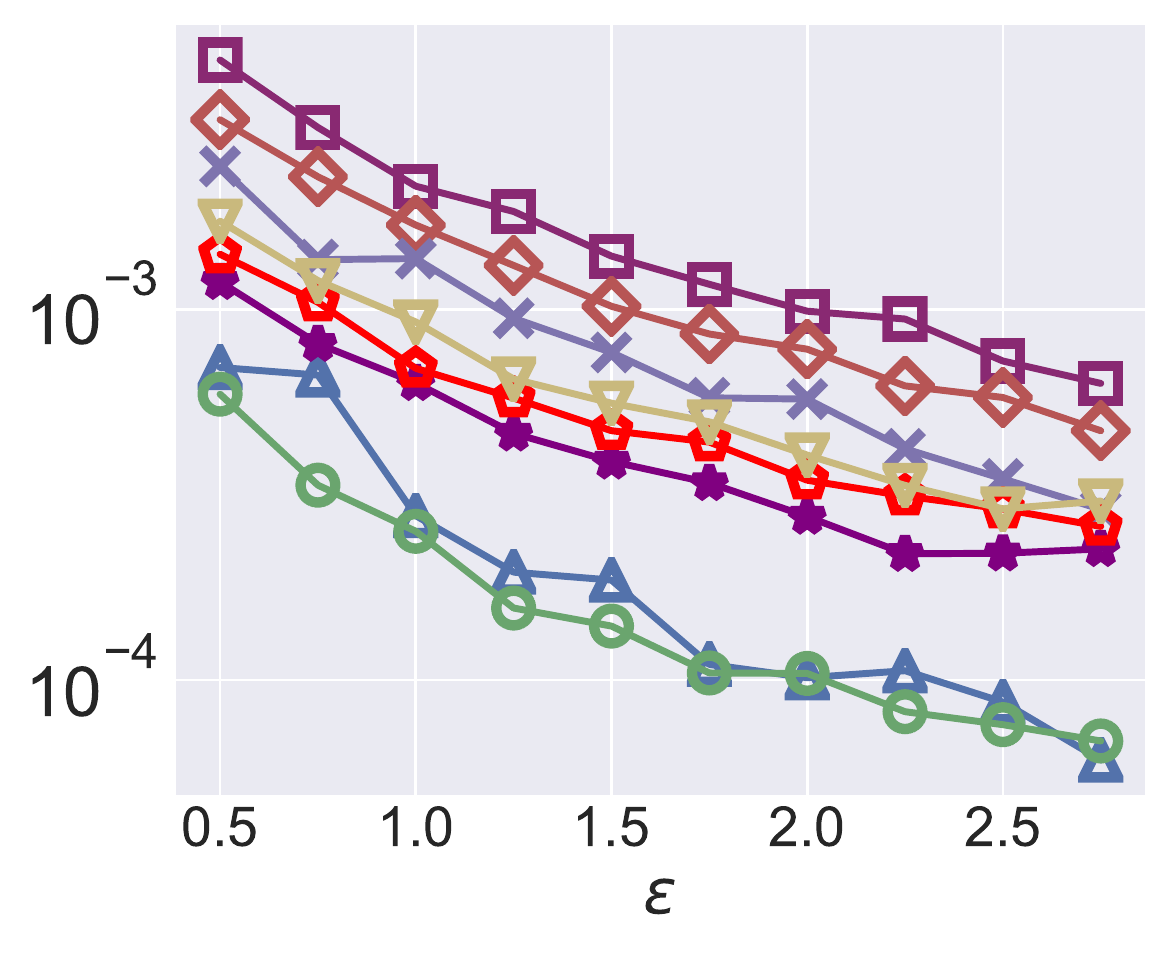}
			\vspace{-0.8cm}
		\caption{Taxi pickup time}
		\label{var_PT}
	\end{subfigure}
	\begin{subfigure}[b]{0.23\textwidth}
		\includegraphics[width=\textwidth]{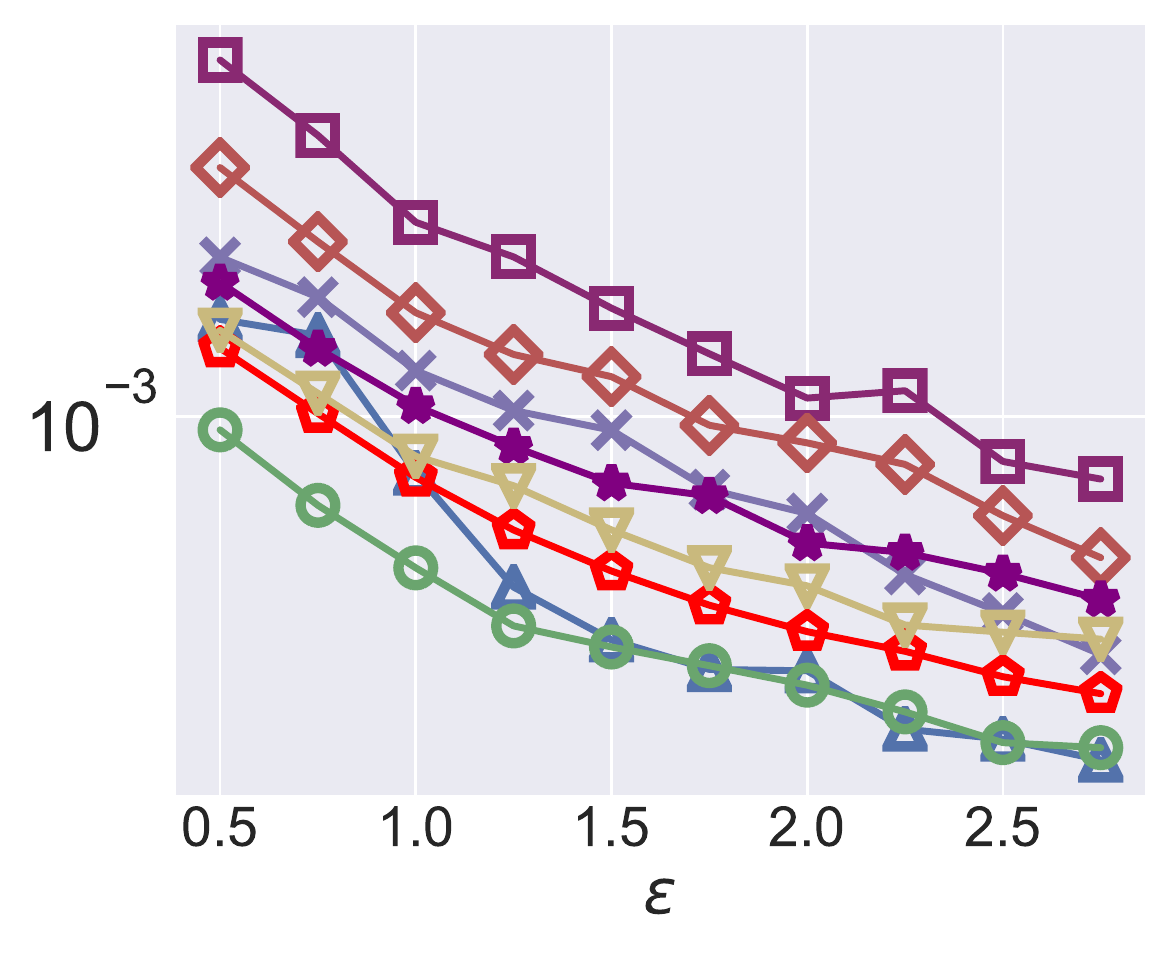}
			\vspace{-0.8cm}
		\caption{Income}
		\label{var_INC}
	\end{subfigure}
\begin{subfigure}[b]{0.23\textwidth}
		\includegraphics[width=\textwidth]{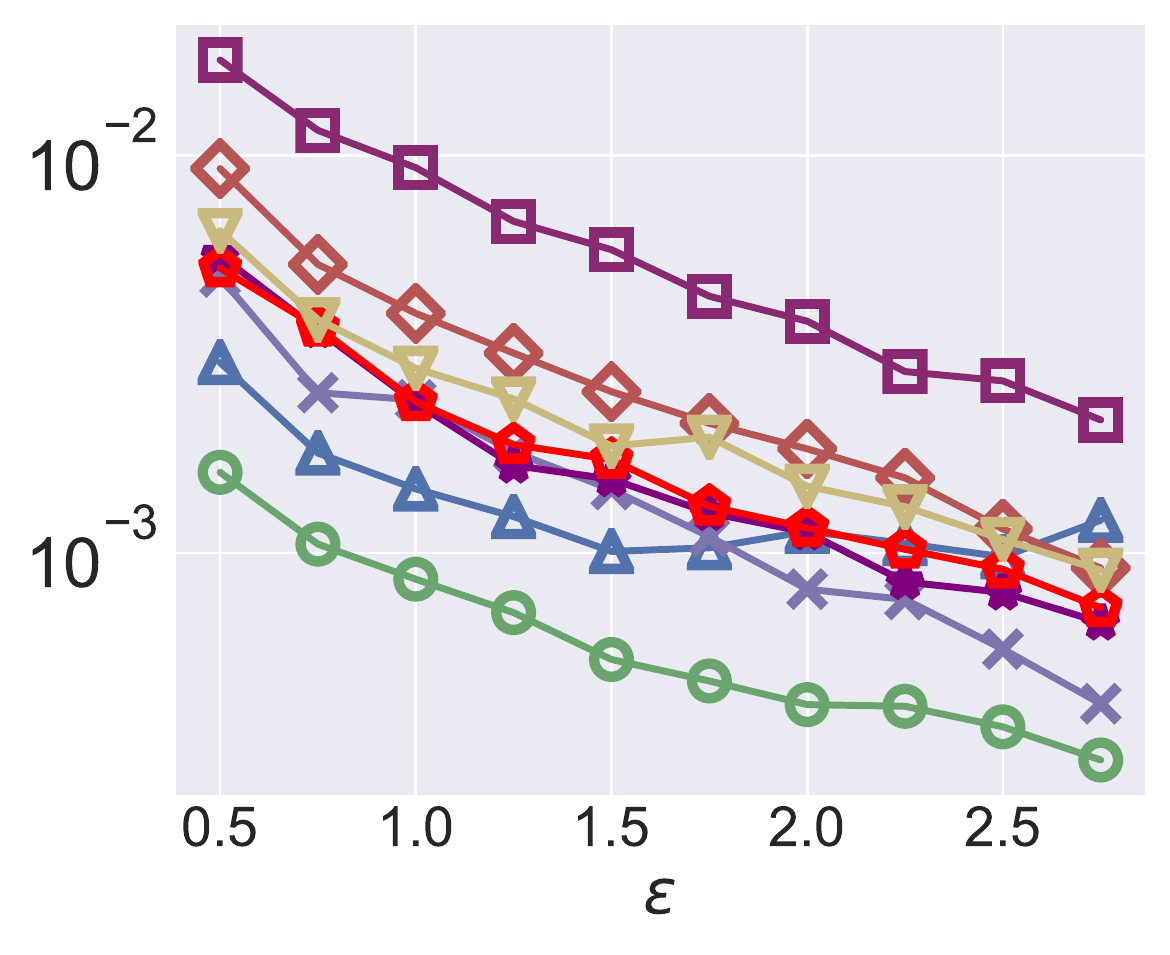}
			\vspace{-0.8cm}
		\caption{Retirement}
		\label{var_RT}
	\end{subfigure}\\
	\vspace{-0.1cm}
\begin{subfigure}[b]{0.23\textwidth}
		\includegraphics[width=\textwidth]{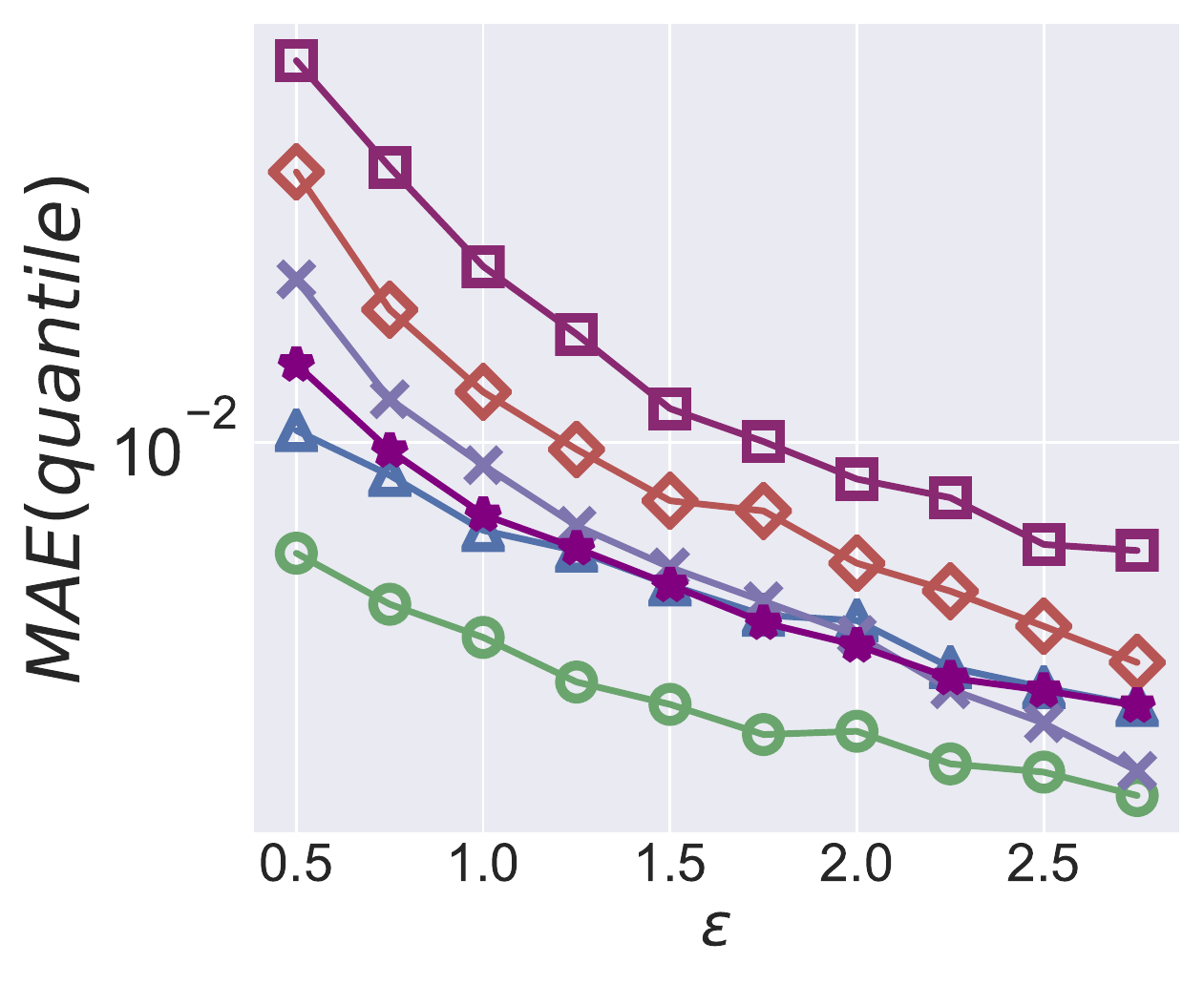}
		\vspace{-0.8cm}
		\caption{Beta(5,2)}
		 \label{quartile_beta}
	\end{subfigure}
    \begin{subfigure}[b]{0.23\textwidth}
		\includegraphics[width=\textwidth]{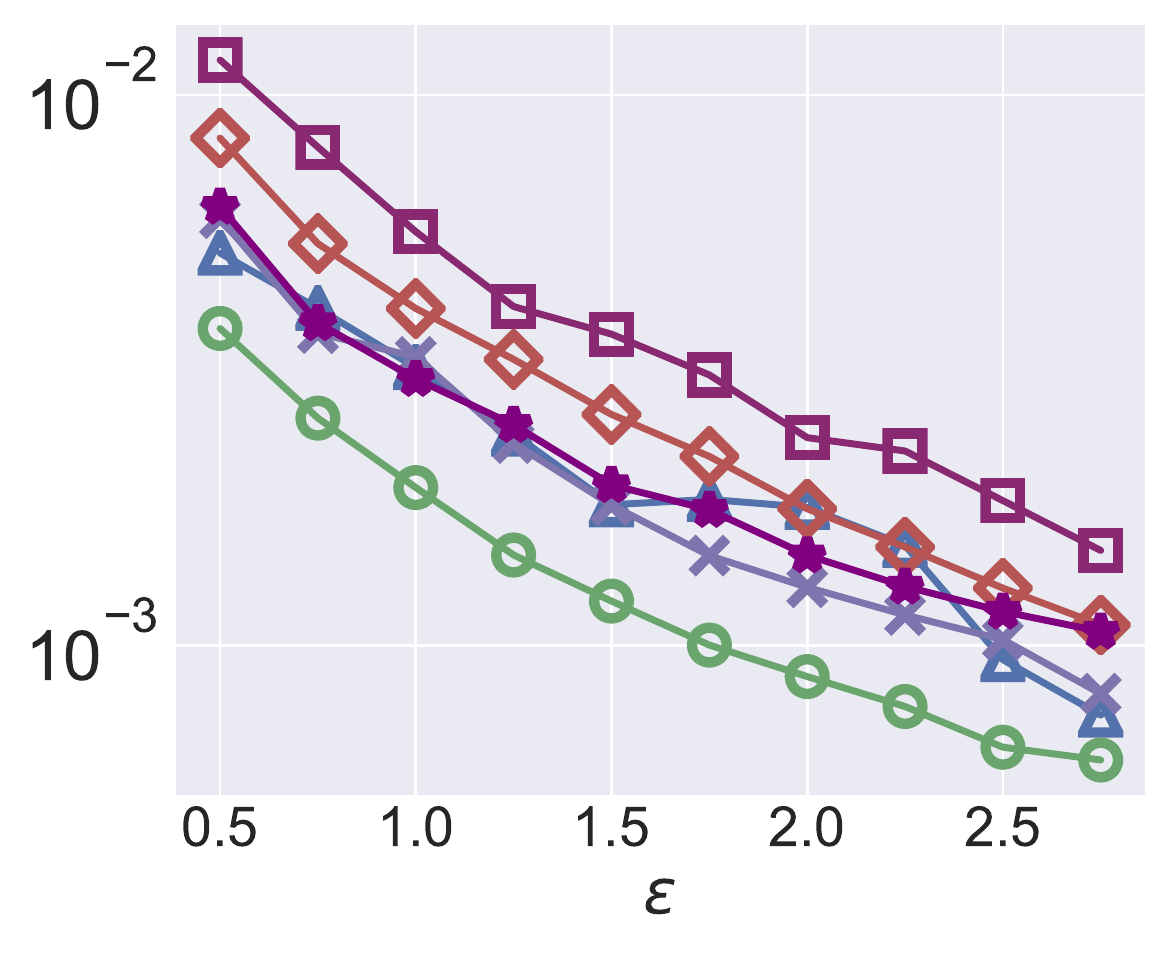}
		\vspace{-0.8cm}
		\caption{Taxi pickup time}
		\label{quartile_PT}
	\end{subfigure}
    \begin{subfigure}[b]{0.23\textwidth}
		\includegraphics[width=\textwidth]{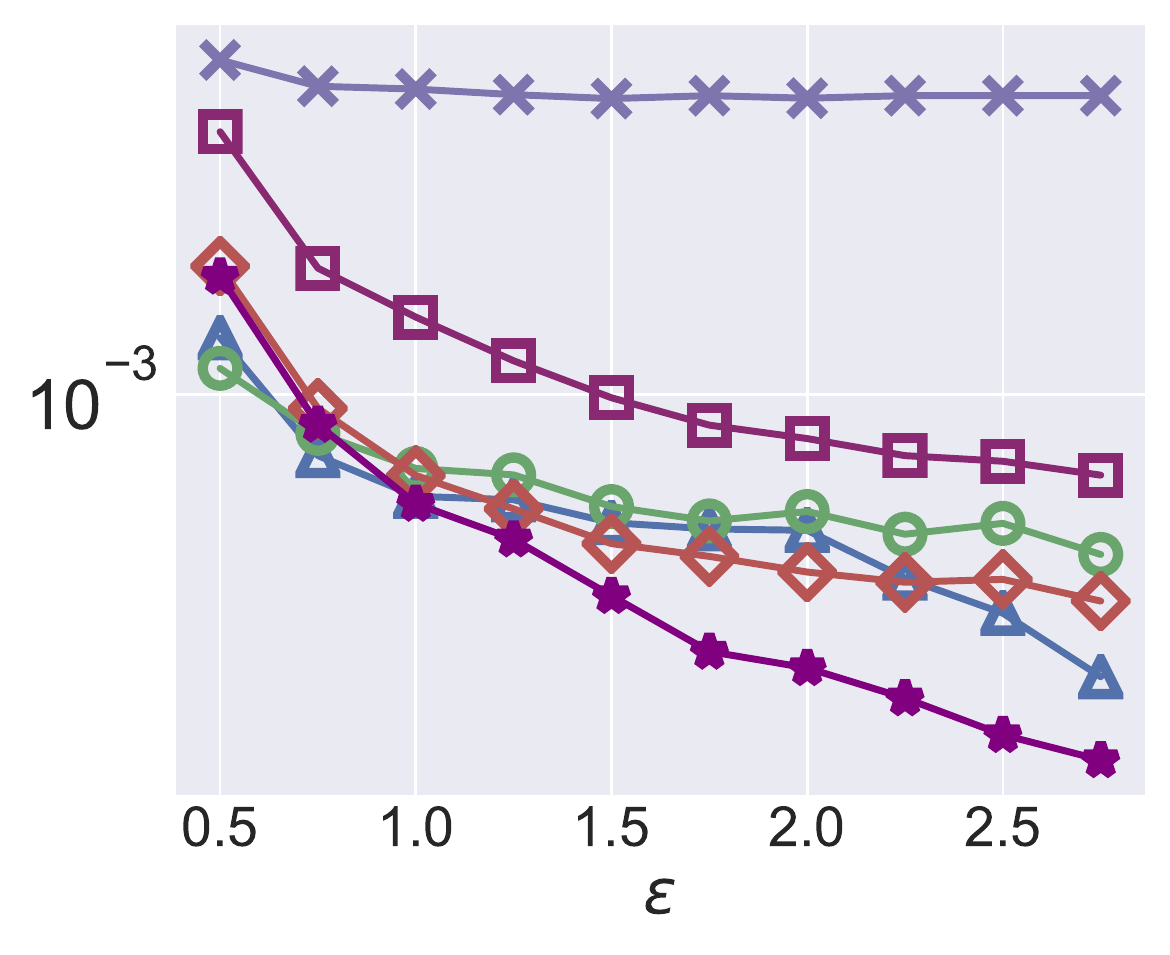}
		\vspace{-0.8cm}
		\caption{Income}
		\label{quartile_INC}
	\end{subfigure}
    \begin{subfigure}[b]{0.23\textwidth}
		\includegraphics[width=\textwidth]{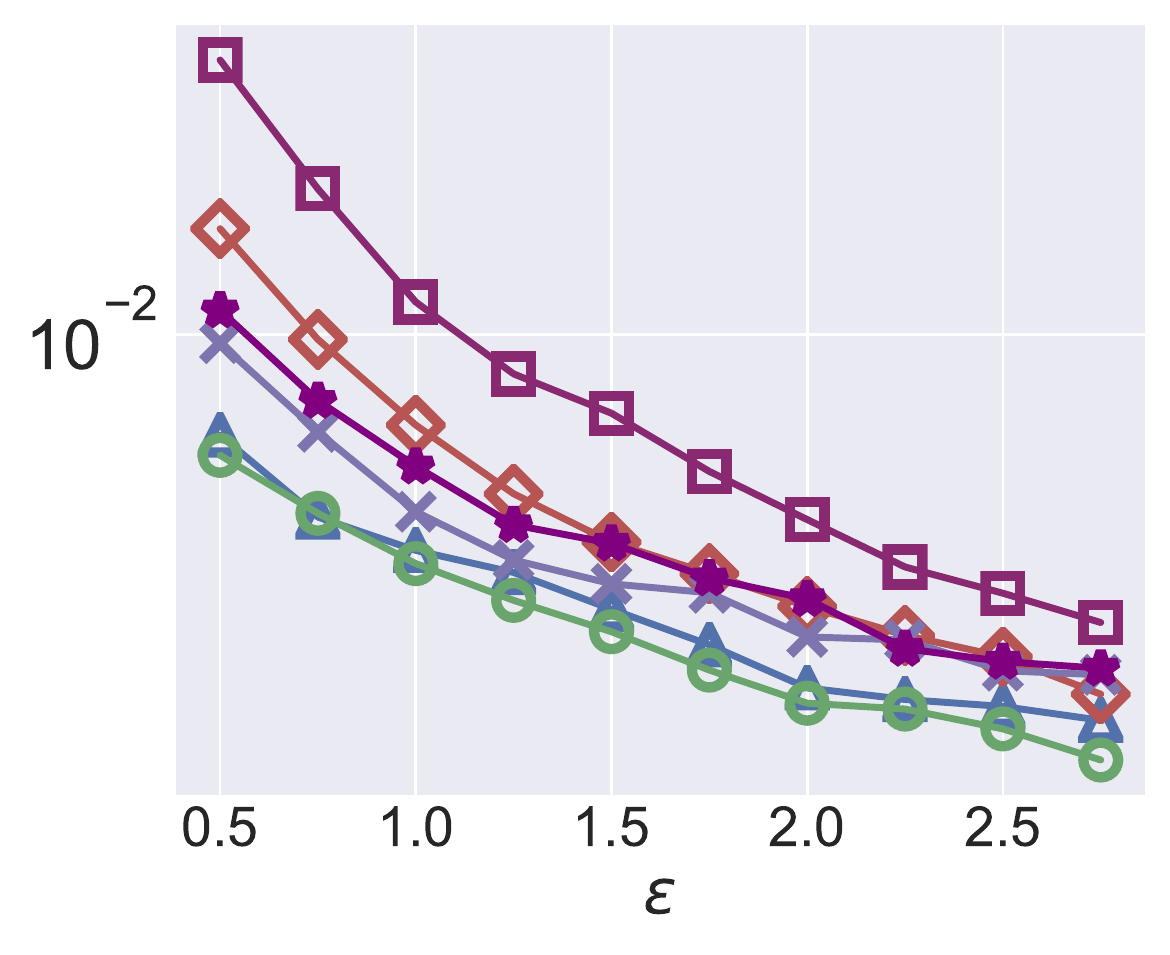}
		\vspace{-0.8cm}
		\caption{Retirement}
		\label{quartile_RT}
	\end{subfigure}
	\vspace{-0.4cm}
	\caption{MAE for estimating mean (first row), variance (second row), and quantiles (third row).  
}
	\label{exp:statistics}
	\vspace{-0.3cm}
\end{figure*}

\mypara{Variance Estimation.}
Although \SR and \PM are proposed for mean estimation, they can be modified to support variance estimation as well.  Specifically, we randomly sample $50\%$ of users to estimate mean first.  The estimated mean is then broadcast to the remaining users.  Then each user compares his secret value and the received estimated mean, and reports the squared difference (i.e., $(v_i - \tilde{\mu})^2$) to the server, who averages them to obtain variance.

The experimental results are showed in Figure~\ref{var_beta}-~\ref{var_RT}.
As we can see, the error of \SR and \PM is larger than EM or EMS in most cases.  One reason is that only half of the users is used for variance estimation (the other half is necessary for mean estimation).  
The relative performance of other methods are similar to previous experiments.

\mypara{Quantile Estimation.}
Experimental results are shown in Figure~\ref{quartile_beta}-\ref{quartile_RT}.  
Ignoring the spiky income dataset for now, our proposed \SW with EMS performs best.  Moreover, we observe that \SW with EM sometimes performs better but is not stable, because it is sensitive to parameters.  
HH-ADMM performs worse than \SW, but close to the best of \CFO with binning.
For \CFO with binning, because of the trade-off between estimation noise and the bias within the bins, larger bin sizes typically perform better in smaller $\epsilon$ ranges, while the smaller bin sizes narrows the gap as $\epsilon$ increases.

For the spiky income dataset, even for $\epsilon=0.5$, larger bin sizes give worse utility ($1$ to $2$ orders of magnitude) than other mechanisms.  This also demonstrates that the optimal bin size is data-dependent.  HH-ADMM successfully captures the spikiness of the dataset and thus performs the best.  

\subsection{Wave Shapes and Parameters}
\label{subsec:exp_different_shape}
Here we compare the different shapes of General Wave (\GW) with \SW, and different parameters of \SW.

\mypara{Different shapes of wave in \GW.} In Section~\ref{subsec:sw}, we analytically show that \SW is preferred because it maximizes the Wasserstein distance between output distributions.  We empirically compare \SW with other wave forms. 
We consider 5 other \GW mechanism with different shape, including 4 trapezoid shapes and one triangle shape. 
The upper side to bottom side length ratio of trapezoid wave are $0.2, 0.4, 0.6$ and $0.8$.
The experimental results in Figure~\ref{exp:vary_shape} show when $\epsilon = 1$, \SW gives the best estimated distributions in terms of Wasserstein distance, no matter how we change $b$.  
As the ratio decreases, the recovery accuracy also degrades in general.
The experimental results support our intuition in Section~\ref{subsec:sw}.

\mypara{\SW with different $b$.}
In Section~\ref{subsec:choose_b}, we propose to use $b_{\SW} = \frac{\epsilon e^\epsilon - e^\epsilon +1}{2e^\epsilon(e^\epsilon - 1 - \epsilon)}$.  
Figure~\ref{exp:vary_b} reports experimental results with different $b$.  Our choice of $b_{\SW}$, which is indicated as the vertical dotted line, is among the ones that provide best utility.  
We have also evaluated $b$ on other metrics; the results give similar conclusion, and are omitted because of space limitation.

\mypara{Bucketization granularity.}
\label{subsec:exp_bucketize}
To see what is the optimal bucketization granularity on different datasets, we choose $4$ different numbers of buckets ($256, 512, 1024 $ and $2048$) then compare the Wasserstein distance between the estimated distributions and the true distributions. 
For simplicity, we use same number of buckets for both $\Codomain$ and $\Domain$.
The experimental results in Figure~\ref{exp:bucketize} show different datasets have different optimal bucketization granularity.
For Beta(5,2), we have best result when the number of buckets is 256.
For the other 3 datasets, dividing $\Domain$ into 1024 buckets can give us best performance in most cases.

\begin{figure*}[t]
    \centering
	\begin{subfigure}[b]{\textwidth}
        \centering
		\includegraphics[width=1\textwidth]{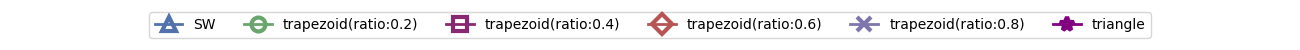}
	\end{subfigure} \\
	\begin{subfigure}[b]{0.22\textwidth}
		\includegraphics[width=\textwidth]{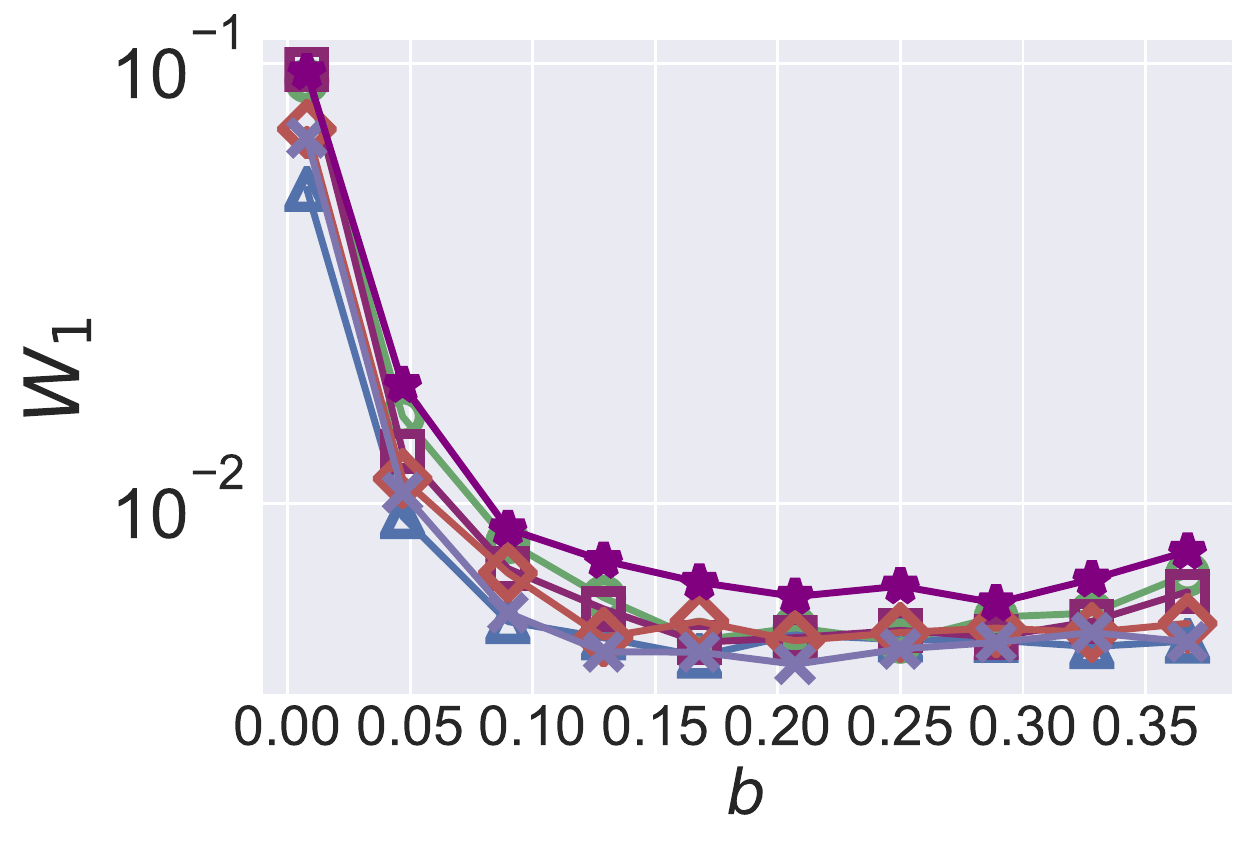}
	    \vspace{-0.6cm}
		\caption{Beta(5, 2) $\epsilon = 1.0$}
		\label{wass_different_shape_beta}
	\end{subfigure}
	\begin{subfigure}[b]{0.22\textwidth}
		\includegraphics[width=\textwidth]{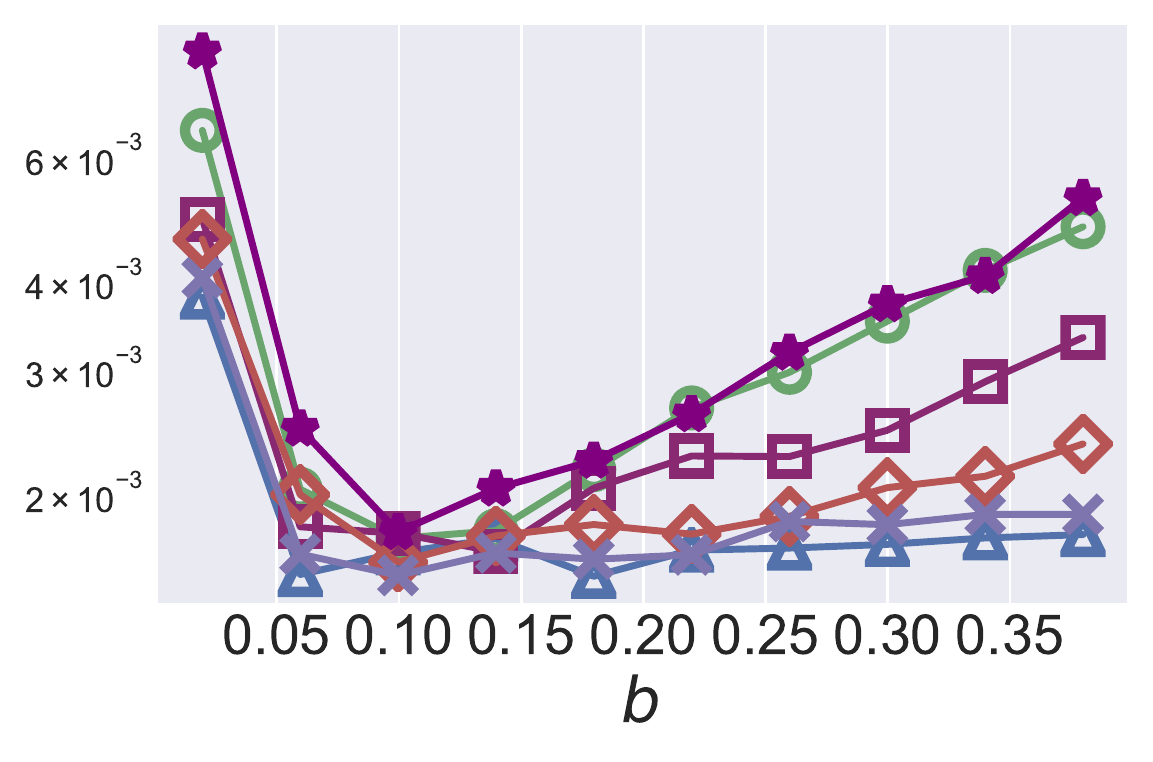}
	    \vspace{-0.6cm}
		\caption{Taxi pickup time $\epsilon = 1.0$}
		\label{wass_different_shape_PT}
	\end{subfigure}
	\begin{subfigure}[b]{0.22\textwidth}
		\includegraphics[width=\textwidth]{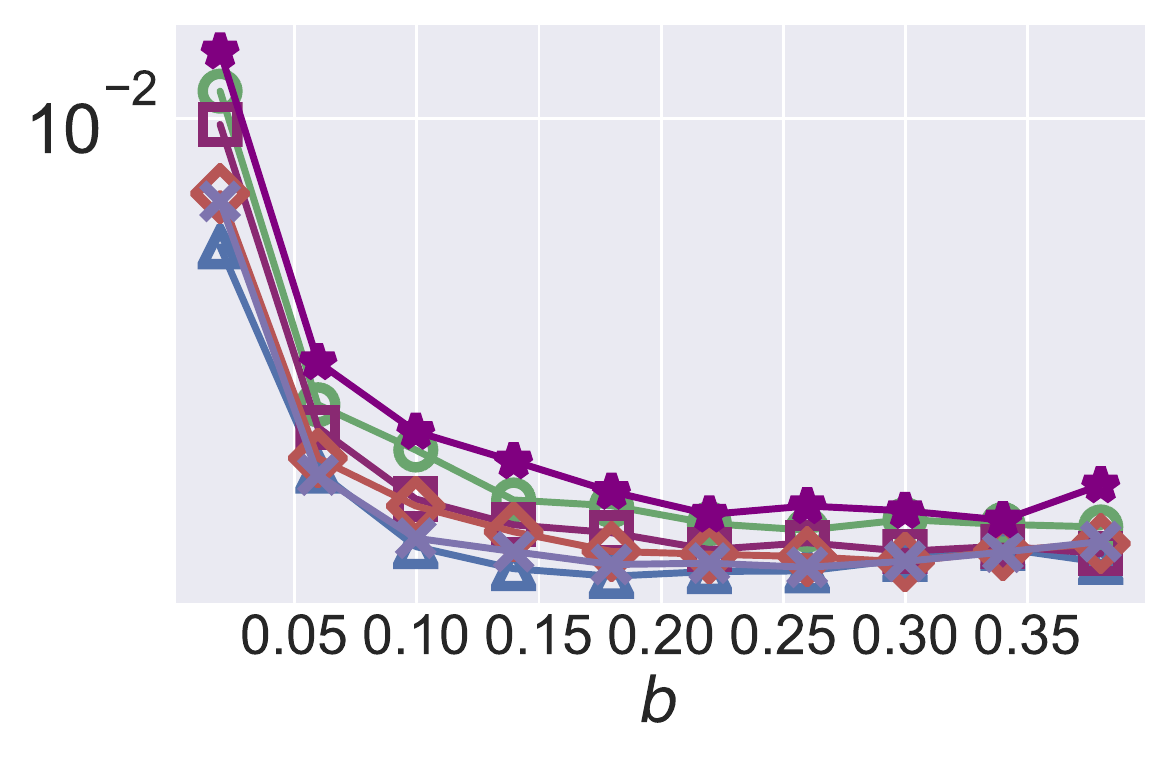}
	    \vspace{-0.6cm}
		\caption{Income $\epsilon = 1.0$}
		\label{wass_different_shape_INC}
	\end{subfigure}
	\begin{subfigure}[b]{0.22\textwidth}
		\includegraphics[width=\textwidth]{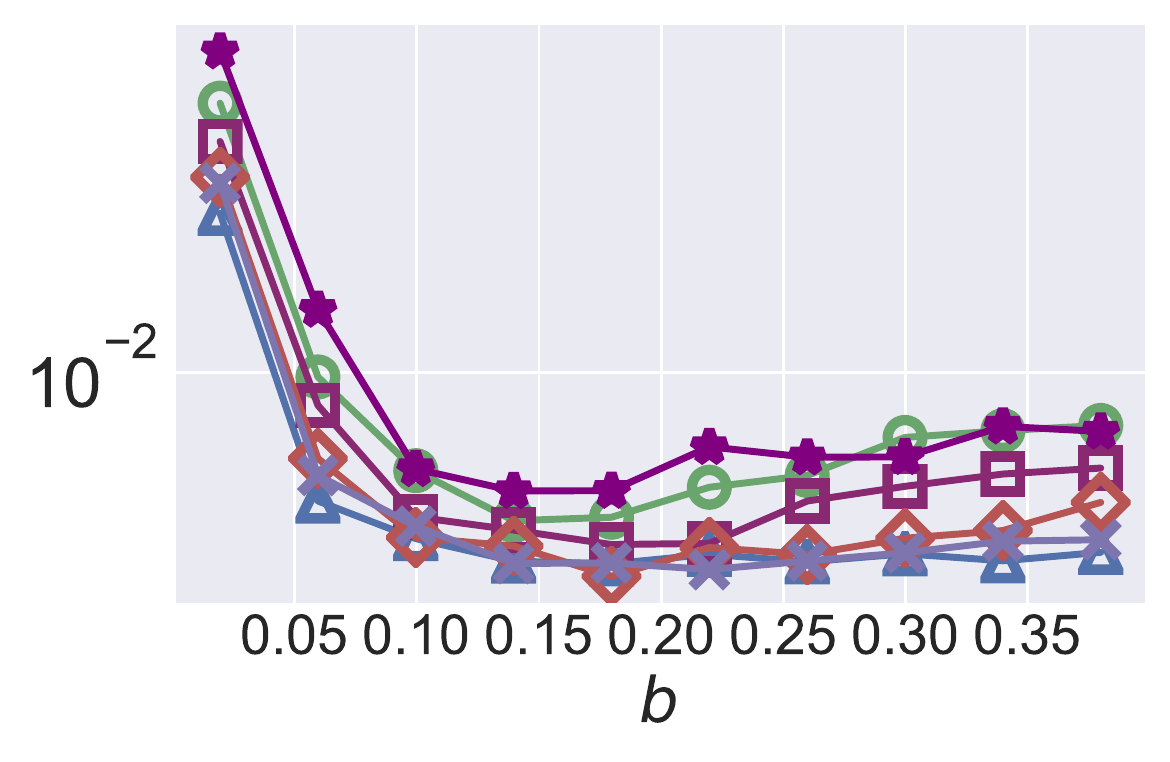}
	    \vspace{-0.6cm}
		\caption{Retirement $\epsilon = 1.0$}
		\label{wass_different_shape_RT}
	\end{subfigure}
	\vspace{-0.4cm}
	\caption{
	Comparison of different shapes of wave in \GW. 
	Ratios are the upper/lower length ratios for trapezoids.
	}
	\label{exp:vary_shape}
	\vspace{-0.3cm}
\end{figure*}

\begin{figure*}[t]
    \centering
    \begin{subfigure}[b]{\textwidth}
		\includegraphics[width=1\textwidth]{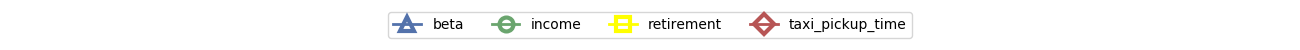}
	\end{subfigure} \\
	\vspace{-0.1cm}
\begin{subfigure}[b]{0.24\textwidth}
	    \label{wass_vary_b_1}
		\includegraphics[width=\textwidth]{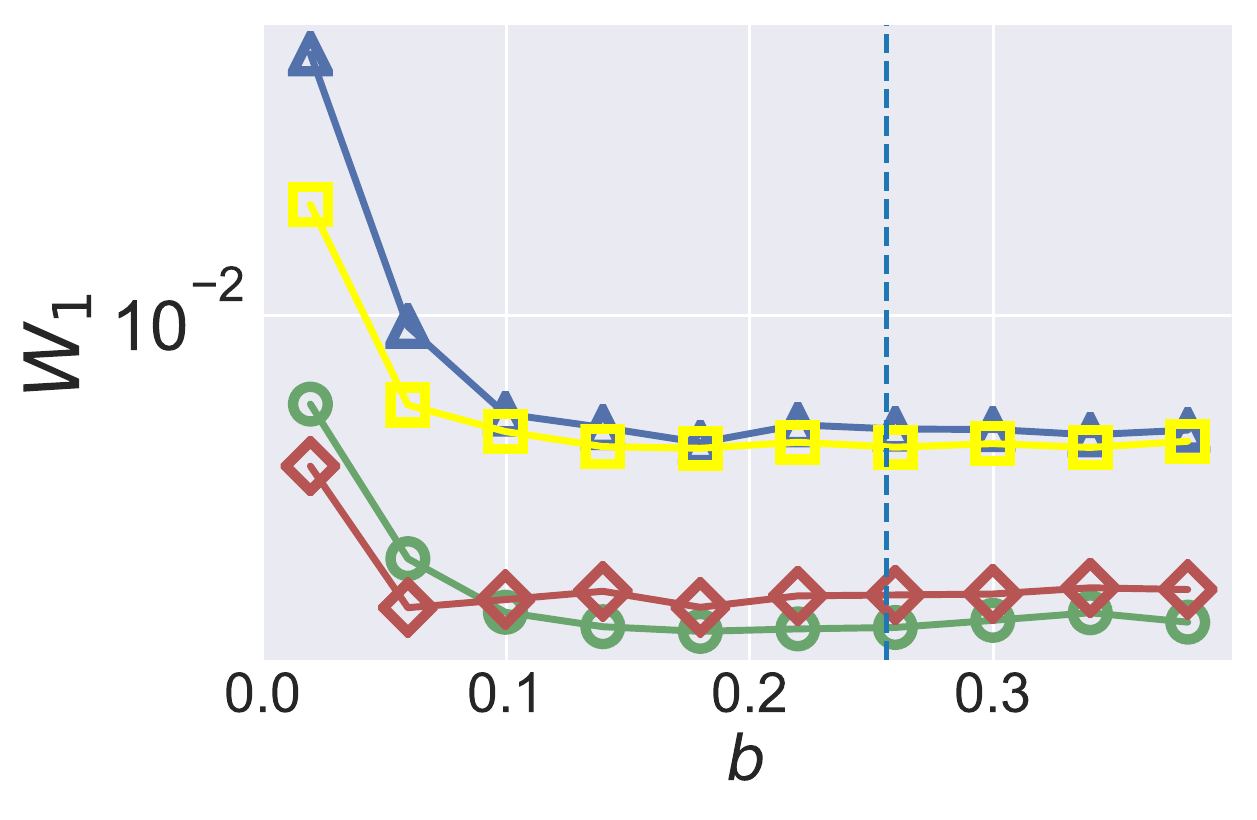}
		\vspace{-0.6cm}
		\subcaption{$\epsilon = 1.0$, $b_{\SW} =0.256$}
	\end{subfigure}
\begin{subfigure}[b]{0.24\textwidth}
	    \label{wass_vary_b_2}
		\includegraphics[width=\textwidth]{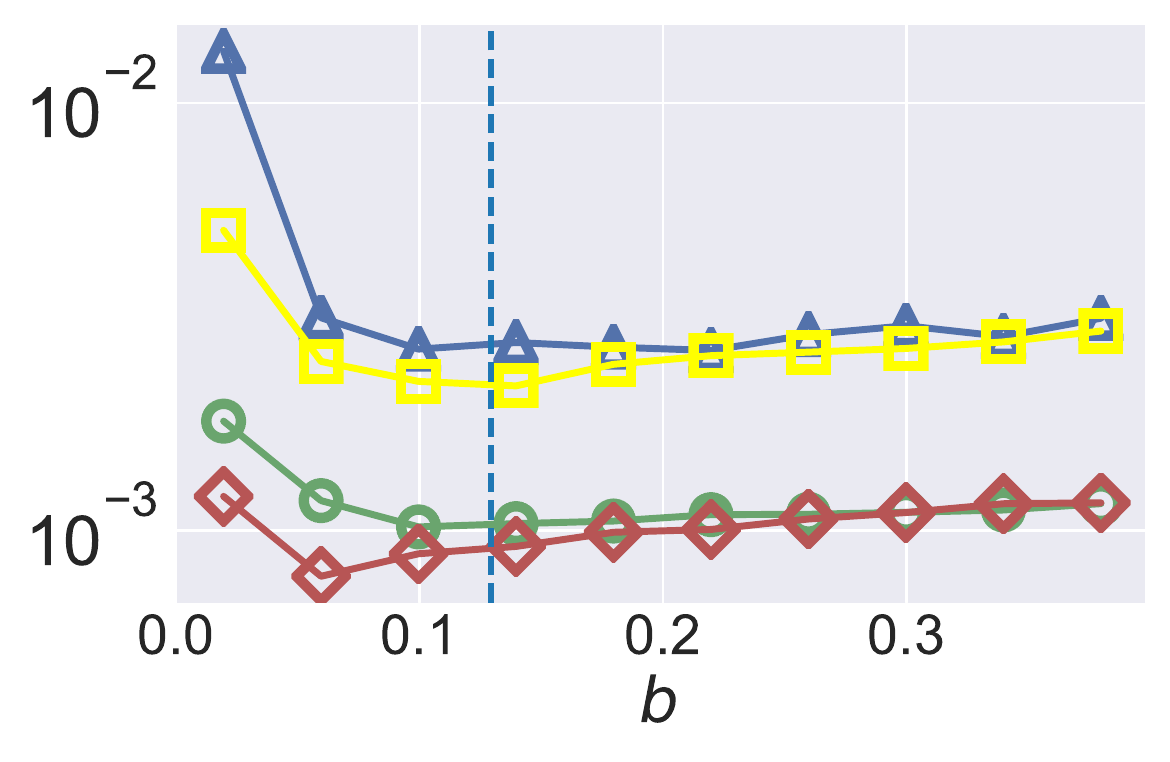}
	    \vspace{-0.6cm}
		\caption{$\epsilon = 2.0$,  $b_{\SW} =0.129$}
	\end{subfigure}
\begin{subfigure}[b]{0.24\textwidth}
	    \label{wass_vary_b_3}
		\includegraphics[width=\textwidth]{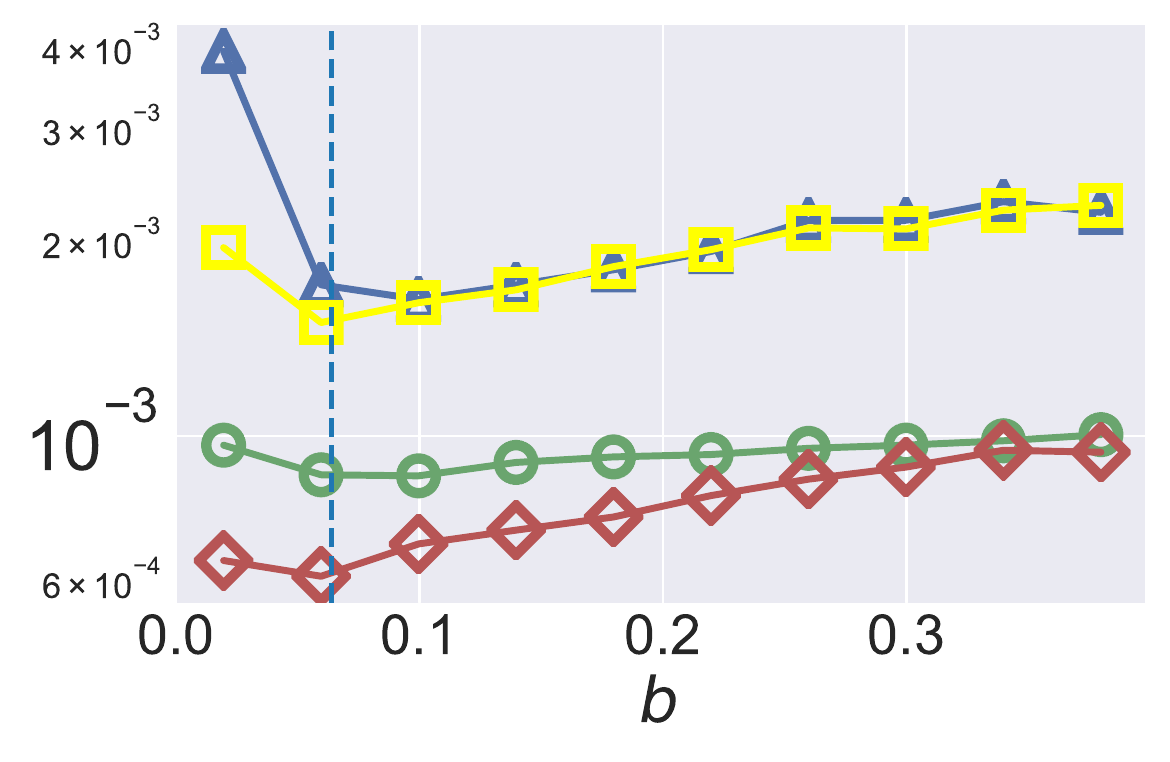}
		\vspace{-0.6cm}
		\caption{$\epsilon = 3.0$, $b_{\SW} =0.064$}
	\end{subfigure}
\begin{subfigure}[b]{0.24\textwidth}
	    \label{wass_vary_b_4}
		\includegraphics[width=\textwidth]{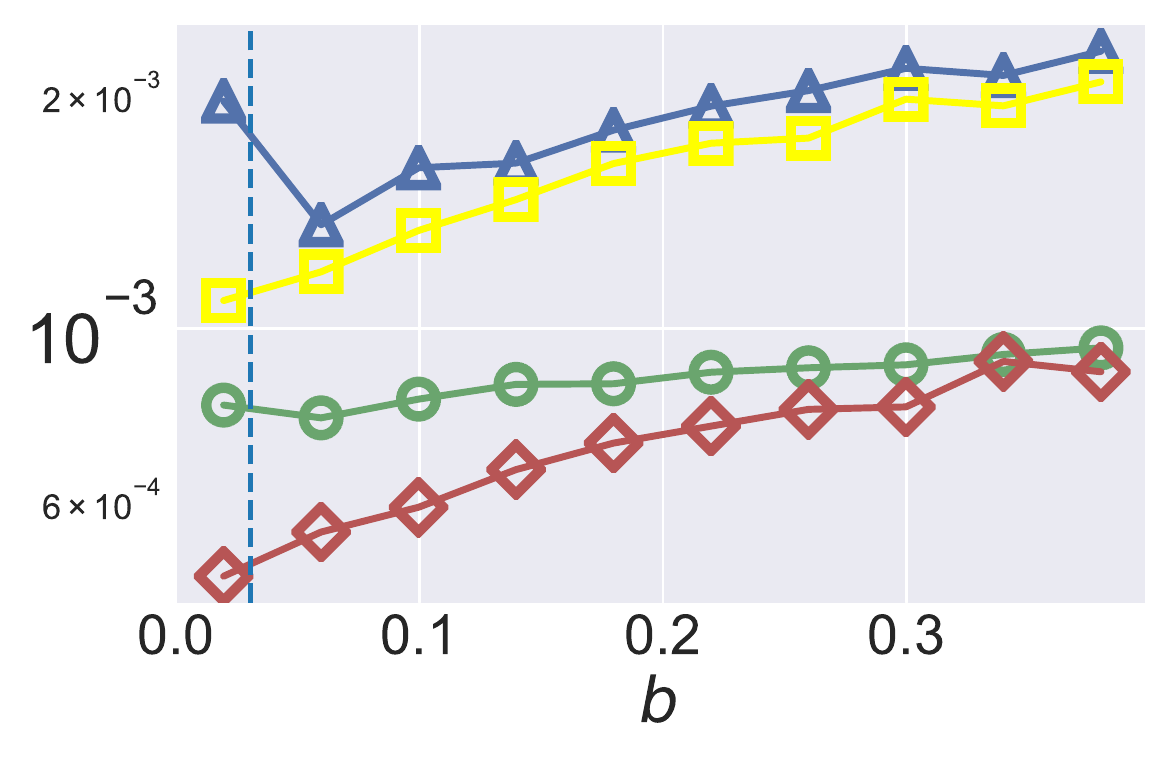}
		\vspace{-0.6cm}
		\caption{$\epsilon = 4.0$, $b_{\SW} =0.030$}
	\end{subfigure}
	\\
	\vspace{-0.4cm}
    \caption{
	Wasserstein distances between the true data and the estimation produced by EMS algorithm with fixed $\epsilon$ values and varying $b$ from $0.01$ to $0.38$.
	Dotted vertical lines means the used $b_{\SW}$ in Section~\ref{subsec:choose_b}.
	}
	\label{exp:vary_b}
	\vspace{-0.3cm}
\end{figure*}

\begin{figure*}[t]
    \centering
    \begin{subfigure}[b]{\textwidth}
		\includegraphics[width=1\textwidth]{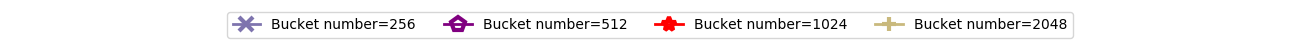}
		\vspace{0.00mm}
	\end{subfigure} \\
	\vspace{-0.5cm}
	\addtocounter{subfigure}{0}
\begin{subfigure}[b]{0.23\textwidth}
		\includegraphics[width=\textwidth]{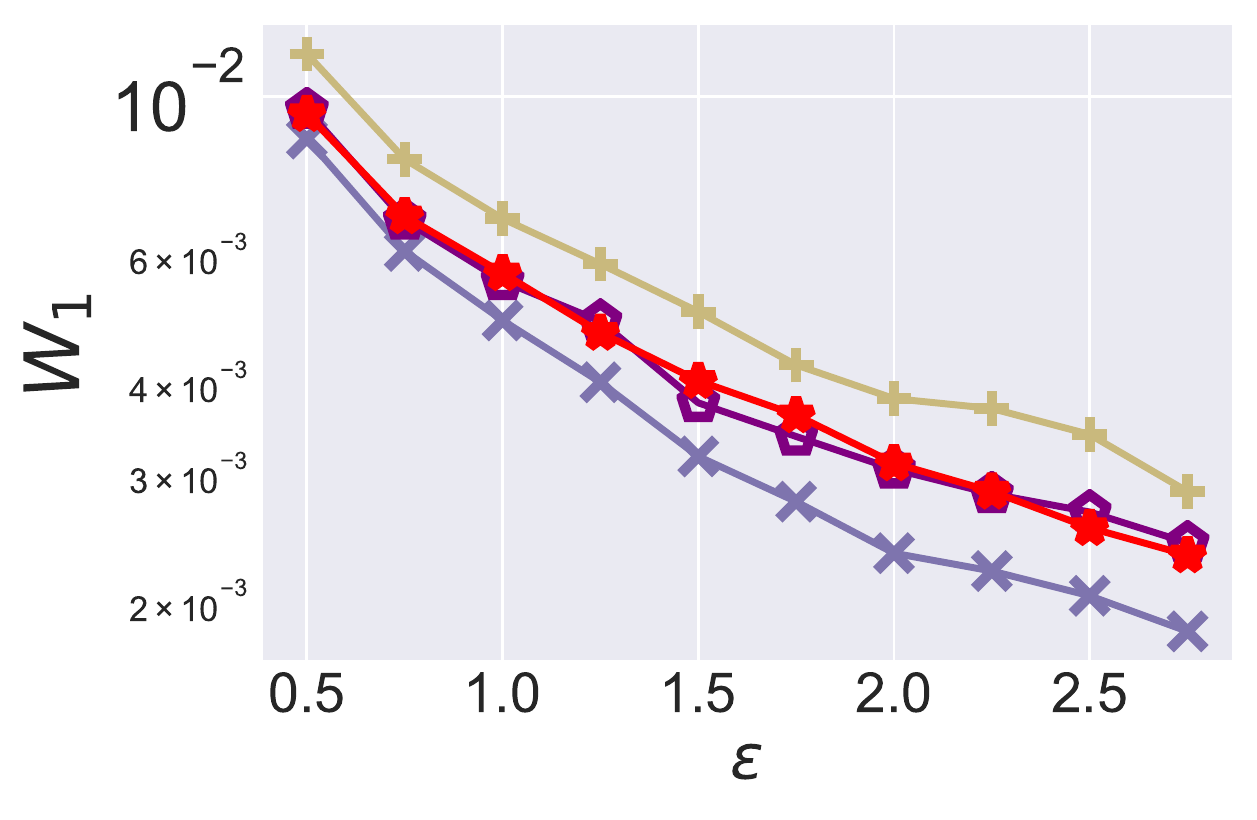}
		\vspace{-0.7cm}
		\caption{Beta(5,2)}
		\label{continuous_beta}
	\end{subfigure}
\begin{subfigure}[b]{0.23\textwidth}
		\includegraphics[width=\textwidth]{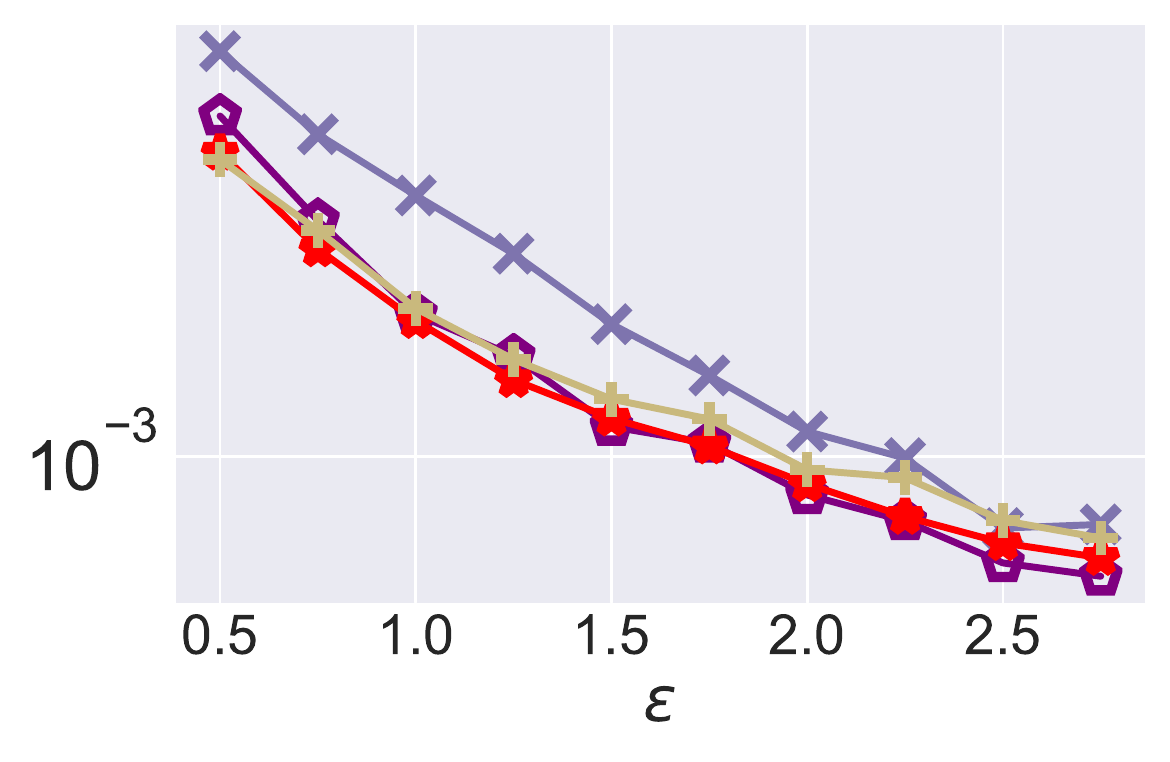}
		\vspace{-0.7cm}
		\caption{Taxi Pickup time}
		\label{continuous_PT}
	\end{subfigure}
\begin{subfigure}[b]{0.23\textwidth}
		\includegraphics[width=\textwidth]{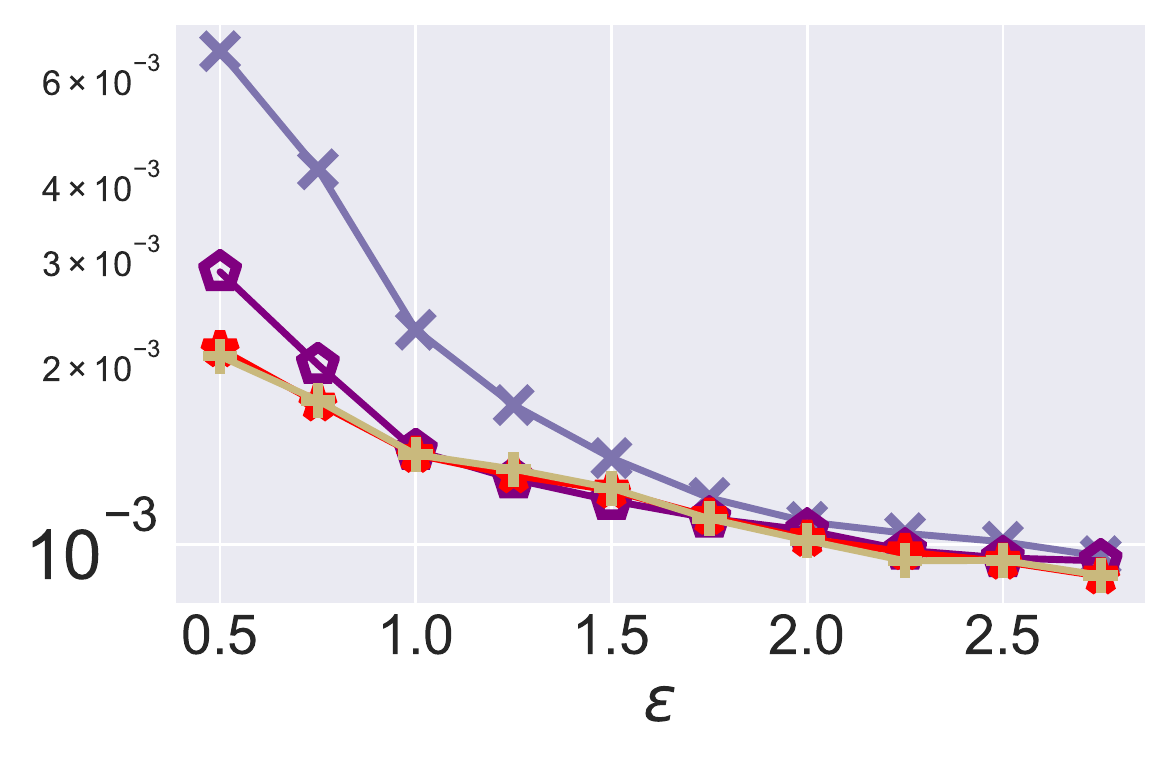}
		\vspace{-0.7cm}
		\caption{Income}
		 \label{continuous_INC}
	\end{subfigure}
\begin{subfigure}[b]{0.23\textwidth}
		\includegraphics[width=\textwidth]{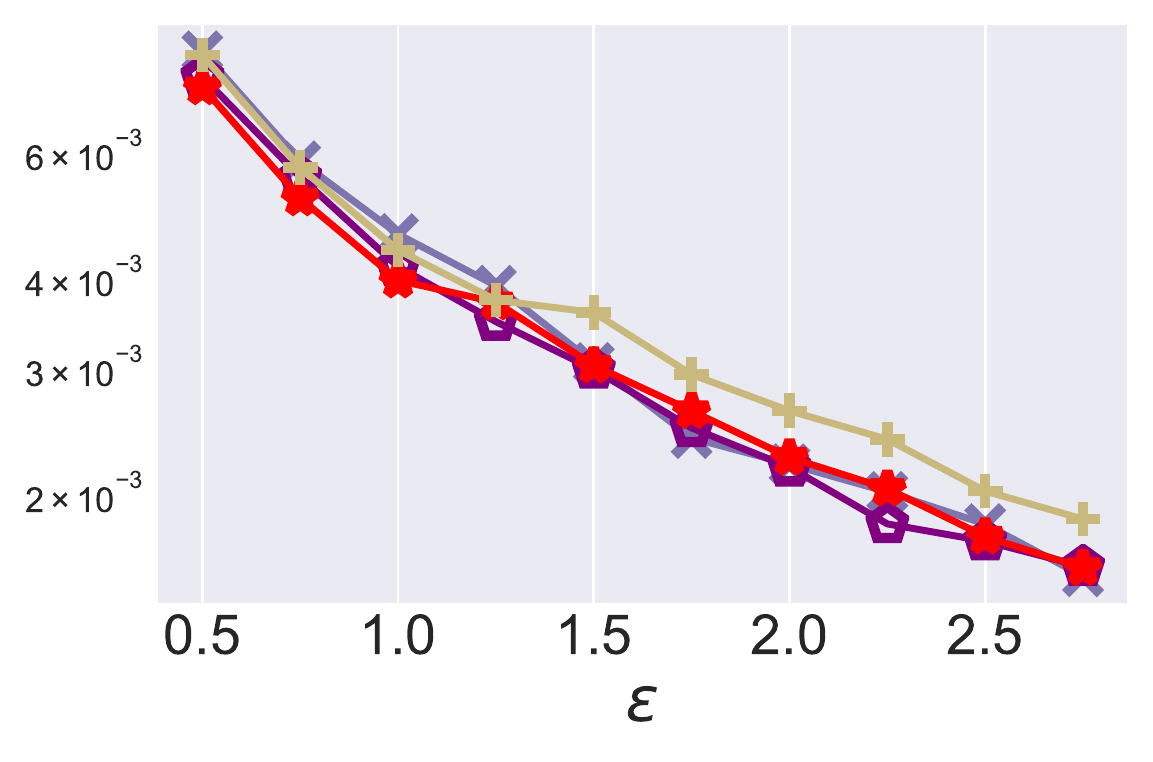}
		\vspace{-0.7cm}
		\caption{Retirement}
		\label{continuous_RT}
    \end{subfigure}
    \\
	\vspace{-0.4cm}
	\caption{
	Wasserstein distance between estimated and true distribution with different bucketization granularity.
	}
	\label{exp:bucketize}
	\vspace{-0.2cm}
\end{figure*}

\section{Related Work}
\label{sec:related}

Differential privacy has been the \textit{de facto} notion for protecting privacy.  In the local setting, we have seen real world deployments: Google deployed RAPPOR~\cite{ccs:ErlingssonPK14} as an extension within Chrome; Apple~\cite{url:apple} uses similar methods to help with predictions of spelling and other things; Microsoft also deployed an LDP system for telemetry collection~\cite{nips:DingKY17}.

\mypara{Categorical Frequency Oracle.}
One basic mechanism in LDP is to estimate frequencies of values.
There have been several mechanisms~\cite{arXiv:AcharyaSZ18, ccs:ErlingssonPK14, stoc:BassilyS15, nips:DingKY17, nips:BassilyNST17,uss:WangBLJ17} proposed for this task.  Among them, \cite{uss:WangBLJ17} introduces \olh, which achieves low estimation errors and low communication costs.  
Our paper develop new frequency oracles for numerical attribues.

\mypara{Handling Ordinal/Numerical Data.}
When the data is ordinal, the straightforward approach is to bucketize the data and apply categorical frequency oracles.   
\cite{infocom:wang2017local} considers distribution estimation, but with a strictly weaker privacy definition (intuitively, the more different between $x$ and $x'$, the more privacy budget).  There are also mechanisms that can handle numerical setting, but focusing on the specific task of mean estimation.  Specifically, \cite{jasa:DuchiJW18,nips:DingKY17} use \SR and \cite{icde:WangXYHSSY18} uses \PM.  
These two approaches have been discussed in Section~\ref{sec:background} and compared in the experiments. 

\mypara{Post-processing.}
Given the result of a privacy-preserving algorithm, one can utilize the structural information to post-process it so that the utility can be improved.  
In the setting of centralized DP, Hay et al.~\cite{pvldb:HayRMS10} propose an efficient method to minimize $L_2$ difference between the original result and the processed result.  This approach utilizes the hierarchy structure constraint.
Besides that, the authors of~\cite{kdd:lee2015maximum} also consider the non-negativity constraint and propose to use ADMM to obtain result that achieves maximal likelihood.
As ADMM is not efficient for high dimensional case, a gradient descent based algorithm is proposed~\cite{icml:mckenna2019graphical}.

In the LDP setting, \cite{icde:WangXYHSSY18} and \cite{pvldb:KulkarniCD18} also consider the hierarchy structure and apply the technique of~\cite{pvldb:HayRMS10}.  We propose to use ADMM instead of~\cite{pvldb:HayRMS10}, which improves utility. 

Without using the hierarchical constraint (only consider \CFO), Jia et al.~\cite{infocom:jia2018calibrate} propose to utilize external information about the dataset (e.g., assume it follows a power-law distribution), and Wang et al.~\cite{arXiv:Wang19LLLS} consider the constraints that the distribution is non-negative and sum up to $1$.  
Bassily~\cite{aistats:Bassily19} and Kairouz et al.~\cite{icml:KairouzBR16} study the post-processing for some \CFO with MLE.
Compared with those existing methods, our work is also a post-processing method but is applied to a new Square Wave reporting method and requires different techniques (such as EM with smoothing).  
 \section{Conclusions}
\label{sec:conc}

We have studied the problem of reconstructing the distribution of a numerical attribute under LDP.
We introduce HH-ADMM as an improvement to existing hierarchy-based methods.  
Most importantly, we propose the method of combining Square Wave reporting with Expectation Maximization and Smoothing.
We show that Square Wave mechanism has the best utility among general wave mechanisms, and introduce techniques to choose the bandwidth parameter $b$ by maximizing an upper bound of mutual information.  
Extensive experimental evaluations demonstrate that \SW with EMS generally performs the best under a wide range of metrics.  
We expect these protocols and findings to help improving the deployment of LDP protocols to collect and analyse numerical information.

\bibliographystyle{abbrv}
\bibliography{ref}

\appendix
\section{Derivation of EM} 
\label{app_EM}
In the post-processing phase, the aggregator receives $n$ reports from users, which are denoted as $\rdvv = \{\rdv_1, \ldots, \rdv_n\}$.  Let $\vv = \{v_1, \ldots, v_n\}$ be the true input values for square wave report mechanism.  
We also assume that the input value of each user is drawn independently from a fixed unknown probability distribution.
The aggregator want to make the estimate frequency histogram $\estx \in [0, 1]^{d}$ as close as possible to the true private frequency $\truex$.
It is equivalent to maximize the maximum log-likelihood $L(\estx)$. 
\begin{align*}
    L(\estx) = \ln \Pr{\rdvv | \estx}
\end{align*}

In this section, we derive the EM algorithm presented in Algorithm~\ref{algo:EM}. 
Note that existing work also uses EM to post-process results of \CFO ~\cite{popets:FantiPE16,tifs:RenYYYYMY18}, but our proposed EM algorithm takes aggregated results and is thus more efficient.

Let $\estx\iter{t}$ be the estimation of $\x$ after $t$ iterations of EM algorithm.
Then the difference of the log-likelihood between $\estx\iter{t}$ and $\x$ is
\begin{align*}
    L(\truex) - L(\estx\iter{t}) = \ln \Pr{\rdvv|\estx} - \ln \Pr{\rdvv|\estx\iter{t}}.
\end{align*}
With Bayesian rule, the difference can be written as: 
\begin{align*}
    &\quad L(\truex) - L(\estx\iter{t}) \\
    &= \ln \sum_{\vv} \Pr{\rdvv|\vv, \truex}\Pr{\vv|\truex} - \ln \Pr{\rdvv|\estx\iter{t}}\\
    &= \ln \sum_{\vv} \Pr{\rdvv|\vv, \truex}\Pr{\vv|\truex}\frac{\Pr{\vv|\rdvv, \estx\iter{t}}}{\Pr{\vv|\rdvv, \estx\iter{t}}} - \ln \Pr{\rdvv|\estx\iter{t}} \\
    &= \ln \sum_{\vv} \Pr{\vv|\rdvv, \estx\iter{t}}\frac{\Pr{\rdvv|\vv, \truex}\Pr{\vv|\truex}}{\Pr{\vv|\rdvv, \estx\iter{t}}} -\ln \Pr{\rdvv|\estx\iter{t}}\\
\end{align*}
By Jensen's inequality, 
\begin{align*}
    &\quad L(\truex) - L(\estx\iter{t}) \\
    &\geq \sum_{\vv} \Pr{\vv|\rdvv, \estx\iter{t}} \ln \left(\frac{\Pr{\rdvv|\vv, \truex}\Pr{\vv|\truex}}{\Pr{\vv|\rdvv, \estx\iter{t}}}\right) -\ln \Pr{\rdvv|\estx\iter{t}}\\
\end{align*}
With a few more steps,
\begin{align*}
    &\quad L(\truex) - L(\estx\iter{t}) \\
    &\geq \sum_{\vv} \Pr{\vv|\rdvv, \estx\iter{t}} \ln \left(\frac{\Pr{\rdvv|\vv, \truex}\Pr{\vv|\truex}}{\Pr{\vv|\rdvv, \estx\iter{t}}}\right) \\
    &\quad - \sum_{\vv} \Pr{\vv|\rdvv, \estx\iter{t}}\ln \Pr{\rdvv|\estx\iter{t}} \\
    &= \sum_{\vv} \Pr{\vv|\rdvv, \estx\iter{t}} \ln \left(\frac{\Pr{\rdvv|\vv, \truex}\Pr{\vv|\truex}}{\Pr{\vv|\rdvv, \estx\iter{t}}\Pr{\rdvv|\estx\iter{t}}}\right) \\
\end{align*}
Define  $l(\truex|\estx\iter{t})$ as the following:
\begin{align*}
  & l(\truex|\estx\iter{t}) = \\
  & \quad L(\estx\iter{t}) + \sum_{\vv} \Pr{\vv|\rdvv, \estx\iter{t}} \ln \left(\frac{\Pr{\rdvv|\vv, \truex}\Pr{\vv|\truex}}{\Pr{\vv|\rdvv, \estx\iter{t}}\Pr{\rdvv|\estx\iter{t}}}\right) 
\end{align*}
Based on the previous results, it is obvious that $L(\truex) \geq l(\truex|\estx\iter{t})$.
In order to maximize the log-likelihood, we can maximize the $l(\truex|\estx\iter{t})$ directly.
\begin{align*}
    \estx\iter{t+1} = \argmax_{\estx}\{l(\estx|\estx\iter{t})\}
\end{align*}
Following the our goal, we have 
\begin{align*}
    &\quad \text{maximize } l(\estx|\estx\iter{t})\\
    & \Leftrightarrow \text{maximize } \sum_{\vv} \Pr{\vv|\rdvv, \estx\iter{t}} \ln \left(\Pr{\rdvv|\vv, \estx}\Pr{\vv|\estx} \right)\\
    &\Leftrightarrow \text{maximize } \sum_{\vv} \Pr{\vv|\rdvv, \estx\iter{t}} \ln \Pr{\rdvv,\vv| \estx}\\
    &\Leftrightarrow \text{maximize } \E_{\vv|\rdvv,\estx\iter{t}}\left[\ln \Pr{\rdvv,\vv| \estx}\right]  .
\end{align*}
Next, notice that $\E_{\vv|\rdvv,\estx\iter{t}}[\ln \Pr{\rdvv,\vv| \estx}]$ can be rewritten in the following form as user samples are independent: 
\begin{align*}
    &\quad \E_{\vv|\rdvv,\estx\iter{t}}\left[\ln \Pr{\rdvv,\vv| \estx}\right] \\
    &= \E_{\vv|\rdvv,\estx\iter{t}} \ln \prod_{k=1}^{n}P(\rdv_k,v_k| \estx)\\
    &= \sum_{k=1}^{n}\E_{\vv|\rdvv,\estx\iter{t}}[\ln \Pr{\rdv_k,v_k | \estx}] \\
\end{align*}
If we consider the probability of all possible output values, then we can further derive the above equation into 
\begin{align*}
    &\quad \E_{\vv|\rdvv,\estx\iter{t}}[\ln \Pr{\rdvv,\vv| \estx}] \\
    &= \sum_{k=1}^{n} \sum_{i=1}^{d} \Pr{v_k \in B_i|\rdv_k, \estx\iter{t}}\ln \Pr{\rdv_k, v_k\in B_i | \estx}\\
    &= \sum_{k=1}^{n} \sum_{i=1}^{d} \\
    &\frac{\Pr{ \rdv_k|v_k \in B_i, \estx\iter{t}} \Pr{v_k \in B_i | \estx\iter{t}}}{\Pr{\rdv_k |\estx\iter{t}}}\ln \Pr{\rdv_k| v_k \in B_i, \estx}\estx_i\\
    &= \sum_{k=1}^{n} \sum_{i=1}^{d} \frac{\Pr{ \rdv_k|v_k \in B_i, \estx\iter{t}}\estx\iter{t}_i}{\Pr{\rdv_k |\estx\iter{t}}}\ln \Pr{\rdv_k| v_k \in B_i, \estx}\estx_i
\end{align*}
where $v_k \in B_i$ means value $v_k$ falls in the $i^{th}$ bucket of the \emph{input domain}.
Because $\Pr{\rdv_k| v_k\in B_i, \estx}$ is decided by the \SW reporting, we can ignore it and focus only on $\estx_i$.
\begin{align*}
  &\quad \argmax_{\estx}\{\E_{\vv|\rdvv,\estx\iter{t}}[\ln \Pr{\rdvv,\vv| \estx}] \} \\
&= \argmax_{\estx}\{  \sum_{i=1}^{d} \ln \estx_i  \sum_{k=1}^{n}\frac{\Pr{ \rdv_k|v_k\in B_i, \estx\iter{t}}\estx\iter{t}_i}{\Pr{\rdv_k |\estx\iter{t}}}\} \\
\end{align*}
Since we only need to estimate the frequencies, we can combine the randomized reports together to simplify the computation:
\begin{align*}
    &\quad \sum_{k=1}^{n}\frac{\Pr{ \rdv_k|v_k \in B_i, \estx\iter{t}}\estx\iter{t}_i}{\Pr{\rdv_k |\estx\iter{t}}} \\
    &= \sum_{k=1}^{n} \sum_{j \in [\tilde{d}]} \mathbf{1}[\rdv_k \in \tilde{B}_j] \frac{\Pr{ \rdv_k\in \tilde{B}_j|v_k\in B_i, \estx\iter{t}}\estx\iter{t}_i}{\Pr{\rdv_k \in \tilde{B}_j |\estx\iter{t}}} \\
\end{align*}
The indicator function $\mathbf{1}[\rdv_k\in \tilde{B}_j]$ equals $1$ if $\rdv_k$ falls in the $j^{th}$ bucket in the \emph{output domain}, and $0$ otherwise.
We can swap the inner and outer summations,
\begin{align*}
    &\quad \sum_{k=1}^{n}\frac{\Pr{ \rdv_k|v_k \in B_i, \estx\iter{t}}\estx\iter{t}_i}{\Pr{\rdv_k |\estx\iter{t}}} \\
    & =  \sum_{j \in [\tilde{d}]} \sum_{k=1}^{n} \mathbf{1}[\rdv_k \in \tilde{B}_j] \frac{\Pr{ \rdv_k\in \tilde{B}_j|v_k\in B_i, \estx\iter{t}}\estx\iter{t}_i}{\Pr{\rdv_k \in \tilde{B}_j |\estx\iter{t}}} \\
    & = \sum_{j \in [\tilde{d}]} n_j   \frac{\Pr{ \rdv\in \tilde{B}_j|v\in B_i, \estx\iter{t}}\estx\iter{t}_i}{\Pr{\rdv \in \tilde{B}_j |\estx\iter{t}}} \\
    & = \estx\iter{t}_i  \sum_{j \in [\tilde{d}]} n_j   \frac{\Pr{ \rdv\in \tilde{B}_j|v\in B_i, \estx\iter{t}}}{\Pr{\rdv \in \tilde{B}_j |\estx\iter{t}}} \\
    & = \estx\iter{t}_i \sum_{j \in [\tilde{d}]} n_j  \frac{\Pr{\rdv\in \tilde{B}_j|v\in B_i, \estx\iter{t}}}{\sum_{r=1}^{d} \Pr{\rdv\in \tilde{B}_j|v \in B_{r}, \estx\iter{t}}\estx\iter{t}_r} .\\
\end{align*}
Here $n_j$ means the count of reports that have value in $\tilde{B}_j$.
This is what we define and compute in the E-step in Algorithm~\ref{algo:EM}:
$$P_i = \estx\iter{t}_i \sum_{j \in [\tilde{d}]} n_j  \frac{\Pr{\rdv\in \tilde{B}_j|v\in B_i, \estx\iter{t}}}{\sum_{r=1}^{d} \Pr{\rdv\in \tilde{B}_j|v \in B_{r}, \estx\iter{t}}\estx\iter{t}_r}$$

Now, the maximization  problem becomes
\begin{align*}
  &\quad \argmax_{\estx}\{\E_{\vv|\rdvv,\estx\iter{t}}[\ln \Pr{\rdvv,\vv| \estx}] \} \\
  &= \argmax_{\estx}\{\sum_{i=1}^{d}P_i\ln \estx_i \} \\
  &=\argmax_{\estx}\{\sum_{i=1}^{d-1}P_i\ln \estx_i + P_d\ln(1 - \sum_{i=1}^{d-1}\estx_i)\}
\end{align*}
The last equation holds because of the consistency requirement $\sum_{i=1}^{d}\estx_i = 1$. 
Let $f(\estx) = \sum_{i=1}^{d-1} P_i \ln\estx_i  +  P_m \ln (1-\sum_{i=1}^{m-1}\estx_i)$, then the derivative of $f(\estx)$ is 
\begin{align*}
    \frac{\partial f(\estx)}{\partial \estx_i} &= \frac{P_i}{\estx_i} - \frac{P_d}{1-\sum_{j=1}^{d-1}\estx_j}
\end{align*}
Let $\frac{\partial f(\estx)}{\partial \estx_i} = 0$, we have
\begin{align*}
    &\estx_i = \frac{P_i(1-\sum_{j=1}^{d-1}\estx_j)}{P_d}\\
    \implies &\sum_{i=1}^{d-1}\estx_i = \frac{\sum_{i=1}^{d-1} P_i(1-\sum_{j=1}^{d-1}\estx_j)}{P_d}\\
    \implies & \sum_{j=1}^{d-1}\estx_j = \frac{\sum_{i=1}^d P_i - P_d}{\sum_{i=1}^d P_i }\\
    \implies &\sum_{i=1}^{d-1}\estx_i = 1 - \frac{P_d}{\sum_{j=1}^{d}P_j}\\
    \implies &\estx_i =  \frac{P_i}{\sum_{j=1}^{d}P_j}
\end{align*}
Notice that this is what we compute in the M-step, and we finish the derivation of EM algorithm.

 \section{Derivation of HH-ADMM}
\label{app:derivation_hh_admm}
As mentioned in the main text, the optimization problem is the following:
\begin{align}
    \text{minimize} & \quad \frac{1}{2}\|\estx - \tilde{\x}\|_2^2 \\ 
    \text{subject to}  &\quad \mathbf{A} \estx = 0, \quad \estx \succcurlyeq 0, \quad \estx_0 = 1 \nonumber
\end{align}
It can be transformed into:
\begin{align}
    \text{minimize} &\quad \frac{1}{2}\|\y\|_2^2 + \mathbb{I}_{\mathcal{C}}(\z) + \mathbb{I}_{\mathcal{N}_+}(\w) \nonumber\\
    \text{subject to } & \quad \estx - \tilde{\x} - \y = 0 \nonumber\\
    & \estx - \z = 0 \nonumber\\
    & \estx - \w = 0,\nonumber
\end{align}
where $\mathbb{I}_{S}(\x)$ is an indicator function that equals $0$ if $\x\in S$, and $\infty$ otherwise.  Here $\mathcal{C} = \{\x | \mathbf{A}\x = 0 \}$ and $\mathcal{N}_+$ is the set of vectors that are non-negative and normalized (each level sum up to $1$).

With the scaled dual variables, the optimization problem can be transform into the following dual augmented problem
\begin{align*}
    \max_{\vmu, \vnu, \veta} \min_{\estx, \y, \z, \w}& \frac{1}{2}\|\y\|_2^2 + \mathbb{I}_{\mathcal{C}}(\z) + \mathbb{I}_{\mathcal{N}_+}(\w) \\
    & + \frac{\rho}{2}\|\estx - \tilde{\x} - \y + \vmu\|_2^2  + \frac{\rho}{2}\|\estx - \z +\vnu\|_2^2  \\
    & + \frac{\rho}{2}\|\estx - \w + \veta\|_2^2
\end{align*}
where $\rho$ is the penalty parameter in the dual augmented problem, and $\vmu, \vnu, \veta$ are dual variables.
We set $\rho=1$ in our algorithm.

When we use the ADMM algorithm to solve the dual augmented problem, we can update each variable iteratively,
\begin{align}
    &\y = \argmin_{\y}\left(\frac{1}{2}\|\y\|_2^2 + \frac{\rho}{2}\|\estx - \tilde{\x} - \y + \vmu\|_2^2 \right) \label{eq:update_y}\\
    &\z = \argmin_{z}\left(\mathbb{I}_{\mathcal{C}}(\z) + \frac{\rho}{2}\|\estx - \z +\vnu\|_2^2\right) \label{eq:update_z}\\
    &\w = \argmin_{\w}\left(\mathbb{I}_{\mathcal{N}_+}(\w) + \frac{\rho}{2}\|\estx - \w + \veta\|_2^2\right)\label{eq:update_w}\\
    &\x = \argmin_{\x}\left(\frac{1}{2}\|\estx - \tilde{\x} - \y + \vmu\|_2^2  + \frac{1}{2}\|\estx - \z +\vnu\|_2^2  \right. \nonumber\\
    &\left. \quad + \frac{1}{2}\|\estx - \w + \veta\|_2^2\right) \label{eq:update_x}\\
    &\vmu = \vmu+ \estx - \tilde{\x} - \y  \\
    & \vnu = \vnu + \estx - \z \\
    & \veta = \veta + \estx - \w 
\end{align}

Previous work~\cite{combettes2011proximal, kdd:lee2015maximum} introduced how the do the update mentioned above.
The sub-problem~\eqref{eq:update_y} and~\eqref{eq:update_x} are essentially least square optimization problems.
The other two sub-problem~\eqref{eq:update_z} and \eqref{eq:update_w} can be solved by 
\begin{align}
    \z = \Pi_{\mathcal{C}}(\estx +\vnu) \\
    \w = \Pi_{\mathcal{N}_+}(\estx + \veta)
\end{align}
where $\Pi_{S}$ is the operation that project the point to the closest point in set $S$.
For $\Pi_{\mathcal{C}}$, an efficient algorithm is given in~\cite{pvldb:HayRMS10}; and Norm-Sub is an efficient algorithm for $\Pi_{\mathcal{N}_+}$~\cite{arXiv:Wang19LLLS}.
Therefore, Algorithm~\ref{algo:HH-ADMM} is efficient and can converge.

\begin{algorithm}[H]
\begin{algorithmic}
\STATE \textbf{Input:} $ \tilde{\x}$
\STATE \textbf{Output:} $\estx$
\STATE $\estx \gets \tilde{\x}, t \gets 0$
\STATE Init $\y, \z, \w, \vmu, \vnu, \veta$ to be vectors of $0$
\WHILE{not converge} 
    \STATE $\y \gets  \frac{1}{2}(\estx - \tilde{\x} + \vmu) $
    \STATE $\z \gets \Pi_{\mathcal{C}}(\estx + \vnu)$
    \STATE $\w \gets \Pi_{\mathcal{N}_+}(\estx + \veta)$
    \STATE $\estx \gets  \frac{1}{3}\left((\y + \tilde{\x} -  \vmu) +  (\z - \vnu) + (\w - \veta) \right)$
    \STATE \COMMENT{update dual variables}
    \STATE $\vmu \gets \vmu + \estx - \tilde{\x} - \y$
    \STATE $\vnu \gets \vnu +  \estx - \z$
    \STATE $\veta \gets \veta + \estx - \w$
\ENDWHILE
\STATE Return $\estx$
\end{algorithmic}
\caption{HH-ADMM}
\label{algo:HH-ADMM}
\end{algorithm}

\end{document}